\documentclass[fullpage,11pt]{article}

\usepackage[sc]{mathpazo}
\usepackage{fullpage,amsthm}
\usepackage{graphicx} 
\usepackage{array} 
\usepackage{amsmath, amssymb, amsfonts, verbatim}
\usepackage{hyphenat,epsfig,subfigure,multirow}
\usepackage[usenames,dvipsnames]{xcolor}

\usepackage{tcolorbox}
\usepackage{balance}

\definecolor{DarkRed}{rgb}{0.5,0.1,0.1}
\definecolor{DarkBlue}{rgb}{0.1,0.1,0.5}

\usepackage{hyperref}
\hypersetup{
colorlinks=true,
pdfnewwindow=true,
citecolor=ForestGreen,
linkcolor=DarkRed,
filecolor=DarkRed,
urlcolor=DarkBlue
}

\usepackage{bm}
\usepackage{url}
\usepackage{xspace} 
\usepackage[mathscr]{euscript}
\usepackage{algorithm}
\usepackage[noend]{algpseudocode}
\makeatletter
\def\BState{\State\hskip-\ALG@thistlm}
\makeatother

\usepackage{cite}
\usepackage{enumerate}

\usepackage[a4paper,margin=1in]{geometry}

\newtheorem{theorem}{Theorem}
\newtheorem{lemma}{Lemma}[section]
\newtheorem{proposition}[lemma]{Proposition}

\newtheorem{claim}[lemma]{Claim}

\newtheorem{definition}{Definition}
\newtheorem{problem}{Problem}
\newtheorem{remark}[lemma]{Remark}
\newtheorem*{claim*}{Claim}

\allowdisplaybreaks

\renewcommand{\qed}{\nobreak \ifvmode \relax \else
      \ifdim\lastskip<1.5em \hskip-\lastskip
      \hskip1.5em plus0em minus0.5em \fi \nobreak
      \vrule height0.75em width0.5em depth0.25em\fi}

\usepackage[T1]{fontenc}
\usepackage[utf8]{inputenc}

\newcommand{\ourinfo}[1]{Department of Computer and Information Science, University of Pennsylvania. Supported in part by National Science
  Foundation grants CCF-1116961, CCF-1552909, and IIS-1447470 and an Adobe research award.
\newline\noindent Email: \texttt{#1}.}

\newcommand{\Index}{\textnormal{\textsf{Index}}\xspace}
\newcommand{\Indexnk}{\textnormal{\ensuremath{\textsf{Index}^n_k}}\xspace}
\newcommand{\Trap}{\textnormal{\textsf{Trap}}\xspace}

\newcommand{\distI}{\ensuremath{\dist_{\Index}}}
\newcommand{\distT}{\ensuremath{\dist_{\trap}}}

\newcommand{\distdet}{\textnormal{\ensuremath{\dist_{\textsf{det}}}}}
\newcommand{\distsi}{\textnormal{\ensuremath{\dist_{\textsf{SI}}}}}

\newcommand{\bS}{\ensuremath{\bm{S}}}

\newcommand{\bP}{\ensuremath{\bm{P}}}
\newcommand{\bPP}{\ensuremath{\bm{\mathcal{P}}}}
\newcommand{\bI}{\ensuremath{\bm{I}}}
\newcommand{\bT}{\ensuremath{\bm{\theta}}}

\newcommand{\bX}{\ensuremath{\bm{X}}}

\newcommand{\bA}{\ensuremath{\bm{A}}}
\newcommand{\bB}{\ensuremath{\bm{B}}}
\newcommand{\bC}{\ensuremath{\bm{C}}}
\newcommand{\bD}{\ensuremath{\bm{D}}}

\newcommand{\supp}[1]{\ensuremath{\textsc{supp}(#1)}}

\newcommand{\prot}{\ensuremath{\Pi}}
\newcommand{\bM}{\ensuremath{\bm{M}}}
\renewcommand{\SS}{\ensuremath{\mathcal{S}}}

\newcommand{\alg}{\ensuremath{\mathcal{A}}\xspace}

\newcommand{\prottrap}{\ensuremath{\Pi}_{\trap}}
\newcommand{\protindex}{\ensuremath{\Pi}_{\Index}}
\newcommand{\protext}{\ensuremath{\Pi}_{\textsf{Ext}}}
\newcommand{\protsc}{\ensuremath{\Pi}_{\textsf{SC}}}
\newcommand{\protsi}{\ensuremath{\Pi}_{\textsf{SI}}}

\newcommand{\bSS}{\ensuremath{\bm{\SS}}}
\newcommand{\algline}{
  \rule{0.5\linewidth}{.1pt}\hspace{\fill}%
  \par\nointerlineskip \vspace{.1pt}
}

\newcommand{\bE}{\ensuremath{\bm{E}}}
\newcommand{\bR}{\ensuremath{\bm{R}}}

\newcommand{\trap}{\textnormal{\textsf{Trap}}\xspace}
\newcommand{\errs}{\textnormal{~errs}}
\newcommand{\Yes}{\textnormal{\textsf{YES}}\xspace}
\newcommand{\No}{\textnormal{\textsf{NO}}\xspace}
\newcommand{\distIY}{\distI^{\textsf{Y}}}
\newcommand{\distIN}{\distI^{\textsf{N}}}

\newcommand{\Paren}[1]{\Big(#1\Big)}
\newcommand{\Bracket}[1]{\Big[#1\Big]}

\newcommand{\ICost}[2]{\ensuremath{\textnormal{\textsf{ICost}}_{#2}(#1)}\xspace}
\newcommand{\IC}[3]{\ensuremath{\textnormal{\textsf{IC}}_{#2}^{#3}(#1)}\xspace}
\newcommand{\CC}[3]{\ensuremath{\textnormal{\textsf{CC}}_{#2}^{#3}(#1)}\xspace}

\newcommand{\SSt}{\ensuremath{\mathcal{S}}}

\newcommand{\tester}{\textnormal{\textit{Tester}}\xspace}

\newcommand{\accept}{\textnormal{\textsf{ACCEPT}}\xspace}
\newcommand{\reject}{\textnormal{\textsf{REJECT}}\xspace}

\newcommand{\bprotindex}{\ensuremath{\bm{\protindex}}}
\newcommand{\bprottrap}{\ensuremath{\bm{\prottrap}}}

\newcommand{\bprotsc}{\ensuremath{\bm{\protsc}}}
\newcommand{\bprot}{\ensuremath{\bm{\prot}}}

\newcommand{\dista}{\ensuremath{\dist_{\textsf{apx}}}}
\newcommand{\diste}{\ensuremath{\dist_{\textsf{est}}}}
\newcommand{\dister}{\ensuremath{\dist_{\textsf{ext}}}}
\newcommand{\distnew}{\ensuremath{\dist_{\textsf{new}}}}

\newenvironment{tbox}{\begin{tcolorbox}[
		enlarge top by=5pt,
		enlarge bottom by=5pt,
		 boxsep=0pt,
                  left=4pt,
                  right=4pt,
                  top=10pt,
                  arc=0pt,
                  boxrule=1pt,toprule=1pt,
                  colback=white
                  ]
	}
{\end{tcolorbox}}

\renewenvironment{proof}[1][Proof]{\paragraph{#1.}\xspace}{\hfill$\qed$}

\newcommand{\ssection}[1]{\noindent\textbf{#1}}
\newcommand{\toShrink}{-.25cm}
\newcommand{\toShrinkEnu}{-.2cm}
\newcommand{\toShrinkEqn}{-.45cm}

\newcommand{\toShrinkTextbox}{0cm}
\newcommand{\cTextbox}[2]{\vspace{\toShrinkTextbox}\textbox{#1}{\vspace{\toShrinkTextbox} #2 \vspace{\toShrinkTextbox}}}


\renewcommand{\bar}[1]{\overline{#1}}

\newcommand{\etal}{{\it et al.\,}}
\newcommand{\eps}{\epsilon}

\newcommand{\event}{\mathcal{E}}

\newcommand{\card}[1]{\left\vert{#1}\right\vert}

\newcommand{\opt}{\ensuremath{{opt}}\xspace}

\newcommand{\ceil}[1]{{\left\lceil{#1}\right\rceil}}

\newcommand{\set}[1]{\ensuremath{\left\{ #1 \right\}}}
\newcommand{\seq}[1]{{\left< #1 \right>}}

\newcommand{\poly}{\mbox{\rm poly}}

\newcommand{\REM}[1]{}
\newcommand{\Ot}{\widetilde{O}}
\newcommand{\Omgt}{\widetilde{\Omega}}

\newcommand{\union}{\ensuremath{\cup}}
\newcommand{\pair}[2]{\langle {#1},{#2} \rangle}


\newcommand{\textbox}[2]{
{
\begin{tbox}
\textbf{#1}
{#2}
\end{tbox}
}
}



\setlength{\marginparwidth}{2cm}

\newcommand{\eat}[1]{}


\DeclareMathOperator*{\Exp}{\ensuremath{\textnormal{E}}}
\DeclareMathOperator*{\Prob}{\ensuremath{\textnormal{Pr}}}
\renewcommand{\Pr}{\Prob}
\newcommand{\Ex}{\Exp}

\newcommand{\norm}[1]{\ensuremath{\|#1\|}}

\newcommand{\PR}[1]{\ensuremath{\Pr\left(#1\right)}\xspace}

\newcommand{\paren}[1]{\ensuremath{\left(#1\right)}\xspace}

\newcommand{\dist}{\ensuremath{\mathcal{D}}}

\newcommand{\SA}{\ensuremath{\SS}}
\newcommand{\Salpha}{\ensuremath{\widehat{\mathcal{S}}}}

\newcommand{\FC}{\ensuremath{\mathcal{F}}}

\newcommand{\setCoverEst}{\textnormal{\ensuremath{\textsf{SetCover}_{\textsf{est}}}}\xspace}
\newcommand{\setCoverApx}{\textnormal{\ensuremath{\textsf{SetCover}_{\textsf{apx}}}}\xspace}

\newcommand{\sparseIndexingnk}{\textnormal{\ensuremath{\textsf{SparseIndexing}^N_k}}\xspace}
\newcommand{\istar}{\ensuremath{{i^*}}\xspace}
\newcommand{\estar}{\ensuremath{{e^*}}\xspace}
\newcommand{\SAistar}{\ensuremath{{\SA^{-\istar}}}\xspace}

\newcommand{\hevent}{\widehat{\event}}

\newcommand{\CILP}{\ensuremath{\textsf{ILP}_{\textsf{Cover}}}}
\newcommand{\amax}{\ensuremath{a_{\textsf{max}}}}
\newcommand{\bmax}{\ensuremath{b_{\textsf{max}}}}
\newcommand{\cmax}{\ensuremath{c_{\textsf{max}}}}

\newcommand{\aT}{\ensuremath{\widetilde{a}}}
\newcommand{\cT}{\ensuremath{\widetilde{c}}}

\newcommand{\lOne}[1]{\ensuremath{\left\| #1\right\|_1}}

\newcommand{\Ins}{\ensuremath{\mathcal{I}}}

\newcommand{\RIns}{\ensuremath{\mathcal{I}_{\textsf{R}}}}

\newcommand{\testerILP}{\textnormal{\ensuremath{\tester}}}

\newcommand{\cost}{\textnormal{\textit{Cost}}}

\newcommand{\bres}{\textnormal{\ensuremath{b_{\textsf{res}}}}}

\newcommand{\AT}{\widetilde{A}}
\newcommand{\AH}{\ensuremath{\widehat{A}}}

\newcommand{\rou}{\rho} 

\title{Tight Bounds for Single-Pass Streaming Complexity of the Set Cover Problem}
\author{Sepehr Assadi\thanks{\ourinfo{\{sassadi,sanjeev,yangli2\}@cis.upenn.edu}}\and
Sanjeev Khanna\footnotemark[1] \and
Yang Li\footnotemark[1]  
}

\date{}

\begin{document}
\maketitle

\thispagestyle{empty}
\begin{abstract}
  We resolve the space complexity of \emph{single-pass} streaming algorithms for
  approximating the classic set cover problem. For finding an $\alpha$-approximate set
  cover (for any $\alpha= o(\sqrt{n})$) using a single-pass streaming algorithm, we show that
  $\Theta(mn/\alpha)$ space is both sufficient and necessary (up to an $O(\log{n})$ factor); here $m$
  denotes number of the sets and $n$ denotes size of the universe.  This
  provides a strong negative answer to the open question posed by
  Indyk~\etal~\cite{IndykMV15} regarding the possibility of having a single-pass algorithm
  with a small approximation factor that uses sub-linear space.
	
  We further study the problem of \emph{estimating} the \emph{size} of a minimum set cover
  (as opposed to finding the actual sets), and establish that an additional factor of
  $\alpha$ saving in the space is achievable in this case and that 
 this is the best possible. In other words, we show that
  $\Theta(mn/\alpha^2)$ space is both sufficient and necessary (up to logarithmic factors)
  for estimating the size of a minimum set cover to within a factor of $\alpha$. Our
algorithm in fact works for the more general problem of estimating the optimal value of a \emph{covering integer program}.
 On the other hand, our lower bound holds even for set cover instances where the sets are presented in a \emph{random order}.
\end{abstract}

\clearpage
\setcounter{page}{1}

\section{Introduction}\label{sec:intro}

The \emph{set cover} problem is a fundamental optimization problem with many applications
in computer science and related disciplines. The input is a
universe $[n]$ and a collection of $m$ subsets of $[n]$, $\SSt = \seq{S_1,\ldots,S_m}$,
and the goal is to find a subset of $\SSt$ with the \emph{smallest} cardinality that
\emph{covers} $[n]$, i.e., whose union is $[n]$; we call such a collection of sets a
\emph{minimum set cover} and throughout the paper denote its cardinality by $\opt :=
\opt(\SSt)$.

The set cover problem can be formulated in the well-established streaming
model~\cite{AlonMS96,Muth05}, whereby the sets in $\SS$ are presented one by one in a stream and the goal is to solve the set cover
problem using a \emph{space-efficient} algorithm. The streaming setting for the set
cover problem has been studied in several recent work, including~\cite{SahaG09,EmekR14,DemaineIMV14,IndykMV15,ChakrabartiW15}.
We refer the interested reader to these
references for many applications of the set cover problem in the streaming model.  In this
paper, we focus on algorithms that make only \emph{one pass} over the stream (i.e.,
single-pass streaming algorithms), and our goal is to settle the space complexity of
single-pass streaming algorithms that \emph{approximate} the set cover problem.

Two versions of the set cover problem are considered in this paper: $(i)$ computing a
minimum set cover, and $(ii)$ computing the size of a minimum set cover.
Formally,

\begin{definition}[$\alpha$-approximation]
  An algorithm $\alg$ is said to \emph{$\alpha$-approximate} the set cover problem iff on
  every input instance $\SS$, $\alg$ outputs a collection of (the indices of) at most
  $\alpha\cdot \opt$ sets that covers $[n]$, along with a \emph{certificate of covering}
  which, for each element $e \in [n]$, specifies the set used for covering $e$.  If $\alg$
  is a randomized algorithm, we require that the certificate corresponds to a valid set
  cover w.p.\footnote{Throughout, we use \emph{w.p.} and \emph{w.h.p.} to abbreviate
    ``with probability'' and ``with high probability'', respectively.}  at least $2/3$.
\end{definition}

We remark that the requirement of returning a
certificate of covering is standard in the literature (see, e.g.,~\cite{EmekR14,ChakrabartiW15}).

\begin{definition}[$\alpha$-estimation]
  An algorithm $\alg$ is said to \emph{$\alpha$-estimate} the set cover problem iff on
  every input instance $\SS$, $\alg$ outputs an estimate for the cardinality of a minimum
  set cover in the range $[\opt, \alpha \cdot \opt]$.  If $\alg$ is a randomized
  algorithm, we require that:
  \begin{align*}
	\Pr\Paren{\alg(\SS) \in [\opt, \alpha \cdot \opt]} \geq 2/3
  \end{align*}
\end{definition} 

\subsection{Our Results}\label{sec:results}

We resolve the space complexities of both versions of the set cover problem. Specifically,
we show that for any $\alpha = o(\sqrt{n}/\log{n})$ and any $m = \poly(n)$,

\begin{itemize}
\item There is a \emph{deterministic} single-pass streaming algorithm that
  \emph{$\alpha$-approximates} the set cover problem using space
  $\Ot(mn/\alpha)$ bits and moreover, any single-pass streaming algorithm (possibly
  \emph{randomized}) that $\alpha$-approximates the set cover problem must use
  $\Omega(mn/\alpha)$ bits of space.
\item There is a \emph{randomized} single-pass streaming algorithm that
  \emph{$\alpha$-estimates} the set cover problem using space $\Ot(mn/\alpha^2)$
  bits and moreover, any
  single-pass streaming algorithm (possibly \emph{randomized}) that $\alpha$-estimates the
  set cover problem must use $\Omgt(mn/\alpha^2)$ bits of space.
\end{itemize}

We should point out right away that in this paper, we are \emph{not} concerned with
poly-time computability, though, our algorithms for set cover can be made
computationally efficient for any $\alpha \geq \log{n}$ by allowing an extra $\log{n}$
factor in the space requirement\footnote{Set cover admits a classic $\log{n}$-approximation algorithm~\cite{Johnson74a,Slavik97}, and unless \textsf{P = NP}, there is no polynomial time
$\alpha$-approximation for the set cover problem for $\alpha < (1-\eps)\log{n}$ (for any
constant $\eps > 0$)~\cite{DinurS14,Feige98,LundY94}.}.

We establish our upper bound result for $\alpha$-estimation for a much more general
problem: estimating the optimal value of a \emph{covering integer linear program} (see
Section~\ref{sec:general} for a formal definition).  Moreover, the space lower bound for
$\alpha$-estimation (for the original set cover problem) holds even if the sets are
presented in a \emph{random order}.  We now describe each of these two sets of results in
more details.

\paragraph{Approximating Set Cover.}  There is a very simple deterministic 
$\alpha$-approximation algorithm for the set cover problem using space $\Ot(mn/\alpha)$ bits which we mention in Section~\ref{sec:techniques} for completeness. 
Perhaps surprisingly, we establish that this simple algorithm is essentially the best possible; any $\alpha$-approximation algorithm for the set cover 
problem requires $\Omgt(mn/\alpha)$ bits of space (see Theorem~\ref{thm:find-lower} for a formal statement). 

Prior to our work, the best known lower bounds for single-pass streams ruled out
$(3/2-\eps)$-approximation using $o(mn)$ space~\cite{IndykMV15} (see also~\cite{HarpeledIMV16}),
$o(\sqrt{n})$-approximation in $o(m)$ space~\cite{EmekR14,ChakrabartiW15}, and
$O(1)$-approximation in $o(mn)$ space~\cite{DemaineIMV14} (only for \emph{deterministic}
algorithms); see Section~\ref{sec:related} for more detail on different existing lower
bounds.  Note that these lower bound results leave open the possibility
 of a single-pass randomized $3/2$-approximation or even a deterministic $O(\log{n})$-approximation algorithm for the
 set cover problem using only $\Ot(m)$ space.  Our
result on the other hand, rules out the possibility of having any \emph{non-trivial}
trade-off between the approximation factor and the space requirement, answering an open
question raised by Indyk \etal~\cite{IndykMV15} in the strongest sense possible.

We should also point out that the bound of $\alpha = o(\sqrt{n}/\log{n})$ in our lower
bound is tight up to an $O(\log{n})$ factor since an $O(\sqrt{n})$-approximation is known
to be achievable in $\Ot(n)$ space (essentially independent of $m$ for $m =
\poly(n)$)~\cite{EmekR14,ChakrabartiW15}.

\paragraph{Estimating Set Cover Size.} 
We present an $\Ot(mn/\alpha^2)$ space algorithm for $\alpha$-estimating the set cover
problem, and in general any covering integer program
(see Theorem~\ref{thm:general} for a formal statement). Our upper bound suggests that if
one is only interested in $\alpha$-estimating the size of a minimum set cover (instead of
knowing the actual sets), then an additional $\alpha$ factor saving in the space (compare to the best
possible $\alpha$-approximation algorithm) is possible. To the best of our knowledge,
this is the first non-trivial \emph{gap} between the space complexity of
$\alpha$-approximation and $\alpha$-estimation for the set cover problem.

We further show that the space complexity of our $\Ot(mn/\alpha^2)$ space $\alpha$-estimation algorithm for the set cover problem is essentially
tight (up to logarithmic factors). In other words, any $\alpha$-estimation algorithm for set cover (possibly randomized) requires $\Omgt(mn/\alpha^2)$ space (see Theorem~\ref{thm:rand-lower} for a formal statement).

There are examples of classic optimization problems in the streaming literature for
which size estimation seems to be distinctly easier in the \emph{random arrival
  streams}\footnote{In random arrival streams, the input (in our case, the collection of
  sets) is randomly permuted before being presented to the algorithm} compare to the
\emph{adversarial streams} (see, e.g.,~\cite{KapralovKS14}). However, we show that this is
\emph{not} the case for the set cover problem, i.e., the lower bound of
$\Omgt(mn/\alpha^2)$ for $\alpha$-estimation continues to hold even for random arrival
streams.

We note in passing two other results also: $(i)$ our bounds for $\alpha$-approximation and
$\alpha$-estimation also prove \emph{tight} bounds on the \emph{one-way communication
  complexity} of the \emph{two-player} communication model of the set cover problem (see
Theorem~\ref{thm:two-find-adv} and Theorem~\ref{thm:two-adv}), previously studied
in~\cite{Nisan02,DemaineIMV14,ChakrabartiW15}; $(ii)$ the use of randomization in our
$\alpha$-estimation algorithm is inevitable: any \emph{deterministic} $\alpha$-estimation
algorithm for the set cover requires $\Omega(mn/\alpha)$ bits of space (see
Theorem~\ref{thm:det-lower}).

\subsection{Our Techniques}\label{sec:techniques}

\paragraph{Upper bounds.}  An $\alpha$-approximation using $\Ot(mn/\alpha)$ bits of space
can be simply achieved as follows. {Merge} (i.e., take the union of) every $\alpha$ sets in
the stream into a single set; at the end of the stream, solve the set cover problem over
the merged sets. To recover a certificate of covering, we also record for each element $e$
in each merged set, any set in the merge that covers $e$. It is an easy exercise to
verify that this algorithm indeed achieves an $\alpha$-approximation and can be
implemented in space $\Ot(mn/\alpha)$ bits.

Our $\Ot(mn/\alpha^2)$-space $\alpha$-estimation algorithm is more involved and in fact works for any covering integer program (henceforth, a \emph{covering ILP} for short). 
We follow the line of work in~\cite{DemaineIMV14} and~\cite{IndykMV15} by performing ``dimensionality reduction'' over
the sets (in our case, columns of the constraint matrix $A$) and storing their projection over a randomly sampled subset of the universe (here, constraints) during the
stream. However, the goal of our \emph{constraint sampling} approach is entirely different from the ones in~\cite{DemaineIMV14,IndykMV15}. The \emph{element sampling} approach of~\cite{DemaineIMV14,IndykMV15} aims to find a ``small'' cover of the sampled universe which also covers the vast majority of the elements in the original universe. 
This allows the algorithm to find a small set cover of the sampled universe in one pass
while reducing the number of remaining uncovered elements for the next pass; hence, applying this approach
repeatedly over \emph{multiple} passes on the input allows one to obtain a complete cover. 

On the other hand, the goal of our {constraint sampling} approach is to create a smaller
instance of set cover (in general, covering ILP) with the property that the minimum set
cover \emph{size} of the sampled instance is a ``proxy'' for the minimum set cover size of
the original instance. We crucially use the
fact that the algorithm does \emph{not} need to identify the actual cover and hence it can
estimate the size of the solution based on the optimum set cover size in the sampled
universe.

At the core of our approach is a simple yet very general lemma, referred to as the
\emph{constraint sampling lemma} (Lemma~\ref{lem:sub-sample-save-opt}) which may be of
independent interest.  Informally, this lemma states that for any covering ILP instance
$\Ins$, the optimal value of a sub-sampled instance $\RIns$, obtained by picking roughly
$1/\alpha$ fraction of the constraints uniformly at random from $\Ins$, is an $\alpha$ estimator of the optimum value of $\Ins$ whenever no constraint is ``too hard'' to satisfy.

Nevertheless, the constraint sampling is not enough for reducing the space to meet the desired $\Ot(mn/\alpha^2)$ bound (see Theorem~\ref{thm:general}). 
Hence, we combine it with a \emph{pruning}
step, similar to the ``set filtering'' step of~\cite{IndykMV15} for (unweighted) set cover
(see also ``GreedyPass'' algorithm of~\cite{ChakrabartiW15}) to sparsify the columns in
the input matrix $A$ before performing the sampling.  We point out that as the variables
in $\Ins$ can have different weights in the objective function (e.g. for weighted set
cover), our pruning step needs to be sensitive to the weights.


\paragraph{Lower bounds.}  As is  typical in the streaming literature, our lower
bounds are obtained by establishing \emph{communication complexity} lower bounds; in
particular, in the \emph{one-way two-player} communication model. To prove these bounds,
we use the \emph{information complexity} paradigm, which allows one to reduce the problem,
via a direct sum type argument, to multiple instances of a simpler problem. For our lower
bound for $\alpha$-estimation, this simpler problem turned out to be a variant of the
well-known \emph{Set Disjointness} problem. However, for the lower bound of
$\alpha$-approximation algorithms, we introduce and analyze a new intermediate problem, called the
\emph{Trap} problem.

The Trap problem is a \emph{non-boolean} problem defined as follows: Alice is given a set
$S$, Bob is given a set $E$ such that all elements of $E$ belong to $S$ except for a
\emph{special element} $e^*$, and the goal of the players is to ``trap'' this special
element, i.e., to find a \emph{small} subset of $E$ which contains $e^*$.  For our
purpose, Bob only needs to trap $e^*$ in a set of cardinality $\card{E}/{2}$. To prove a
lower bound for the Trap problem, we design a novel reduction from the well-known
\emph{Index} problem, which requires Alice and Bob to use the protocol for the Trap problem over non-legal inputs (i.e., the ones for which the Trap
problem is not well-defined), while ensuring that they are not being ``fooled'' by
the output of the Trap protocol over these inputs.

To prove our lower bound for $\alpha$-estimation in random arrival streams, we follow the
approach of~\cite{ChakrabartiCM08} in proving the communication complexity lower bounds
when the input data is \emph{randomly allocated} between the players (as opposed to
adversarial partitions). However, the distributions and the problem considered in this
paper are different from the ones in~\cite{ChakrabartiCM08}.

\subsection{Related Work}\label{sec:related}

Communication complexity of the set cover problem was first studied by
Nisan~\cite{Nisan02}.  Among other things, Nisan showed that the two-player communication complexity of $(\frac{1}{2}-\eps)\log{n}$-estimating the set cover
is $\Omega(m)$. In particular, this implies that any constant-pass streaming algorithm
that $(\frac{1}{2}-\eps)\log{n}$-estimates the set cover must use $\Omega(m)$
bits of space.

Saha and Getoor~\cite{SahaG09} initiated the study of set cover in the \emph{semi-streaming} model~\cite{FKMSZ05} where the 
sets are arriving in a stream and the algorithms are required to use $\Ot(n)$ space, and obtained an $O(\log{n})$-approximation via an $O(\log{n})$-pass algorithm that uses $O(n\log{n})$ space. 
A similar performance was also achieved by~\cite{CormodeKW10} in the context of ``disk friendly'' algorithms. As designed, the algorithm of~\cite{CormodeKW10} achieves $(1 + \beta\ln{n})$-approximation by making
$O(\log_\beta{n})$ passes over the stream using $O(n \log{n})$ space.

The \emph{single-pass semi-streaming} setting for set cover was initially and throughly
studied by Emek and Ros{\'{e}}n~\cite{EmekR14}. They provided an $O(\sqrt{n})$-approximation
using $\Ot(n)$ space (which also extends to the weighted set cover problem) and a lower
bound that states that no semi-streaming algorithm (i.e., an algorithm using only $\Ot(n)$ space)
that $O(n^{1/2-\eps})$-estimates set cover exists.  Recently, Chakrabarti and
Wirth~\cite{ChakrabartiW15} generalized the space bounds in~\cite{EmekR14} to
\emph{multi-pass} algorithms, providing an almost complete understanding of the pass/approximation tradeoff for semi-streaming algorithms.
In particular, they developed a deterministic $p$-pass $(p+1)\cdot n^{1/(p+1)}$-approximation
algorithm in $\Ot(n)$ space and prove that any $p$-pass $n^{1/(p+1)}/(c\cdot(p+1)^2)$-estimation algorithm requires $\Omega(n^{c}/p^3)$ space for
some constant $c > 1$ ($m$ in their ``hard instances'' is $\Theta(n^{cp})$). This, in particular,
implies that any \emph{single-pass} $o(\sqrt{n})$-estimation algorithm requires
$\Omega(m)$ space.

Demaine~\etal~\cite{DemaineIMV14} studied the trade-off between the number of passes, the
approximation ratio, and the space requirement of general streaming algorithms (i.e., not
necessarily semi-streaming) for the set cover problem and developed an algorithm that for
any $\delta = \Omega(1/\log{n})$, makes $O(4^{1/\delta})$ passes over the stream and
achieves an $O(4^{1/\delta} \rou)$-approximation using $\Ot(mn^\delta)$ space; here $\rou$
is the approximation factor of the \emph{off-line} algorithm for solving the set cover
problem.  The authors further showed that any \emph{constant-pass} \emph{deterministic}
$O(1)$-estimation algorithm for the set cover requires $\Omega(mn)$ space. Very recently,
Indyk~\etal~\cite{IndykMV15} (see also~\cite{HarpeledIMV16}) made a significant improvement on the trade-off achieved
by~\cite{DemaineIMV14}: they presented an algorithm that for any $\delta > 0$, makes
$O(1/\delta)$ passes over the stream and achieves an $O(\rou/\delta)$-approximation using
$\Ot(mn^\delta)$ space. The authors also established two lower bounds: for multi-pass
algorithms, any algorithm that computes an \emph{optimal} set cover solution while making
only $({1 \over 2\delta}-1)$ passes must use $\Omgt(mn^\delta)$ space. More relevant to
our paper, they also showed that any \emph{single-pass} streaming algorithm (possibly
randomized) that can distinguish between the instances with set cover size of $2$ and $3$
w.h.p., must use $\Omega(mn)$ bits.

\paragraph{Organization.} We introduce in Section~\ref{sec:prelim} some preliminaries needed for the rest of the paper.
In Section~\ref{sec:lower-find}, we present our ${\Omega}(mn/\alpha)$ space lower bound for computing an
$\alpha$-approximate set cover in a single-pass. In Section~\ref{sec:general}, we present
a single-pass streaming algorithm for estimating the optimal value of a covering integer
program and prove an $\Ot(mn/\alpha^2)$ upper bound on the space complexity of $\alpha$-estimating the (weighted) set cover problem. 
In Section~\ref{sec:rand-lower}, we present our ${\Omega}(mn/\alpha^2)$ space
lower bound for $\alpha$-estimating set cover in a single-pass. Finally, Section~\ref{sec:det-lower} contains
our $\Omega(mn/\alpha)$ space lower bound for \emph{deterministic} $\alpha$-estimation algorithms.

\section{Preliminaries}\label{sec:prelim}

\paragraph{Notation.} 
We use bold face letters to represent random variables. For any random variable $\bX$,
$\supp{\bX}$ denotes its support set. We define $\card{\bX}: = \log{\card{\supp{\bX}}}$.  For any $k$-dimensional tuple $X = (X_1,\ldots,X_k)$ and any $i \in [k]$, we
define $X^{<i}:=(X_1,\ldots,X_{i-1})$, and $X^{-i}:=(X_1,\ldots,X_{i-1},X_{i+1},\ldots,X_{k})$.  The notation ``$X\in_R U$'' indicates that $X$ is chosen uniformly at random from a set $U$.
Finally, we use upper case letters (e.g. $M$) to represent matrices and lower case letter (e.g. $v$) to represent vectors.

\paragraph{Concentration Bounds.} We use an extension of the Chernoff-Hoeffding bound for \emph{negatively correlated} random variables. Random variables $\bX_1,\ldots,\bX_n$ are 
negatively correlated if for every set $S \subseteq [n]$, $\Pr\paren{\wedge_{i \in S} \bX_i = 1} \leq \prod_{i \in S}\Pr\paren{\bX_i = 1}$. It was first proved in~\cite{PanconesiS97} that
the Chernoff-Hoeffding bound continues to hold for the case of random variables that satisfy this generalized version of negative correlation (see also~\cite{ImpagliazzoK10}).

\subsection{Tools from Information Theory}\label{sec:info}
We briefly review some basic concepts from information theory needed for establishing our lower bounds. For a
broader introduction to the field, we refer the reader to the excellent text by Cover and
Thomas~\cite{ITbook}.

In the following, we denote the \emph{Shannon Entropy} of a random variable $\bA$ by
$H(\bA)$ and the \emph{mutual information} of two random variables $\bA$ and $\bB$ by
$I(\bA;\bB) = H(\bA) - H(\bA \mid \bB) = H(\bB) - H(\bB \mid \bA)$. If the distribution
$\dist$ of the random variables is not clear from the context, we use $H_\dist(\bA)$
(resp. $I_{\dist}(\bA;\bB)$). We use $H_2$ to denote the binary entropy function where for any real number $0
< \delta < 1$, $H_2(\delta) = \delta\log{\frac{1}{\delta}} +
(1-\delta)\log{\frac{1}{1-\delta}}$.

We use the following basic properties of entropy and mutual information (proofs can be
found in~\cite{ITbook}, Chapter~2).
\begin{claim}\label{clm:it-facts}
  Let $\bA$, $\bB$, and $\bC$ be three random variables.
  \begin{enumerate}
  \item \label{part:uniform} $0 \leq H(\bA) \leq \card{\bA}$. $H(\bA) = \card{\bA}$
    iff $\bA$ is uniformly distributed over its support.
  \item \label{part:info-zero} $I(\bA ; \bB) \geq 0$. The equality holds iff $\bA$ and
    $\bB$ are \emph{independent}.
  \item \label{part:cond-reduce} \emph{Conditioning on a random variable reduces entropy}:
    $H(\bA \mid \bB,\bC) \leq H(\bA \mid \bB)$. The equality holds iff $\bA$ and $\bC$
    are independent conditioned on $\bB$.
  \item \label{part:sub-additivity} \emph{Subadditivity of entropy}: $H(\bA,\bB \mid \bC)
    \leq H(\bA \mid \bC) + H(\bB \mid \bC)$.
  \item \label{part:chain-rule} \emph{The chain rule for mutual information}: $I(\bA,\bB ;
    \bC) = I(\bA ; \bC) + I( \bB;\bC \mid \bA)$.
  \item \label{part:ent-event} For any event $E$ independent of $\bA$ and $\bB$, $H(\bA
    \mid \bB, E) = H(\bA \mid \bB)$.
  \item \label{part:info-event} For any event $E$ independent of $\bA,\bB$ and $\bC$,
    $I(\bA;\bB \mid \bC,E) = I(\bA;\bB \mid \bC)$.
  \end{enumerate}
\end{claim}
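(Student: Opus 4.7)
The plan is to reduce all seven items to two underlying tools: the non-negativity of Kullback--Leibler divergence (Gibbs' inequality, itself a consequence of Jensen's inequality applied to $\log$), and the basic chain rule identity $H(\bA,\bB)=H(\bA)+H(\bB\mid\bA)$. I would first establish items~1 and~2 from Jensen/Gibbs, then derive items~3--5 by algebraic manipulation of the chain rule, and finally handle items~6--7 by a direct distributional calculation that exploits the independence of the event $E$.

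For item~1, non-negativity is term-wise since $-p\log p\geq 0$ for $p\in[0,1]$. The upper bound $H(\bA)\leq\card{\bA}$ follows by applying Jensen to $\log$: $\sum_a p(a)\log(1/p(a))\leq \log\sum_a 1=\log\card{\supp{\bA}}$, with equality iff $p$ is constant on its support, i.e.\ $\bA$ is uniform. For item~2, write $I(\bA;\bB)=D(p_{\bA\bB}\,\|\,p_{\bA}\otimes p_{\bB})$ and invoke Gibbs' inequality; the equality case of Jensen pins down $p_{\bA\bB}=p_{\bA}\otimes p_{\bB}$, i.e.\ independence.

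Items~3--5 then fall out algebraically. For the chain rule (item~5), both sides of $I(\bA,\bB;\bC)=I(\bA;\bC)+I(\bB;\bC\mid\bA)$ equal $H(\bC)-H(\bC\mid\bA,\bB)$ after expanding $I(\cdot;\cdot)=H(\cdot)-H(\cdot\mid\cdot)$ and using $H(\bC\mid\bA)-H(\bC\mid\bA,\bB)=I(\bB;\bC\mid\bA)$. Conditioning reduces entropy (item~3) is equivalent to $I(\bA;\bC\mid\bB)\geq 0$, which follows from item~2 applied to the conditional distribution given $\bB=b$ and then averaging; the equality case yields conditional independence of $\bA$ and $\bC$ given $\bB$. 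Subadditivity (item~4) combines $H(\bA,\bB\mid\bC)=H(\bA\mid\bC)+H(\bB\mid\bA,\bC)$ with item~3 applied to $H(\bB\mid\bA,\bC)\leq H(\bB\mid\bC)$.

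For items~6 and~7, the key observation is that independence of $E$ from the relevant variables makes every conditional marginal given $E$ coincide with its unconditional counterpart. Explicitly, $p(a\mid b,E)=p(a,b,E)/p(b,E)=p(a,b)p(E)/(p(b)p(E))=p(a\mid b)$, and similarly for joints involving $\bC$. Substituting these identities into the definitions $H(\bA\mid\bB,E)=\sum_{a,b}p(a,b\mid E)\log(1/p(a\mid b,E))$ and the analogous expansion of $I(\bA;\bB\mid\bC,E)$ yields the claimed equalities. The main bookkeeping obstacle across the whole proof is tracking the equality cases in items~1--3, which all trace back to the strict concavity of $\log$ in Jensen; threading this carefully through the conditional expansions is what makes the ``iff'' characterizations go through.
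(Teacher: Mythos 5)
Your proposal is correct, and it is essentially the proof the paper is pointing to: the paper does not prove Claim~\ref{clm:it-facts} itself but cites Cover--Thomas Chapter~2, whose arguments are exactly your Jensen/Gibbs derivation of items~1--2, the chain-rule manipulations for items~3--5, and (for items~6--7, which are not stated verbatim there) the same direct computation showing that conditioning on an event jointly independent of the variables leaves all the relevant conditional distributions unchanged. No gaps; the only point worth making explicit is that ``$E$ independent of $\bA$ and $\bB$'' must be read as independence from the joint tuple, which is what your factorization $p(a,b,E)=p(a,b)p(E)$ uses and what the paper's applications require.
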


The following claim (Fano's inequality) states that if a random variable $\bA$ can be used to estimate the value
of another random variable $\bB$, then $\bA$ should ``consume'' most of the entropy of
$\bB$.

\begin{claim}[Fano's inequality]\label{clm:fano}
For any binary random variable $\bB$ and any (possibly randomized) function f that
predicts $\bB$ based on $\bA$, if $\Pr($f(\bA)$ \neq \bB) = \delta$, then $H(\bB \mid \bA)
\leq H_2(\delta)$.
\end{claim}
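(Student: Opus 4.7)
The plan is to reduce this to the standard textbook argument for Fano's inequality, carefully accommodating the fact that the predictor $f$ is randomized and specializing to binary $\bB$ (which makes the right-hand side cleaner than in the general form).

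First, I would absorb the internal randomness of $f$ into the conditioning variable. Write $f(\bA)=g(\bA,\bR)$ where $g$ is deterministic and $\bR$ is independent of $(\bA,\bB)$. Setting $\bA':=(\bA,\bR)$, part~\ref{part:ent-event} of Claim~\ref{clm:it-facts} (with the event ``$\bR$ takes some specific value'', or just by independence) yields $H(\bB\mid\bA)=H(\bB\mid\bA')$, and the estimator $\widehat{\bB}:=g(\bA')$ is now a \emph{deterministic} function of $\bA'$, with $\Pr(\widehat{\bB}\neq\bB)$ still equal to $\delta$.

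Next, introduce the error indicator $\bE:=\mathbf{1}[\widehat{\bB}\neq\bB]$. Since $\bE$ is completely determined by $(\bB,\bA')$, we have $H(\bE\mid\bB,\bA')=0$. The heart of the argument is then to expand $H(\bB,\bE\mid\bA')$ in two ways using the chain rule (part~\ref{part:chain-rule} of Claim~\ref{clm:it-facts}):
\begin{align*}
H(\bB\mid\bA')+\underbrace{H(\bE\mid\bB,\bA')}_{=\,0}
\;=\;H(\bB,\bE\mid\bA')
\;=\;H(\bE\mid\bA')+H(\bB\mid\bE,\bA').
\end{align*}
The first term on the right satisfies $H(\bE\mid\bA')\leq H(\bE)=H_2(\delta)$ by part~\ref{part:cond-reduce} combined with the fact that $\bE$ is Bernoulli$(\delta)$. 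For the second term, I would condition on the value of $\bE$: when $\bE=0$, the equality $\bB=\widehat{\bB}=g(\bA')$ makes $\bB$ a deterministic function of $\bA'$; when $\bE=1$, we have $\bB\neq\widehat{\bB}$, and since $\bB$ is \emph{binary} this also pins down $\bB$ uniquely from $\bA'$. Hence $H(\bB\mid\bE,\bA')=0$, and combining the two bounds gives $H(\bB\mid\bA)=H(\bB\mid\bA')\leq H_2(\delta)$.

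The only mildly delicate point is the handling of $f$'s randomness: one has to verify both that the error probability $\delta$ is unchanged when passing from $\bA$ to the augmented $\bA'$, and that $H(\bB\mid\bA)=H(\bB\mid\bA')$; both follow immediately from the independence of $\bR$ from $(\bA,\bB)$. Everything else is routine, and is in fact simpler than the general (non-binary) Fano inequality because the extra $\delta\log(|\mathrm{supp}(\bB)|-1)$ term that usually appears vanishes.
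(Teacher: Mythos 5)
Your proof is correct: the paper itself offers no proof of this claim (it is invoked as a standard fact from information theory, deferring to Cover--Thomas), and your argument is exactly the standard Fano derivation, correctly specialized to binary $\bB$ and with the randomness of $f$ properly absorbed into the conditioning via the independence of $\bR$ from $(\bA,\bB)$. One cosmetic point: the two-way expansion of $H(\bB,\bE\mid\bA')$ uses the chain rule for \emph{entropy}, whereas Claim~\ref{clm:it-facts}-(\ref{part:chain-rule}) as stated is the chain rule for mutual information (and, as usual, one implicitly assumes the predictor outputs values in $\set{0,1}$, without which the $\bE=1$ case would not pin down $\bB$); neither affects the validity of the argument.
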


We also use the following simple claim, which states that conditioning on independent
random variables can only increase the mutual information.
\begin{claim}\label{clm:info-increase}
  For any random variables $\bA, \bB, \bC$, and $\bD$, if $\bA$ and $\bD$ are independent
  conditioned on $\bC$, then $I(\bA; \bB \mid \bC) \leq I(\bA; \bB \mid \bC, \bD)$.
\end{claim}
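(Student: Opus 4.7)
The plan is to derive the inequality from a standard two-way application of the chain rule for mutual information (Claim~\ref{clm:it-facts}, part~\ref{part:chain-rule}), using the independence assumption to eliminate one of the resulting terms. Concretely, I would expand $I(\bA;\bB,\bD \mid \bC)$ in two different orders: first splitting off $\bD$, and then splitting off $\bB$.

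More precisely, the first expansion gives
\[
I(\bA;\bB,\bD \mid \bC) = I(\bA;\bD \mid \bC) + I(\bA;\bB \mid \bC,\bD),
\]
while the second expansion gives
\[
I(\bA;\bB,\bD \mid \bC) = I(\bA;\bB \mid \bC) + I(\bA;\bD \mid \bC,\bB).
\]
Setting the two right-hand sides equal and using the hypothesis that $\bA$ and $\bD$ are independent conditioned on $\bC$, which by part~\ref{part:info-zero} of Claim~\ref{clm:it-facts} means $I(\bA;\bD \mid \bC) = 0$, I obtain
\[
I(\bA;\bB \mid \bC,\bD) = I(\bA;\bB \mid \bC) + I(\bA;\bD \mid \bC,\bB).
\]
Since the last term is a (conditional) mutual information and hence non-negative (again by part~\ref{part:info-zero}), the desired inequality $I(\bA;\bB \mid \bC) \leq I(\bA;\bB \mid \bC,\bD)$ follows immediately.

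There is essentially no obstacle here; the only ``choice'' in the proof is deciding to expand the joint mutual information of $\bA$ with the pair $(\bB,\bD)$ in two different orders, which is the standard trick for turning a conditional independence hypothesis into a monotonicity-type statement about conditional mutual information. No additional properties beyond the chain rule, non-negativity of mutual information, and the characterization of zero mutual information in terms of independence are needed, all of which are collected in Claim~\ref{clm:it-facts}.
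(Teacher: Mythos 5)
Your proof is correct, but it follows a different route than the paper. You expand the joint quantity $I(\bA;\bB,\bD \mid \bC)$ by the chain rule in two orders, kill the term $I(\bA;\bD \mid \bC)$ using the conditional-independence hypothesis, and finish with non-negativity of conditional mutual information. The paper instead works at the level of entropies: it writes $I(\bA;\bB\mid\bC) = H(\bA \mid \bC) - H(\bA \mid \bC,\bB)$, uses the equality case of ``conditioning reduces entropy'' (Claim~\ref{clm:it-facts}-(\ref{part:cond-reduce})) to replace $H(\bA \mid \bC)$ by $H(\bA \mid \bC,\bD)$, and then drops $H(\bA \mid \bC,\bB)$ down to $H(\bA \mid \bC,\bB,\bD)$ to reassemble $I(\bA;\bB \mid \bC,\bD)$. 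Both arguments are equally short; yours in addition yields the exact identity $I(\bA;\bB \mid \bC,\bD) = I(\bA;\bB \mid \bC) + I(\bA;\bD \mid \bC,\bB)$, which is slightly more information than the inequality. One small point: you invoke the chain rule and the ``zero iff independent'' characterization in their \emph{conditional} forms (everything conditioned on $\bC$), whereas Claim~\ref{clm:it-facts} states parts~(\ref{part:info-zero}) and~(\ref{part:chain-rule}) unconditionally; these conditional analogues are standard and the paper itself uses a conditional chain rule elsewhere, but the paper's own proof avoids this by relying only on part~(\ref{part:cond-reduce}), which is stated in exactly the conditional form needed.
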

\begin{proof}
  Since $\bA$ and $\bD$ are independent conditioned on $\bC$, by
  Claim~\ref{clm:it-facts}-(\ref{part:cond-reduce}), $H(\bA \mid \bC) = H(\bA \mid
  \bC,\bD)$ and $H(\bA \mid \bC,\bB) \ge H(\bA \mid \bC,\bB,\bD)$.  We have,
	\begin{align*}
	  I(\bA; \bB \mid \bC) &= H(\bA \mid \bC) - H(\bA \mid \bC,\bB) = H(\bA \mid \bC,\bD) - H(\bA \mid \bC,\bB) \\
	  &\leq H(\bA \mid \bC,\bD) - H(\bA \mid \bC,\bB,\bD) = I(\bA;\bB \mid \bC,\bD)
	\end{align*}
\end{proof}

\subsection{Communication Complexity and Information Complexity}\label{sec:cc-ic}
Communication complexity and information complexity play an important role in our lower bound proofs. 
We now provide necessary definitions for completeness.

\paragraph{Communication complexity.} Our lowers bounds for single-pass streaming algorithms
are established through communication complexity lower bounds.  Here, we briefly
provide some context necessary for our purpose; for a more detailed treatment of
communication complexity, we refer the reader to the excellent text by Kushilevitz and
Nisan~\cite{KN97}.

We focus on the \emph{two-player one-way communication} model. Let $P$ be a relation with
domain $\mathcal{X} \times \mathcal{Y} \times \mathcal{Z}$.  Alice receives an input $X
\in \mathcal{X}$ and Bob receives $Y \in \mathcal{Y}$, where $(X,Y)$ are chosen from a
joint distribution $\dist$ over $\mathcal{X} \times \mathcal{Y}$.  In addition to private randomness, the players also have an
access to a shared public tape of random bits $R$. Alice sends a single message $M(X,R)$
and Bob needs to output an answer $Z := Z(M(X,R),Y,R)$ such that $(X,Y,Z) \in P$.

We use $\prot$ to denote a protocol used by the players. Unless specified otherwise, we
always assume that the protocol $\prot$ can be randomized (using both public and
private randomness), \emph{even against a prior distribution $\dist$ of inputs}. For any
$0 < \delta < 1$, we say $\prot$ is a $\delta$-error protocol for $P$ over a distribution
$\dist$, if the probability that for an input $(X,Y)$, Bob outputs some $Z$ where $(X,Y,Z) \notin P$ is at most
$\delta$ (the probability is taken over the randomness of both the distribution and the protocol).

\begin{definition}
  The \emph{communication cost} of a protocol $\prot$ for a problem $P$ on an input
  distribution $\dist$, denoted by $\norm{\prot}$, is the worst-case size of the message
  sent from Alice to Bob in the protocol $\prot$, when the inputs are chosen from the distribution
  $\dist$. \newline The \emph{communication complexity} $\CC{P}{\dist}{\delta}$ of a
  problem $P$ with respect to a distribution $\dist$ is the minimum communication cost of
  a $\delta$-error protocol $\prot$ over $\dist$.
\end{definition}

\paragraph{Information complexity.} There are several possible definitions of information
complexity of a communication problem that have been considered depending on the application (see, e.g.,~\cite{Bar-YossefJKS02,BarakBCR10,ChakrabartiSWY01,Bar-YossefJKS02-S}).  Our definition is tuned
specifically for \emph{one-way protocols}, similar in the spirit of~\cite{Bar-YossefJKS02}
(see also~\cite{JayramW11}).
\begin{definition}
  Consider an input distribution $\dist$ and a protocol $\prot$ (for some problem
  $P$). Let $\bX$ be the random variable for the input of Alice drawn from $\dist$, and let
  $\bprot := \bprot(\bX)$ be the random variable denoting the message sent from Alice to
  Bob \emph{concatenated} with the public randomness $\bR$ used by $\prot$.  The
  information cost $\ICost{\prot}{\dist}$ of a one-way protocol $\prot$ with respect to
  $\dist$ is $I_\dist(\bprot;\bX)$. \newline The \emph{information
    complexity} $\IC{P}{\dist}{\delta}$ of $P$ with respect to a distribution $\dist$ is
  the minimum $\ICost{\prot}{\dist}$ taken over all one-way $\delta$-error protocols
  $\prot$ for $P$ over $\dist$.
\end{definition}

Note that any public coin protocol is a distribution over private coins protocols, run by
first using public randomness to sample a random string $\bR = R$ and then running the
corresponding private coin protocol $\prot^R$. We also use $\bprot^R$ to denote the random
variable of the message sent from Alice to Bob, assuming that the public randomness is
$\bR = R$. We have the following well-known claim. 
\begin{claim}\label{clm:public-random}
	 For any distribution $\dist$ and any protocol $\prot$, let $\bR$ denote the public randomness used in $\prot$; then, $\ICost{\prot}{\dist} = \Ex_{R \sim \bR}\Bracket{I_{\dist}(\bprot^R ; \bX \mid \bR = R)}$. 
\end{claim}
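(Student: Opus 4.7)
The plan is to unpack the definition of $\bprot$ and apply the chain rule for mutual information. By definition, $\bprot$ is the message from Alice to Bob \emph{concatenated} with the public randomness $\bR$, so I would write $\bprot = (\bM, \bR)$ where $\bM$ is the message alone (i.e., conditioned on $\bR = R$, we have $\bM = \bprot^R$). Then by Claim~\ref{clm:it-facts}-(\ref{part:chain-rule}),
\[
\ICost{\prot}{\dist} \;=\; I_{\dist}(\bprot;\bX) \;=\; I_{\dist}(\bM, \bR; \bX) \;=\; I_{\dist}(\bR;\bX) + I_{\dist}(\bM; \bX \mid \bR).
\]

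Next, I would observe that the public randomness $\bR$ is generated independently of Alice's input $\bX$, and so by Claim~\ref{clm:it-facts}-(\ref{part:info-zero}) the first term vanishes: $I_{\dist}(\bR;\bX) = 0$. This leaves
\[
\ICost{\prot}{\dist} \;=\; I_{\dist}(\bM; \bX \mid \bR).
\]

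Finally, I would expand the conditional mutual information according to its standard definition as an expectation over the conditioning variable:
\[
I_{\dist}(\bM; \bX \mid \bR) \;=\; \Ex_{R \sim \bR}\Bracket{I_{\dist}(\bM; \bX \mid \bR = R)} \;=\; \Ex_{R \sim \bR}\Bracket{I_{\dist}(\bprot^R; \bX \mid \bR = R)},
\]
where in the last step I use that conditioned on $\bR = R$ the random variable $\bM$ coincides with $\bprot^R$. Combining the two displays yields the claim.

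There is essentially no obstacle here; the one thing that requires a small amount of care is the distinction between $\bprot$ (which bundles in the public coins) and $\bprot^R$ (the message when the public string is fixed to $R$), together with the fact that the public randomness is drawn independently of Alice's input. Everything else is a direct invocation of the chain rule and the definition of conditional mutual information already stated in Claim~\ref{clm:it-facts}.
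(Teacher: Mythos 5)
Your proposal is correct and follows essentially the same route as the paper: decompose $\bprot$ as $(\bM,\bR)$, apply the chain rule (Claim~\ref{clm:it-facts}-(\ref{part:chain-rule})), kill the term $I_{\dist}(\bR;\bX)$ by independence of the public coins from Alice's input, and expand the conditional mutual information as an expectation over $R$ with $\bM = \bprot^R$. No gaps.
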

\begin{proof} 
	Let $\bprot = (\bM,\bR)$, where $\bM$ denotes the message sent by Alice and $\bR$ is the public randomness. We have, 
	\begin{align*}
		\ICost{\prot}{\dist} &= I(\bprot ; \bX) = I(\bM, \bR ; \bX) = I(\bR ; \bX) + I(\bM ; \bX \mid \bR) \tag{the chain rule for mutual information, Claim~\ref{clm:it-facts}-(\ref{part:chain-rule})} \\
		&= \Ex_{R \sim \bR}\Bracket{I_{\dist}(\bprot^R ; \bX \mid \bR = R)} \tag{$\bM = \bprot^R$ whenever $\bR = R$ and $I(\bR ; \bX) = 0$ by Claim~\ref{clm:it-facts}-(\ref{part:info-zero})}
	\end{align*}
\end{proof}

The following well-known proposition (see, e.g.,~\cite{ChakrabartiSWY01}) relates communication complexity and information complexity. 
\begin{proposition}\label{prop:cc-ic}
  For every $0 < \delta < 1$ and every distribution $\dist$:
  $\CC{P}{\dist}{\delta} \geq \IC{P}{\dist}{\delta}$.
\end{proposition}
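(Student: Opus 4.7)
The plan is to show that for every fixed $\delta$-error protocol $\prot$ for $P$ over $\dist$, its information cost is at most its communication cost, i.e. $\ICost{\prot}{\dist} \leq \norm{\prot}$; the proposition then follows by taking the minimum over all $\delta$-error protocols on both sides.

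To do this, write the random variable $\bprot$ as a concatenation $(\bM, \bR)$, where $\bM := \bM(\bX,\bR)$ is the message sent by Alice and $\bR$ is the shared public tape. The first step is to strip off the contribution of the public randomness. By the chain rule (Claim~\ref{clm:it-facts}-(\ref{part:chain-rule})),
\[
I_\dist(\bprot ; \bX) \;=\; I_\dist(\bM,\bR;\bX) \;=\; I_\dist(\bR;\bX) + I_\dist(\bM;\bX \mid \bR).
\]
Since the public tape $\bR$ is sampled independently of the inputs, $\bR$ and $\bX$ are independent, so Claim~\ref{clm:it-facts}-(\ref{part:info-zero}) gives $I_\dist(\bR;\bX) = 0$, leaving $\ICost{\prot}{\dist} = I_\dist(\bM;\bX \mid \bR)$.

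The second step is to bound the remaining term by the worst-case message length. Using $I(\bM;\bX \mid \bR) = H(\bM \mid \bR) - H(\bM \mid \bR,\bX) \leq H(\bM \mid \bR)$, and then applying Claim~\ref{clm:it-facts}-(\ref{part:cond-reduce}) to drop the conditioning, we get $I_\dist(\bM;\bX \mid \bR) \leq H(\bM)$. Finally, by the definition of $\norm{\prot}$, every realization of $\bM$ is a binary string of length at most $\norm{\prot}$, so $\supp{\bM}$ has cardinality at most $2^{\norm{\prot}}$, and Claim~\ref{clm:it-facts}-(\ref{part:uniform}) yields $H(\bM) \leq \log\card{\supp{\bM}} \leq \norm{\prot}$. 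Chaining these inequalities gives $\ICost{\prot}{\dist} \leq \norm{\prot}$, as desired.

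There is no real obstacle in this argument; the only point that needs a little care is the independence of the public randomness $\bR$ from the input $\bX$, which is implicit in the one-way public-coin communication model but is what lets us discard $I_\dist(\bR;\bX)$ cleanly. Everything else is a direct application of the entropy identities already collected in Claim~\ref{clm:it-facts}.
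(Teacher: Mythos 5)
Your proof is correct and follows essentially the same route as the paper: the paper invokes Claim~\ref{clm:public-random} (whose proof is exactly your chain-rule decomposition $I(\bM,\bR;\bX)=I(\bR;\bX)+I(\bM;\bX\mid\bR)$ with $I(\bR;\bX)=0$) and then bounds the conditional information by the message entropy and the worst-case message length, just as you do. The only cosmetic difference is that you bound $H(\bM\mid\bR)\leq H(\bM)\leq\norm{\prot}$ directly, while the paper averages the bound over fixings $\bR=R$; both are fine.
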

\begin{proof}
  Let $\prot$ be a protocol with the minimum communication complexity for $P$ on $\dist$ and $\bR$ denotes the public randomness of $\prot$;
  using Claim~\ref{clm:public-random}, we can write,
	\begin{align*}
		\IC{P}{\dist}{\delta} &= \Ex_{R \sim \bR}\Bracket{I_{\dist}(\bprot^R ; \bX \mid \bR = R)} \leq \Ex_{R \sim \bR}\Bracket{H_{\dist}(\bprot^R \mid \bR = R)} \\
		&\leq \Ex_{R \sim \bR}\Bracket{\card{\bprot^R}} \leq \norm{\prot} = \CC{P}{\dist}{\delta}
 	\end{align*}
\end{proof}

\section{An ${\Omega}(mn/\alpha)$-Space Lower bound for $\alpha$-Approximate Set Cover}\label{sec:lower-find}

In this section, we prove that the simple $\alpha$-approximation algorithm described in Section~\ref{sec:techniques} is in fact optimal in terms of the space requirement. Formally,

\begin{theorem}\label{thm:find-lower}
  For any $\alpha = o(\frac{\sqrt{n}}{\log{n}})$ and $m = \poly(n)$, any \emph{randomized} single-pass
  streaming algorithm that $\alpha$-approximates the
  set cover problem with probability at least $2/3$ requires ${\Omega}(mn/\alpha)$ bits of
  space.
\end{theorem}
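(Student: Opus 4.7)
The plan is to establish the $\Omega(mn/\alpha)$ streaming lower bound via a two-level communication reduction: first by setting up a direct-sum over an intermediate two-player problem (the \Trap problem previewed in Section~\ref{sec:techniques}), and then by bounding the one-way information complexity of \Trap itself through a reduction from the \Index problem. Since any single-pass streaming algorithm with workspace $s$ yields a one-way protocol of communication cost $s$ when the stream is split between Alice (early sets) and Bob (later sets), a two-player one-way communication lower bound of $\Omega(mn/\alpha)$ on the corresponding set-cover distribution will immediately imply the theorem.

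I would construct the hard distribution as follows. Partition the $m$ input sets into $k := \Theta(m/\alpha)$ ``blocks,'' each containing $\alpha$ candidate sets, and partition the universe $[n]$ into $k$ blocks of $\Theta(n/\alpha) \cdot (1/\text{something})$ elements -- the precise scaling is chosen so that the intended optimum has one set per block (so $\opt = k$) and any $\alpha$-approximation must still recover the ``correct'' set in a constant fraction of the blocks. Within each block, I embed an instance of \Trap: a designated ``private element'' $e^*_b$ of the block that can only be covered by one secret set $S_{b,i_b}$ out of the $\alpha$ candidates, while the remaining ``dummy'' elements of the block are covered broadly. Alice's input encodes the sets (hence her role is to reveal which is the secret set), and Bob contributes the adversarial structure needed to force \Trap-style identification of $e^*_b$. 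A direct-sum argument via information complexity (Claim~\ref{clm:public-random}, Proposition~\ref{prop:cc-ic}, and the chain rule from Claim~\ref{clm:it-facts}) then reduces the overall lower bound to $k$ times the information cost of one \Trap instance on $\Theta(n/\alpha)$ elements, targeting $k \cdot \Omega(n/\alpha) = \Omega(mn/\alpha^2)$ per-block savings and $\Omega(mn/\alpha)$ overall once the per-block bound is shown to be $\Omega(n)$.

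For the per-block \Trap lower bound, I would reduce from \Index on $N = \Theta(n/\alpha)$ bits. Given Alice's \Index string $x \in \{0,1\}^N$ and Bob's index $j$, Alice builds a set $S$ determined by $x$ (roughly, the positions where $x$ is $0$) while Bob builds $E$ to be a random subset containing $j$. On ``legal'' \Trap inputs (exactly one element of $E$ outside $S$), invoking the \Trap protocol traps this element in a set of size $|E|/2$, which reveals one bit about $x_j$ via whether $j$ lies in the trap; an $\Omega(N)$ lower bound on \Index then transfers to \Trap by Fano's inequality (Claim~\ref{clm:fano}). The subtle issue is that the distribution induced by \Index does not always yield a legal \Trap input -- sometimes $E \subseteq S$ or $|E \setminus S| \geq 2$, and the protocol's behavior is unconstrained there. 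The plan is to argue that, under the \Index product distribution, the legal inputs occur with probability bounded away from zero, and crucially that the protocol transcript cannot ``fool'' the players on non-legal inputs: conditioning on the transcript, the posterior distribution over $e^*$ retains enough entropy that the event $j \in T$ is still correlated with $x_j$. This step is analogous to standard ``gap'' arguments, but has to be executed carefully because \Trap is a non-Boolean problem with many valid outputs.

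The main obstacle will be precisely this analysis of \Trap on non-legal inputs -- establishing that an adversarial protocol cannot exploit the undefined behavior to ``leak nothing'' about Alice's string. The cleanest route is probably an averaging argument: fix Bob's private randomness and the public randomness; then over Alice's side of the \Index distribution, show that the legal event happens with constant probability \emph{uniformly} over Bob's input, and bound the conditional information $I(\bX_j; \bprot \mid \bR, \text{legal})$ from below via Claim~\ref{clm:info-increase}. Once the single-instance bound $\IC{\Trap}{\dist_{\Trap}}{\delta} = \Omega(N)$ is in hand, the direct-sum decomposition together with Proposition~\ref{prop:cc-ic} lifts it to $\Omega(k \cdot N) = \Omega(mn/\alpha)$ for the full set-cover distribution, completing the proof. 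The restriction $\alpha = o(\sqrt{n}/\log n)$ enters to ensure that the block-size $n/k \asymp n\alpha/m$ and the element count per block are large enough for the \Index-to-\Trap reduction to give a nontrivial bound.
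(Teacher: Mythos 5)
Your high-level skeleton (direct sum over \Trap instances, then \Trap lower bounded via a reduction from \Index with special care for non-legal inputs) matches the paper's architecture, but the concrete construction you propose has two gaps that would break the argument. First, the hard distribution: you partition the sets into $k=\Theta(m/\alpha)$ blocks of $\alpha$ candidates with intended optimum $\opt=k$, and claim an $\alpha$-approximation must identify the correct set in a constant fraction of blocks. With $\opt=k$ the budget is $\alpha\cdot k=\Theta(m)$ sets, so the algorithm may simply take \emph{every} candidate set in every block; the only residual constraint is the certificate for each block's private element, which is worth only $O(\log\alpha)$ bits per block, far below what you need. Your own accounting reflects the problem: $k$ copies times a per-copy \Trap bound of $\Omega(n/\alpha)$ gives only $\Omega(mn/\alpha^2)$, and the hoped-for per-block $\Omega(n)$ bound is exactly what is missing. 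The paper avoids this by making $\opt$ \emph{constant}: Alice holds $m$ independent random sets of size $n/10\alpha$, Bob holds a single set $T=[n]\setminus E$ with $\card{E}=\ell=2\alpha\log m$ and $\card{E\setminus S_\istar}=1$, so the budget $3\alpha$ is far below $\card{E}$ while each non-special set covers only $O(\log m)$ elements of $E$ (Lemma~\ref{lem:cover-size}); this forces the output to trap $\estar$ in a set of size $\le\ell/2$, and the direct sum is taken over all $m$ sets (which copy is special), directly yielding $m\cdot\Omega(n/\alpha)=\Omega(mn/\alpha)$ with no blocks at all.

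Second, the \Index-to-\Trap reduction: you have Bob build $E$ as a random subset containing his index and hope that ``legal'' inputs (exactly one element of $E$ outside $S$) occur with probability bounded away from zero under the product \Index distribution. They do not: for $E$ of size $\ell$ chosen independently of $S$, $\card{E\setminus S}$ concentrates far from $1$, so the legal event is rare and your conditional-information argument has nothing to condition on. The paper's reduction instead has Alice explicitly send a random $(\ell-1)$-subset $B\subseteq S$ (an extra $O(\ell\log n)$ bits of information, which is lower order against $\Omega(n/\alpha)$), and Bob sets $E=B\cup\{a\}$; then \No instances of \Index induce \emph{exactly} the legal distribution $\distT$, while on \Yes instances (non-legal inputs) the key observation is that, conditioned on $(S,E)$ and the protocol's randomness, the element $a$ is uniform over $E$, so $\Pr(a\in L)\le\card{L}/\card{E}\le 1/2$ and the protocol cannot be fooled into erring with probability above roughly $1/2$ on that side; averaging the two cases keeps the overall \Index error bounded away from $1/2$. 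Your sketch of the non-legal case (``the posterior over $e^*$ retains enough entropy'') gestures at this but gives no mechanism to realize legality or to control the \Yes-side error, and as stated the reduction would not go through.
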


Fix a (sufficiently large) value for $n$, $m = \poly(n)$ (also $m = \Omega(\alpha\log{n})$), and $\alpha = o({\frac{\sqrt{n}}{\log{n}}})$;
throughout this section, \setCoverApx refers to the problem of $\alpha$-approximating the set cover problem for instances with $m+1$
sets\footnote{\label{footnote:m+1}To simplify the exposition, we use $m+1$ instead of $m$ as the number of sets.} defined over the universe
$[n]$ in the one-way communication model, whereby the sets are partitioned between Alice and Bob.

\paragraph{Overview.} We design a hard input distribution $\dista$ for \setCoverApx, whereby Alice is provided with a collection of $m$ sets $S_1,\ldots,S_m$, each of size (roughly) $n/\alpha$ and Bob is
given a \emph{single} set $T$ of size (roughly) $n-2\alpha$. The input to the players are \emph{correlated} such that there exists a set $S_\istar$ in Alice's collection ($\istar$ is unknown to Alice), such that $S_\istar \union T$ covers all elements in $[n]$ except for a single \emph{special element}. This in particular ensures that the optimal set cover size in this distribution is at most $3$ w.h.p. 

On the other hand, we ``hide'' this special element among the $2\alpha$ elements in $\bar{T}$ in a way that if Bob does not have (essentially) full information about Alice's collection, 
he cannot even identify a set of $\alpha$ elements from $\bar{T}$ that contain this special element (w.p strictly more than half). 
This implies that in order for Bob to be sure that he returns a valid set cover, he should additionally cover a majority of $\bar{T}$ with sets \emph{other than} $S_\istar$. We design the distribution
in a way that the sets in Alice's collection are ``far'' from each other and 
hence Bob is forced to use a \emph{distinct} set for (roughly) each element in $\bar{T}$ that he needs to cover with sets other than $S_\istar$. This implies that Bob needs to output a set cover of size $\alpha$ (i.e., 
an $(\alpha/3)$-approximation) to ensure that every element in $[n]$ is covered.

\subsection{A Hard Input Distribution for \setCoverApx}
Consider the following distribution. 

\textbox{Distribution $\dista$. \textnormal{A hard input distribution for \setCoverApx.}}{ 
\medskip\\
	\textbf{Notation.} Let $\FC$ be the collection of all subsets of $[n]$ with cardinality $\frac{n}{10\alpha}$, 
and $\ell := 2\alpha \log{m}$. 
\begin{itemize}
\item \textbf{Alice.} The input of Alice is a collection of $m$ sets $\SA =
  \paren{S_1,\ldots,S_m}$, where for any $i \in [m]$, $S_i$ is a set chosen
  independently and uniformly at random from $\FC$.
\item \textbf{Bob.} Pick an $\istar \in [m]$ (called the \emph{special index}) uniformly at random; the input
  to Bob is a set $T = [n] \setminus E$, where $E$ is chosen uniformly at random from all 
  subsets of $[n]$ with $\card{E} = \ell$ and $\card{E \setminus S_\istar} = 1$.\footnote{Since
    $\alpha = o(\sqrt{n}/\log{n})$ and $m = \poly(n)$, the size of $E$ is strictly smaller than the size of $S_\istar$.} 
\end{itemize}
}

The claims below summarize some useful properties of the distribution $\dista$. 
\begin{claim}\label{clm:opt-size}
  For any instance $(\SA,T) \sim \dista$, with probability $1- o(1)$,
  $\opt(\SA,T) \leq 3$.
\end{claim}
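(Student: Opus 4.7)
The plan is to exhibit an explicit set cover of size at most $3$ that uses $T$, the ``planted'' set $S_{\istar}$, and one additional set from $\SA$ chosen to cover the lone missing element. Let $\estar$ denote the unique element of $E \setminus S_{\istar}$ (well-defined by the construction of $\dista$, since $|E\setminus S_{\istar}|=1$). The first observation is purely deterministic: because $T = [n]\setminus E$ and $E \cap S_{\istar}$ has exactly $\ell-1$ elements while $E \setminus S_{\istar} = \{\estar\}$, we get
\[
T \cup S_{\istar} \;=\; [n] \setminus (E \setminus S_{\istar}) \;=\; [n]\setminus\{\estar\}.
\]
Hence any third set that contains $\estar$ completes a cover of $[n]$ of size at most $3$.

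I would then define the event $\event := \{\exists\, j \in [m]\setminus\{\istar\}: \estar \in S_j\}$ and argue that $\Pr_{\dista}[\event] = 1 - o(1)$; on $\event$ we clearly have $\opt(\SA,T) \le 3$. For each fixed $j \neq \istar$, the set $S_j$ is chosen uniformly from $\FC$ and is independent of $(S_{\istar}, \istar, E)$, so conditioned on $\estar$ we have
\[
\Pr\bigl[\estar \in S_j \mid S_{\istar},\istar,E\bigr] \;=\; \frac{|S_j|}{n} \;=\; \frac{1}{10\alpha}.
\]
Moreover, conditioned on $(S_{\istar}, \istar, E)$ (which determines $\estar$), the sets $\{S_j\}_{j \neq \istar}$ are mutually independent, so the events $\{\estar \in S_j\}$ are independent. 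Therefore
\[
\Pr[\neg\event] \;=\; \Paren{1 - \frac{1}{10\alpha}}^{m-1} \;\le\; \exp\!\Paren{-\frac{m-1}{10\alpha}},
\]
and using the standing assumption $m = \Omega(\alpha \log n)$ (with the implicit constant chosen large enough), this bound is $n^{-\Omega(1)} = o(1)$.

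I do not anticipate any significant technical obstacle here; the only point that deserves mild care is confirming the independence structure so that the ``independent trials'' calculation is valid. Concretely, one should verify that $\estar$ is a function of $(S_{\istar}, \istar, E)$ alone (not of the other $S_j$'s), so that conditioning on $\estar$ does not distort the marginal law of $S_j$ for $j \neq \istar$. Once this is noted, the rest is a one-line computation and the claim $\Pr_{\dista}[\opt(\SA,T) \le 3] \ge 1 - o(1)$ follows.
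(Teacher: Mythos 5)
Your proposal is correct and follows essentially the same route as the paper's proof: observe that $T \cup S_{\istar} = [n]\setminus\{\estar\}$, note that each of the $m-1$ sets $S_j$, $j\neq\istar$, independently contains $\estar$ with probability $1/10\alpha$, and conclude via $(1-1/10\alpha)^{m-1} = o(1)$ using $m = \Omega(\alpha\log n)$ that some third set covers $\estar$ w.h.p. Your explicit verification of the independence structure is a slightly more careful write-up of what the paper states in one line, but the argument is the same.
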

\begin{proof}
  Let $\estar$ denote the element in $E \setminus S_\istar$.  $\SAistar$ contains $m-1$
  random subsets of $[n]$ of size $n/10\alpha$, drawn independent of the choice of
  $\estar$. Therefore, each set in $\SAistar$ covers $\estar$ with probability
  $1/10\alpha$. The probability that none of these $m-1$ sets covers $\estar$ is at most
  \[
    (1-1/10\alpha)^{m-1} \leq  (1-1/10\alpha)^{\Omega(\alpha \log{n})} \leq \exp(-\Omega(\alpha \log{n})/10\alpha)
    = o(1)
  \]
  Hence, with probability $1-o(1)$, there is at least one set $S \in \SA^{-\istar}$ that covers $\estar$. Now, it is
  straightforward to verify that $(S_\istar, T, S)$ form a valid set cover.
\end{proof}

\begin{lemma}\label{lem:cover-size}
  With probability $1-o(1)$, no collection of $3\alpha$ sets from $\SAistar$ covers more
  than $\ell/2$ elements of $E$.
\end{lemma}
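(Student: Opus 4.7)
The plan is to first observe that $\SAistar = \SA \setminus \{S_\istar\}$ is independent of $E$: indeed, the $m-1$ sets $\{S_j : j \neq \istar\}$ are drawn i.i.d.\ from $\FC$ and are therefore independent of $S_\istar$, while $E$ is a function of $\istar$ and $S_\istar$ only. Hence I can condition on an arbitrary fixed $\ell$-subset $E$ of $[n]$ and argue over the randomness of $\SAistar$ alone.

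Next, for any fixed $J \subseteq [m]\setminus\{\istar\}$ with $|J| = 3\alpha$, I would upper-bound $X_J := |E \cap \bigcup_{j\in J}S_j|$ by the double sum $Y_J := \sum_{j \in J}|E \cap S_j| = \sum_{e \in E}\sum_{j \in J}\mathbf{1}[e \in S_j]$, which consists of $3\alpha \cdot \ell$ indicator random variables, each with mean $1/(10\alpha)$. These indicators are independent across distinct $j$'s (since the sets $S_j$ are drawn independently) and negatively correlated within each fixed $j$ (since $|S_j|$ is fixed), so the entire family is negatively correlated in the sense required in Section~\ref{sec:prelim}. Applying the generalized Chernoff--Hoeffding bound to $Y_J$, whose expectation is $\Ex[Y_J] = 3\ell/10$, then yields $\Pr[Y_J > \ell/2] \leq \exp(-\Omega(\ell))$.

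Finally, I would take a union bound over all $\binom{m-1}{3\alpha}$ choices of $J$ to conclude the lemma; plugging in $\ell = 2\alpha\log m$ gives a failure probability of $o(1)$ provided that the Chernoff exponent $\Omega(\ell)$ strictly dominates $\log \binom{m-1}{3\alpha} = O(\alpha \log m)$. The main technical hurdle is exactly this balancing: it requires a careful tracking of the constants throughout the Chernoff--Hoeffding inequality for the negatively correlated family, so that the specific choices of the constants in $|S_i| = n/(10\alpha)$ and $\ell = 2\alpha \log m$ (baked into the construction of $\dista$) suffice for the union bound to close. If this balancing turns out to be too delicate, a fallback is to replace the crude bound $X_J \leq Y_J$ by a more refined inclusion--exclusion-style estimate for $\Pr[F \subseteq \bigcup_{j \in J}S_j]$ over a fixed $F \subseteq E$ of size $\lceil \ell/2\rceil + 1$, and then union-bound over $F$ and $J$ together, still leveraging the negative-correlation structure of the underlying indicators.
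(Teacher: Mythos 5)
There is a genuine gap: the union bound you propose cannot be closed for the parameters of $\dista$, and this is not a matter of delicate constant tracking but a quantitative failure by a large factor. For a fixed collection $J$ of $3\alpha$ sets, your variable $Y_J$ has mean $3\ell/10$ and you need to bound the probability that it exceeds $\ell/2$, a deviation by only a factor $5/3$ above the mean. The best Chernoff--Hoeffding exponent for such a deviation is $\frac{\ell}{2}\ln\frac{5}{3}-\frac{\ell}{2}+\frac{3\ell}{10}\approx 0.055\,\ell \approx 0.11\,\alpha\log m$, whereas the union bound over all $\binom{m-1}{3\alpha}$ collections costs roughly $3\alpha\log m$; the available exponent falls short by a factor of about $20$--$30$, so no bookkeeping of constants in the generalized Chernoff bound can rescue the argument. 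Your fallback does not repair this either: for a fixed $F\subseteq E$ of size about $\ell/2$ and fixed $J$, each element of $F$ is covered by $\bigcup_{j\in J}S_j$ with probability about $1-(1-\tfrac{1}{10\alpha})^{3\alpha}\approx 0.26$, so even granting the negative-association argument you would get a bound of roughly $(0.26)^{\ell/2}$, which is again overwhelmed by the $2^{\ell}\binom{m-1}{3\alpha}=\exp\paren{\Theta(\alpha\log m)}$ choices of $(F,J)$, with the wrong sign in the exponent. The losing step in both variants is the same: you pay for the collection $J$ inside the union bound.

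The paper's proof avoids paying for collections altogether. It bounds, for each \emph{single} set $S\in\SAistar$, the number of elements of $E$ it covers (mean $\log m/5$) by the threshold $\log m/3$, and union bounds only over the $m-1$ individual sets; the statement about collections then follows \emph{deterministically}, since any $3\alpha$ sets cover at most $3\alpha\cdot\frac{\log m}{3}=\alpha\log m=\ell/2$ elements of $E$ by subadditivity of coverage counts. This restructuring is exactly what makes the exponent requirement drop from $\Theta(\alpha\log m)$ (your route) to $\Theta(\log m)$ (the paper's route), and it is the missing idea in your proposal. Your first observation --- that $\SAistar$ is independent of $E$, so one may fix $E$ and use the negative correlation of the membership indicators within each set --- does agree with the paper and is the right starting point; the flaw is solely in where the union bound is placed.
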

\begin{proof}
  Recall that the sets in $\SAistar$ and the set $E$ are chosen independent of each other. For each
  set $S \in \SAistar$ and for each element $e \in E$, we define an indicator binary random
  variable $\bX_e$, where $\bX_e = 1$ iff $e \in S$.  Let $\bX:= \sum_{e} \bX_e$, which is
  the number of elements in $E$ covered by $S$. We have,
  \[
    \Ex[\bX] = \sum_{e} \Ex[\bX_e] = \frac{\card{E}}{10\alpha} = \frac{\log{m}}{5}
  \]
  Moreover, the variables $\bX_e$ are negatively correlated since for any set $S' \subseteq E$, 
  \begin{align*}
  	\Pr\paren{\bigwedge_{e \in S'} \bX_e = 1} = {{{n-\card{S'}}\choose{{n\over10\alpha} - \card{S'}}}\over{{n}\choose{n\over10\alpha}}} &= 
	\frac{\paren{\frac{n}{10\alpha}} \cdot \paren{\frac{n}{10\alpha} - 1} \ldots \paren{\frac{n}{10\alpha} - \card{S'} + 1}}{\paren{n} \cdot \paren{n-1} \ldots \paren{n- \card{S'} + 1}} \\ 
	&\leq \paren{\frac{1}{10\alpha}}^{\card{S'}} = \prod_{e \in S'} \Pr\paren{\bX_e = 1} 
  \end{align*}
  Hence, by the extended Chernoff bound (see Section~\ref{sec:prelim}),
  \[
    \PR{\bX \geq \frac{\log{m}}{3}} = o(\frac{1}{m})
  \]
  Therefore, using union bound over all $m-1$ sets in $\SAistar$, with probability $1-o(1)$,
  no set in $\SAistar$ covers more than $\log{m}/3$ elements in $E$, which implies that
  any collection of $3\alpha$ sets can only cover up to
  $3\alpha \cdot \log{m}/3 = \ell/2$ elements in $E$.
\end{proof}

\subsection{The Lower Bound for the Distribution $\dista$}

In order to prove our lower bound for \setCoverApx on $\dista$, we define an intermediate
communication problem which we call the \emph{Trap} problem. 
\begin{problem}[\emph{Trap} problem]
  Alice is given a set $S \subseteq [n]$ and Bob is given a set $E \subseteq [n]$ such
  that $E \setminus S = \set{e^*}$; Bob needs to output a set $L \subseteq E$
  with $\card{L} \le \card{E}/2$ such that $e^* \in L$.
\end{problem} 

In the following, we use \Trap to refer to the trap problem with $\card{S} = n/10\alpha$ and $\card{E} = \ell = 2\alpha\log{m}$ (notice the similarity to the parameters in $\dista$).
We define the following distribution $\distT$ for \Trap. Alice is given a set
$S \in_R \FC$ (recall that $\FC$ is the collection of all subsets of $[n]$ of size
$n/10\alpha$) and Bob is given a set $E$ chosen uniformly at random from all sets that
satisfy $\card{E \setminus S} = 1$ and $\card{E} = 2\alpha \log{m}$.  We first use a
direct sum style argument to prove that under the distributions $\dista$ and $\distT$,
information complexity of solving \setCoverApx is essentially equivalent to solving $m$ copies of \Trap. Formally,

\begin{lemma}\label{lem:sc-trap}
  For any constant $\delta < 1/2$,
  $\IC{\setCoverApx}{\dista}{\delta} \geq m \cdot \IC{\Trap}{\distT}{\delta+o(1)}$.
\end{lemma}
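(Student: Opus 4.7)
The plan is a direct sum reduction. Fix any $\delta$-error protocol $\prot$ for \setCoverApx on $\dista$, and for each $i \in [m]$ construct a one-way protocol $\prot_i$ for \Trap on $\distT$ by embedding a \Trap instance at position $i$ of the set cover instance. In $\prot_i$, Alice with input $S \sim \distT$ plants $\bS_i := S$ and samples $\bS_j \in_R \FC$ for every $j \neq i$ using her \emph{private} randomness; she then sends Alice's message $\bM$ of $\prot$ to Bob. Bob with input $E \sim \distT$ sets $T := [n]\setminus E$, runs $\prot$'s decoder on $(\bM, T)$ to obtain a cover $\bC \subseteq [m+1]$ and certificate $\func : [n] \to \bC$, and outputs $L := \{e \in E : \func(e) \neq i\}$. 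The crucial point of this design is that Bob's decoder uses only the output indices and his own set $E$, so it never needs to \emph{know} the privately sampled $\bS_j$'s.

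Next I would verify that $\prot_i$ is a $(\delta + o(1))$-error protocol for \Trap on $\distT$. Because $\dista$ is symmetric in $\istar$, the conditional distribution of $(\bS_1,\ldots,\bS_m, T)$ under $\dista$ given $\istar = i$ matches the joint input distribution that $\prot_i$ feeds into $\prot$, so the failure probability of $\prot$ on $\prot_i$'s inputs is still $\delta$. Conditioning on $\prot$ returning a valid $\alpha$-approximate cover, Claim~\ref{clm:opt-size} gives $|\bC| \leq \alpha \cdot \opt \leq 3\alpha$ with probability $1-o(1)$, and Lemma~\ref{lem:cover-size} then forces $i \in \bC$: otherwise the at most $3\alpha$ sets of $\SAistar$ in $\bC$ would have to cover \emph{all} $\ell$ elements of $E$, contradicting the lemma. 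Given $i \in \bC$, the element $e^* \in E \setminus S_i$ must have $\func(e^*) \neq i$, so $e^* \in L$; and a second application of Lemma~\ref{lem:cover-size} to the at most $3\alpha - 1$ sets of $\SAistar$ actually in $\bC$ shows $|L| \leq \ell/2 = |E|/2$. A union bound over the $o(1)$ failure events yields error at most $\delta + o(1)$.

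For the information-theoretic step, the private sampling of $\bS_{-i}$ in $\prot_i$ does not alter the conditional distribution of $\bM$ given $\bS_i$, so $\ICost{\prot_i}{\distT} = I(\bM ; \bS_i \mid \bR)$, where $\bR$ is $\prot$'s public randomness. By the chain rule for mutual information and the mutual independence of $\bS_1,\ldots,\bS_m$ (each drawn i.i.d.\ from $\FC$), $\ICost{\prot}{\dista} = I(\bM ; \bS_1,\ldots,\bS_m \mid \bR) = \sum_{i=1}^m I(\bM ; \bS_i \mid \bR, \bS_{<i})$. Since $\bS_i$ is independent of $\bS_{<i}$ given $\bR$, Claim~\ref{clm:info-increase} gives $I(\bM ; \bS_i \mid \bR) \leq I(\bM ; \bS_i \mid \bR, \bS_{<i})$, so summing yields $\sum_i \ICost{\prot_i}{\distT} \leq \ICost{\prot}{\dista}$. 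Using the validity of each $\prot_i$, $\ICost{\prot_i}{\distT} \geq \IC{\Trap}{\distT}{\delta+o(1)}$, which completes the proof.

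The main obstacle is engineering Bob's decoder for $\prot_i$ without giving him access to the privately sampled $\bS_{-i}$; this is exactly what the ``trap''-style structure of $\dista$ buys us. Lemma~\ref{lem:cover-size} forces every valid $\alpha$-approximate cover to include $S_i$, so the certificate necessarily pins $e^*$ inside the small set of elements of $E$ whose assigned set differs from $i$---which Bob can read off directly from the protocol output.
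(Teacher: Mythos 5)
Your reduction is essentially the one in the paper: you embed the \Trap instance at coordinate $i$, have Bob output the elements of $E$ whose certificate is not the $i$-th set, argue correctness via Claim~\ref{clm:opt-size} and Lemma~\ref{lem:cover-size}, and bound information via the chain rule plus Claim~\ref{clm:info-increase}. The genuine gap is in the error analysis of your fixed-index protocols $\prot_i$. The instances you feed into $\protsc$ are distributed as $\dista$ conditioned on $\istar = i$, but the only guarantee you have is that the error of $\protsc$ \emph{averaged over} $\istar$ is at most $\delta$. The conditional error $\delta_i := \Pr_{\dista}\Paren{\protsc \errs \mid \istar = i}$ can exceed $\delta$ for particular $i$: the distribution is symmetric under relabeling Alice's sets, but the protocol need not be, so distributional symmetry alone does not give $\delta_i = \delta$. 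Hence the step ``the failure probability of $\prot$ on $\prot_i$'s inputs is still $\delta$'' fails, and with it the claim $\ICost{\prot_i}{\distT} \geq \IC{\Trap}{\distT}{\delta + o(1)}$ for every $i$; you only get $\IC{\Trap}{\distT}{\delta_i + o(1)}$, and since the information complexity is not linear in the error parameter you cannot simply sum these to obtain $m \cdot \IC{\Trap}{\distT}{\delta+o(1)}$.

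This is precisely what the paper's protocol avoids by choosing $\istar$ with \emph{public} randomness inside a single protocol $\prottrap$: its error is the average $\frac{1}{m}\sum_i \delta_i + o(1) \leq \delta + o(1)$, and by Claim~\ref{clm:public-random} its information cost is the average $\frac{1}{m}\sum_{i} I_{\dista}\Paren{\bprotsc ; \bS_i}$, which is then bounded by $\frac{1}{m}\ICost{\protsc}{\dista}$ exactly as in your final computation. Your argument can be repaired either by adopting this public-coin averaging, or by first symmetrizing $\protsc$ (publicly permuting Alice's sets, under which $\dista$ is invariant and Bob can un-permute the output indices) so that $\delta_i = \delta$ for all $i$; a Markov-type patch (keeping only indices with $\delta_i \leq 2\delta$) yields only $\frac{m}{2}\cdot\IC{\Trap}{\distT}{2\delta+o(1)}$, which is weaker than the lemma as stated for $\delta$ close to $1/2$. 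The remaining parts of your proposal (the decoder, the correctness argument, and the chain-rule/Claim~\ref{clm:info-increase} bound, including the observation that $\ICost{\prot_i}{\distT} = I(\bM;\bS_i\mid\bR)$) are correct and match the paper; the detour through ``$i \in \bC$'' is unnecessary, since the certificate of $e^*$ cannot be $i$ simply because $e^* \notin S_i$.
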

\begin{proof}
  Let $\protsc$ be a $\delta$-error protocol for \setCoverApx; we design a $\delta'$-error
  protocol $\prottrap$ for solving \Trap over $\distT$ with parameter $\delta' = \delta + o(1)$
  such that the information cost of $\prottrap$ on $\distT$ is at most $\frac{1}{m} \cdot
  \ICost{\protsc}{\dista}$.
  The protocol $\prottrap$ is as follows.  
  
  \textbox{Protocol
    $\prottrap$. \textnormal{The protocol for solving \Trap using a protocol $\protsc$ for \setCoverApx.}}{
    \medskip \\
    \textbf{Input:} An instance $(S,E) \sim \distT$. 
    \textbf{Output:} A set $L$ with $\card{L} \leq \card{E}/2$, such that $e^* \in L$.  \\
    \algline
\begin{enumerate}
\item Using \emph{public randomness}, the players sample an index $\istar \in [m]$ uniformly at random.
\item Alice creates a tuple $\SS = (S_1,\ldots,S_m)$ by assigning $S_\istar = S$ and sampling
  each remaining set uniformly at random from $\FC$ using \emph{private randomness}. Bob creates a
  set $T:= \bar{E}$.
\item The players run the protocol $\protsc$ over the input $(\SS,T)$.
\item Bob computes the set $L$ of all elements in $E = \bar{T}$ whose certificate (i.e., the set used to cover them) is not $S_\istar$, and outputs $L$.
\end{enumerate}
} 

We first argue the correctness of $\prottrap$ and then bound its information cost. To
argue the correctness, notice that the distribution of instances of \setCoverApx
constructed in the reduction is exactly $\dista$.  Consequently, it follows from
Claim~\ref{clm:opt-size} that, with probability $1-o(1)$, any $\alpha$-approximate set cover can have at most
$3\alpha$ sets. Let $\Salpha$ be the set cover computed by Bob minus the sets $S_{\istar}$ and $T$. 
As $\estar \in E = \bar{T}$ and moreover is \emph{not} in $S_{\istar}$, it follows that $\estar$ should be covered by some set in $\Salpha$.  
This means that the set $L$ that is output by Bob contains $\estar$. 
Moreover, by Lemma~\ref{lem:cover-size}, the number of
elements in $E$ covered by the sets in $\Salpha$ is at most
$\ell/2$ w.p. $1-o(1)$. Hence, $\card{L} \leq \ell/2 = \card{E}/2$. This implies that:
\begin{align*}
	\Pr_{\distT}\Paren{\prottrap \errs} \leq \Pr_{\dista}\Paren{\protsc \errs} + o(1) \leq \delta + o(1)
\end{align*}  

We now bound the information cost of $\prottrap$. Let $\bI$ be the random variable for the choice of $i^* \in [m]$ in the protocol $\prottrap$ (which is uniform in
$[m]$). Using Claim~\ref{clm:public-random}, we have,
\begin{align*}
	\ICost{\prottrap}{\distT} &= \Ex_{i \sim \bI} \Bracket{I_{\distT}\Paren{\bprottrap^i;\bS \mid \bI = i}} 
	= \frac{1}{m} \cdot \sum_{i=1}^{m} I_{\distT} \Paren{\bprotsc ; \bS_i \mid \bI = i} \\
	&= \frac{1}{m} \cdot \sum_{i=1}^{m} I_{\dista} \Paren{\bprotsc ; \bS_i \mid \bI = i} 
	= \frac{1}{m} \cdot \sum_{i=1}^{m} I_{\dista} \Paren{\bprotsc ; \bS_i}
\end{align*}
where the last two equalities hold since $(i)$ the joint distribution of $\bprotsc$ and $\bS_i$
conditioned on $\bI = i$ under $\distT$ is equivalent to the one under $\dista$, and $(ii)$ the random variables $\bprotsc$ and $\bS_i$ are
independent of the event $\bI = i$ (by the definition of $\dista$) and hence we can ``drop'' the conditioning on this event (by Claim~\ref{clm:it-facts}-(\ref{part:info-event})).

We can further derive, 
\begin{align*}
	\ICost{\prottrap}{\distT} &= \frac{1}{m} \cdot \sum_{i=1}^{m} I_{\dista} \Paren{\bprotsc ; \bS_i} 
	\leq \frac{1}{m} \cdot \sum_{i=1}^{m} I_{\dista} \Paren{\bprotsc ; \bS_i \mid \bSS^{<i}} 
\end{align*}
	The inequality holds since $\bS_i$ and $\bSS^{<i}$ are independent and conditioning on independent variables can only increase the mutual information (i.e., Claim~\ref{clm:info-increase}). 
	Finally, 
\begin{align*}
	\ICost{\prottrap}{\distT} &\leq \frac{1}{m} \cdot \sum_{i=1}^{m} I_{\dista} \Paren{\bprotsc ; \bS_i \mid \bSS^{<i}} 
	= \frac{1}{m} \cdot  I_{\dista} \Paren{\bprotsc; \bSS} = \frac{1}{m} \cdot \ICost{\protsc}{\dista}
\end{align*}
where the first equality is by the chain rule for mutual information (see Claim~\ref{clm:it-facts}-(\ref{part:chain-rule})).
\end{proof}

Having established Lemma~\ref{lem:sc-trap}, our task now is to lower bound the information complexity of
\Trap over the distribution $\distT$.  We prove this lower bound using
a novel reduction from the well-known \emph{Index} problem, denoted by $\Indexnk$. In $\Indexnk$ over the distribution $\distI$, Alice is given a set $A \subseteq [n]$ of size $k$ chosen
uniformly at random and Bob is given an element $a$ such that w.p. $1/2$
$a \in_R A$ and w.p. $1/2$ $a \in_R [n] \setminus A$; Bob needs to
determine whether $a \in A$ (the \Yes case) or not (the \No case).

We remark that similar distributions for \Indexnk have been previously studied in the
literature (see, e.g., \cite{SaglamThesis2011}, Section~3.3). For the sake of completeness, we
provide a self-contained proof of the following lemma in Appendix~\ref{app:index}.

\begin{lemma}\label{lem:index}
	For any $k < n/2$, and any constant $\delta' < 1/2$, $\IC{\Indexnk}{\distI}{\delta'} = \Omega(k)$. 
\end{lemma}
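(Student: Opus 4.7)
My plan is to prove $\IC{\Indexnk}{\distI}{\delta'} = \Omega(k)$ by reformulating $\distI$ as a $k$-fold product of balanced ``bucket-index'' instances and applying a direct-sum argument. Using public randomness $\bm{\pi}$, partition $[n]$ into $k$ blocks $B_1, \ldots, B_k$ of size $n/k$, identifying $[n]$ with $[k] \times [n/k]$. Consider the following alternative sampling scheme: Alice draws $\bm{j} = (\bm{j}_1, \ldots, \bm{j}_k) \in [n/k]^k$ uniformly at random and sets $\bA = \{(l, \bm{j}_l) : l \in [k]\}$; Bob draws $\bm{l} \in [k]$ uniformly, tosses a fair coin, and either sets $\bm{q} = \bm{j}_{\bm{l}}$ (\Yes) or draws $\bm{q}$ uniform in $[n/k] \setminus \{\bm{j}_{\bm{l}}\}$ (\No), then puts $\bm{a} = (\bm{l}, \bm{q})$. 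A short combinatorial calculation, hinging on the identity $(n/k)^k \cdot (k/n)^k / \binom{n}{k} = 1/\binom{n}{k}$ after summing over the random partition, verifies that $\bA$ is uniform on $\binom{[n]}{k}$, $\bm{a}$ is uniform on $\bA$ (\Yes) or on $[n] \setminus \bA$ (\No), and the YES/NO coin is balanced---exactly matching $\distI$.

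In this reformulation, Bob's task of deciding $\bm{a} \in \bA$ is exactly deciding $\bm{j}_{\bm{l}} = \bm{q}$. By design, conditioned on $(\bm{l} = l, \bm{q} = q)$, the bit $\mathbf{1}[\bm{j}_l = q]$ is uniform on $\{0,1\}$. Fano's inequality therefore gives $I(\bprot; \mathbf{1}[\bm{j}_l = q] \mid \bm{l} = l, \bm{q} = q) \geq 1 - H_2(\delta_{l,q})$, where $\delta_{l,q}$ is the conditional error, and data processing lifts this to $I(\bprot; \bm{j}_l \mid \bm{l} = l, \bm{q} = q) \geq 1 - H_2(\delta_{l,q})$. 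Since Bob's sampling of $\bm{q}$ uses fresh randomness independent of everything else (given $\bm{j}_l$ and $\bm{l}$), we have $I(\bprot; \bm{q} \mid \bm{j}_l, \bm{l} = l) = 0$, so by the chain rule (Claim~\ref{clm:it-facts}-(\ref{part:chain-rule})), $I(\bprot; \bm{j}_l \mid \bm{l} = l) \geq I(\bprot; \bm{j}_l \mid \bm{l} = l, \bm{q}) \geq 1 - H_2(\delta_l)$ after averaging over $\bm{q}$ using Jensen's inequality on the concave $H_2$. Because $\bm{l}$ is independent of both $\bprot$ and Alice's $\bm{j}_l$, this equals $I(\bprot; \bm{j}_l)$.

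The coordinates $\bm{j}_1, \ldots, \bm{j}_k$ being mutually independent and independent of $\bm{\pi}$, Claim~\ref{clm:info-increase} combined with the chain rule yields $I(\bprot; \bm{j} \mid \bm{\pi}) \geq \sum_{l=1}^k I(\bprot; \bm{j}_l) \geq k(1 - H_2(\delta'))$, where a final Jensen step uses that the overall error averages to at most $\delta'$. Given $\bm{\pi}$, the map $\bm{j} \mapsto \bA$ is a bijection, so $I(\bprot; \bm{j} \mid \bm{\pi}) = I(\bprot; \bA \mid \bm{\pi})$; and since the $\Indexnk$ protocol uses only $\bA$ so that $\bprot \perp \bm{\pi} \mid \bA$, we obtain $\IC{\Indexnk}{\distI}{\delta'} = I(\bprot; \bA) \geq I(\bprot; \bA \mid \bm{\pi}) \geq k(1 - H_2(\delta')) = \Omega(k)$. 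The main technical obstacle is the combinatorial verification in the first paragraph that the block-based reformulation truly reproduces $\distI$ marginally over $\bm{\pi}$; once this is in place, all subsequent steps are routine Fano-plus-chain-rule manipulations.
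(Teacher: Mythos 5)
Your argument is correct and gives the required $\Omega(k)$ bound, but via a genuinely different decomposition than the paper's. The paper also re-generates $\distI$ through an auxiliary variable, but its auxiliary variable is a uniformly random $2k$-set $P$, with Alice's $A$ a uniform $k$-subset of $P$ and Bob's $a$ uniform in $P$; it then writes $I(\bA;\bprotindex) \geq I(\bA;\bprotindex \mid \bP) = H(\bA\mid\bP) - H(\bA\mid\bprotindex,\bP)$, uses $H(\bA\mid\bP)=\log\binom{2k}{k}=2k-\Theta(\log k)$, and bounds $H(\bA\mid\bprotindex,\bP)\leq 2k\cdot H_2(\delta')$ by a single application of Fano (Claim~\ref{clm:fano}) to the membership bit followed by sub-additivity over the $2k$ indicators of which elements of $P$ lie in $A$. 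You instead couple $\distI$ with a random equipartition of $[n]$ into $k$ blocks, so that $A$ becomes $k$ independent coordinates $j_1,\dots,j_k$, prove a per-coordinate Fano bound, and aggregate with a direct-sum step (chain rule plus Claim~\ref{clm:info-increase}). Your route is closer to a textbook information-complexity direct sum, avoids the $\Theta(\log k)$ loss, and isolates cleanly why each block contributes one bit; the paper's $2k$-superset trick buys a reformulation that is more transparently identical to $\distI$ and works verbatim for every $k<n/2$. All the information-theoretic steps in your write-up (Fano conditioned on $(l,q)$, data processing, dropping $\bm{q}$ via the chain rule, dropping the conditioning on $\bm{l}=l$ by independence, the two Jensen steps, and $I(\bprot;\bA)\geq I(\bprot;\bA\mid\bm{\pi})$ because the message is independent of $\bm{\pi}$ given $\bA$) check out.

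Two caveats. First, your construction requires $k\mid n$ (blocks of size exactly $n/k$), while the lemma is claimed for every $k<n/2$; with unequal block sizes the natural patch (Bob's \No-branch block chosen with probability proportional to $\card{B_l}-1$, so that the marginal of $(A,a)$ is still exactly $\distI$) makes the membership bit biased given $(l,q)$, and for $k$ close to $n/2$ (blocks of size $2$ or $3$) and $\delta'$ close to $1/2$ the per-coordinate Fano bound can become vacuous -- so covering the full stated range of $k$ needs genuine additional care, which the paper's decomposition sidesteps. Second, the identity $(n/k)^k(k/n)^k/\binom{n}{k}=1/\binom{n}{k}$ you cite is a tautology and does not by itself verify the reformulation; the actual justification is exchangeability: conditioned on the partition, $\Pr[\bA=S]=(k/n)^k$ times the indicator that $S$ is a transversal, and the probability that a fixed $k$-set is a transversal is the same for all $k$-sets, so $\bA$ is uniform, with a similar symmetry check for $a$ being uniform on $A$ (\Yes) and on $[n]\setminus A$ (\No). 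Neither point undermines the idea, but both should be spelled out.
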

	
Using Lemma~\ref{lem:index}, we prove the following lemma, which is the key part of the proof.

\begin{lemma}\label{lem:trap}
	For any constant $\delta < 1/2$, $\IC{\Trap}{\distT}{\delta} = \Omega(n/\alpha)$. 
\end{lemma}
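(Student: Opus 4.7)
The plan is to reduce from $\Indexnk$ with $k := \lceil n/(10\alpha)\rceil$. Since $\alpha \geq 1$ gives $k<n/2$, Lemma~\ref{lem:index} yields $\IC{\Indexnk}{\distI}{\delta'}=\Omega(n/\alpha)$ for any constant $\delta'<1/2$. The novelty flagged in Section~\ref{sec:techniques} is that the Index-protocol will invoke $\prottrap$ on inputs that \emph{violate} the Trap promise $\card{E\setminus S}=1$, and a symmetry built into the reduction will prevent $\prottrap$ from fooling us on those illegal inputs. Concretely, given a $\delta$-error protocol $\prottrap$ for $\Trap$ on $\distT$ with constant $\delta<1/2$, I would set up $\protindex$ as follows: Alice, holding $A\in\binom{[n]}{k}$, uses private randomness to sample a uniformly random $Y\subseteq A$ of size $\ell-1$ (where $\ell=2\alpha\log m$) and sends $Y$ to Bob; she also runs $\prottrap$ with input $S:=A$, whose Alice-message depends only on $S$. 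Bob, holding $a$, forms $E:=Y\cup\set{a}$: if $a\in Y$ he outputs \Yes (correctly, since then $a\in A$); otherwise he completes $\prottrap$ to obtain $L\subseteq E$ and outputs \No iff $a\in L$.

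For correctness I would split on the two cases of $\distI$. If $a\notin A$, then $a\notin Y\subseteq A$, we have $E\setminus S=\set{a}$, and the pair $(S,E)$ is distributed \emph{exactly} as $\distT$ (uniform $S$ of size $k$, uniform $(\ell-1)$-subset $Y$ of $S$ independent of the uniform $e^*=a$ in $[n]\setminus S$); hence $\prottrap$ returns $L\ni a$ with probability at least $1-\delta$ and Bob correctly outputs \No. If instead $a\in A$, the Trap input is illegal since $E\subseteq S$, but conditioned on $A$ and on the event $a\notin Y$ (which holds with probability $1-o(1)$ as $(\ell-1)/\card{A}=O(\alpha^2\log m/n)=o(1)$), the independence of $a$ and $Y$ as samples from $A$ makes every pair $(x,e)$ with $x\in e\subseteq A$ and $\card{e}=\ell$ equally likely as $(a,E)$; in particular, conditioned on everything the Trap protocol sees (including its randomness), $a$ is uniform over $E$, so $\Pr(a\in L)\le \card{L}/\card{E}\le 1/2$. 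Averaging over the two equally likely cases, Bob errs with probability at most $\tfrac{1}{2}\delta+\tfrac{1}{2}\cdot\tfrac{1}{2}=\tfrac{1}{4}+\tfrac{\delta}{2}$, a constant strictly below $1/2$.

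For the information cost, letting $\bprottrap$ denote Alice's Trap-message concatenated with the public randomness, the chain rule together with the fact that $\bA$ is uniform in $\binom{[n]}{k}$ under $\distI$ (matching the marginal of $\bS$ under $\distT$) gives $\ICost{\protindex}{\distI}=I(Y,\bprottrap;\bA)=I(\bprottrap;\bA)+I(Y;\bA\mid\bprottrap)\le \ICost{\prottrap}{\distT}+H(Y)$. Since $H(Y)\le \log\binom{n}{\ell}=O(\alpha\log^2 n)$ and the hypothesis $\alpha=o(\sqrt{n}/\log n)$ yields $\alpha^2\log^2 n=o(n)$, this overhead is $o(n/\alpha)$. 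Combining with Lemma~\ref{lem:index} applied at constant error $\tfrac{1}{4}+\tfrac{\delta}{2}<1/2$, we obtain $\ICost{\prottrap}{\distT}\ge \Omega(n/\alpha)-o(n/\alpha)=\Omega(n/\alpha)$, which proves the lemma. The main obstacle is precisely the symmetry argument in the $a\in A$ case: a priori $\prottrap$ could behave adversarially off-distribution and always place $a$ into $L$, but because Alice samples $Y$ obliviously of $a$, the protocol has no information with which to single out $a$ from the other $\ell-1$ elements of $E$.
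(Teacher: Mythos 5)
Your proposal is correct and is essentially the paper's own proof: the same reduction from \Indexnk with $k=n/10\alpha$ (Alice sends $\prottrap(S)$ for $S:=A$ together with a private random $(\ell-1)$-subset of $A$), the same key symmetry argument that on illegal \Yes-inputs the element $a$ remains uniform over $E$ conditioned on everything $\prottrap$ sees, and the same $O(\ell\log n)=o(n/\alpha)$ information-cost overhead combined with Lemma~\ref{lem:index}. The only point the paper makes explicit that you leave implicit is that on illegal inputs $\prottrap$ need not respect its output format, so Bob should check that $L\subseteq E$ and $\card{L}\le\card{E}/2$ and default to \Yes otherwise; this is a trivial addition.
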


\begin{proof}
  Let $k = n/10\alpha$; we design a $\delta'$-error protocol $\protindex$ for \Indexnk using any $\delta$-error protocol
  $\prottrap$ (over $\distT$) as a subroutine, for some constant $\delta' < 1/2$.
 
  \textbox{Protocol $\protindex$. \textnormal{The protocol for reducing \Indexnk to \Trap.}}{\medskip \\
    \textbf{Input:} An instance $(A,a) \sim \distI$. \textbf{Output:} \Yes if $a \in A$ and \No otherwise.  \\ \algline
    \begin{enumerate}
    \item Alice picks a set $B \subseteq A$ with $\card{B} = \ell - 1$ uniformly at random using \emph{private randomness}.
    \item To invoke the protocol $\prottrap$, Alice creates a set $S := A$ and sends the message $\prottrap(S)$, along with the set $B$ to Bob.
    \item \label{line:lucky} If $a \in B$, Bob outputs \Yes and terminates the protocol. 
    \item Otherwise, Bob constructs a set $E = B \cup \set{a}$ and computes $L:= \prottrap(S,E)$ using the message received from Alice.
    \item If $a \in L$, Bob outputs \No, and otherwise outputs \Yes.
    \end{enumerate}
  }
      
  We should note right away that the distribution of instances for \Trap defined in the previous reduction does \emph{not} match $\distT$. 
  Therefore, we need a more careful argument to establish the correctness of the reduction.
	
  We prove this lemma in two claims; the first claim establishes the correctness
  of the reduction and the second one proves an upper bound on the information cost of
  $\protindex$ based on the information cost of $\prottrap$.

\begin{claim}\label{clm:reduction-correctness}
  $\protindex$ is a $\delta'$-error protocol for $\Indexnk$ over $\distI$ for the parameter
  $k = n/10\alpha$ and a constant $\delta' < 1/2$.
\end{claim}
\begin{proof}
  Let $\bR$ denote the private coins used by Alice to construct the set $B$.  Also,
  define $\distIY$ (resp. $\distIN$) as the distribution of \Yes instances (resp. \No
  instances) of $\distI$.  We have,
  \begin{align}
    \Pr_{\distI,\bR} \Paren{\protindex \errs} = \frac{1}{2} \cdot \Pr_{\distIY,\bR}\Paren{\protindex \errs} + \frac{1}{2} \cdot \Pr_{\distIN,\bR}\Paren{\protindex \errs} \label{eq:err-prob}
  \end{align}
  Note that we do not consider the randomness of the protocol $\prottrap$ (used in construction of $\protindex$) as it is independent of the
  randomness of the distribution $\distI$ and the private coins $\bR$.  We now bound each term in Equation~(\ref{eq:err-prob})
  separately. We first start with the easier case which is the second term.
	
  The distribution of instances $(S,E)$ for \trap created in the reduction by
  the choice of $(A,a) \sim \distIN$ and the randomness of $\bR$, is the same as the
  distribution $\distT$. Moreover, in this case, the output of $\protindex$ would be
  wrong iff $a \in E \setminus S$ (corresponding to the element $e^*$ in \Trap) does not belong to the set $L$ output by $\prottrap$. Hence,
  \begin{align}
    \Pr_{\distIN,\bR}\Paren{\protindex \errs} = \Pr_{\distT}\Paren{\prottrap \errs} \leq \delta \label{eq:err-trap}
  \end{align} 
  We now bound the first term in Equation~(\ref{eq:err-prob}). Note that when
  $(A,a) \sim \distIY$, there is a small chance that $\protindex$ is ``lucky" and $a$
  belongs to the set $B$ (see Line~(\ref{line:lucky}) of the protocol). Let this event be
  $\event$. Conditioned on $\event$, Bob outputs the correct answer with probability $1$;
  however note that probability of $\event$ happening is only $o(1)$.  Now suppose
  $\event$ does not happen. In this case, the distribution of instances $(S,E)$ created by
  the choice of $(A,a) \sim \distIY$ (and randomness of $\bR$) does \emph{not} match
  the distribution $\distT$. However, we have the following important property: Given
  that $(S,E)$ is the instance of \Trap created by choosing $(A,a)$ from $\distIY$ and sampling $\ell - 1$ random elements of $A$ (using $\bR$),
 the element $a$ is \emph{uniform} over the set $E$. In other words, knowing $(S,E)$ does not reveal any information about the element $a$. 
	
  Note that since $(S,E)$ is not chosen according to the distribution $\distT$ (actually
  it is not even a ``legal'' input for \trap), it is possible that $\prottrap$
  terminates, outputs a non-valid set, or outputs a set $L \subseteq E$.  Unless
  $L \subseteq E$ (and satisfies the cardinality constraint), Bob is always able to
  determine that $\prottrap$ is not functioning correctly and hence outputs \Yes (and errs
  with probability at most $\delta < 1/2$). However, if $L \subseteq E$, Bob would not know whether the input to $\prottrap$ is legal or not. In the following, we explicitly
  analyze this case.
	
  In this case, $L$ is a subset of $E$ chosen by the (inner) randomness of $\prottrap$ for a fixed $S$ and $E$ and moreover
  $\card{L} \le \card{E}/2$ (by definition of \Trap). The probability that $\protindex$ errs in this case is exactly equal to the
  probability that $a \in L$. However, as stated before, for a fixed $(S,E)$, the choice of $L$ is independent of the choice of $a$ and
  moreover, $a$ is uniform over $E$; hence $a \in L$ happens with probability at most $1/2$. Formally, (here, $\bR^{\trap}$ denotes the inner randomness of $\prottrap$)
  \begin{align*}
  	\Pr_{\distIY,\bR}(\protindex \errs \mid \bar{\event}) &= \Pr_{\distIY,\bR}\Paren{a \in L=\prottrap(\bS,\bE) \mid \bar{\event}} \\
	&= \Ex_{(S,E) \sim (\bS,\bE) \mid \bar{\event}} \Ex_{R^{\trap} \sim \bR^{\trap}} \Bracket{\Pr_{\distIY,\bR}\Paren{a \in L \mid \bS = S, \bE = E, \bR^{\trap} = R^{\trap}, \bar{\event}}}
	\tag{$L = \prottrap(S,E)$ is a fixed set conditioned on $(S,E,R^{\trap})$} \\
	&=  \Ex_{(S,E) \sim (\bS,\bE) \mid \bar{\event}} \Ex_{R^{\trap} \sim \bR^{\trap}} \Bracket{\frac{\card{L}}{\card{E}}} \tag{$a$ is uniform on $E$ conditioned on $(S,E,R^{\trap})$ and $\bar{\event}$}
  \end{align*}
  Hence, we have, $\Pr_{\distIY,\bR}(\protindex \errs \mid \bar{\event}) \leq \frac{1}{2}$, since by definition, for any output set $L$, $\card{L} \leq \card{E}/2$. 
  
  As stated earlier, whenever $\event$ happens, $\protindex$ makes no error; hence, 
	\begin{align}
		\Pr_{\distIY,\bR}(\protindex \errs) = \Pr_{\distIY,\bR}(\bar{\event}) \cdot \Pr_{\distIY,\bR}(\protindex \errs \mid \bar{\event}) \leq \frac{1-o(1)}{2}
		 \label{eq:err-non-trap}
	\end{align}
	Finally, by plugging the bounds in Equations~(\ref{eq:err-trap},\ref{eq:err-non-trap}) in Equation~(\ref{eq:err-prob}) and assuming
        $\delta$ is bounded away from $1/2$, we have,
	\begin{align*}
		\Pr_{\distI,\bR} (\protindex \errs) \leq \frac{1}{2}\cdot\frac{1-o(1)}{2} + \frac{1}{2} \cdot \delta = \frac{1-o(1)}{4} + \frac{\delta}{2} \leq \frac{1}{2} - \eps
	\end{align*}
	for some constant $\eps$ bounded away from $0$.
\end{proof}

We now bound the information cost of $\protindex$ under $\distI$. 
\begin{claim}\label{clm:reduction-info}
	$\ICost{\protindex}{\distI} \leq \ICost{\prottrap}{\distT} + O(\ell\log{n})$.
\end{claim}
\begin{proof}
	We have,
	\begin{align*}
		\ICost{\protindex}{\distI} &= I_{\distI}\Paren{\bprotindex(\bA) ; \bA} \\
		&= I_{\distI}\Paren{\bprottrap(\bS),\bB;\bA} \\
		&=I_{\distI}\Paren{\bprottrap(\bS); \bA} + I_{\distI}\Paren{\bB ; \bA \mid \bprottrap(\bS)}  
		 \tag{the chain rule for mutual information, Claim~\ref{clm:it-facts}-(\ref{part:chain-rule})} \\
		 &\leq I_{\distI}\Paren{\bprottrap(\bS);\bA }  + H_{\distI}\Paren{\bB \mid  \bprottrap(\bS)} \\
		 &\leq I_{\distI}\Paren{\bprottrap(\bS); \bA} + O(\ell\log{n}) 
		 \tag{$\card{\bB} = O(\ell\log{n})$ and Claim~\ref{clm:it-facts}-(\ref{part:uniform})} \\
		 &= I_{\distI}\Paren{\bprottrap(\bS); \bS} + O(\ell\log{n}) \tag{$\bA = \bS$ as defined in $\protindex$}\\
		 &= I_{\distT}\Paren{\bprottrap(\bS);\bS} + O(\ell\log{n}) \tag{the joint distribution of $(\bprottrap(\bS),\bS)$ is identical under $\distI$ and $\distT$} \\
		 &= \ICost{\prottrap}{\distT} + O(\ell\log{n})
	\end{align*}
\end{proof}

	The lower bound now follows from Claims~\ref{clm:reduction-correctness} and \ref{clm:reduction-info}, and Lemma~\ref{lem:index} for the parameters $k = \card{S} = \frac{n}{10\alpha}$ and $\delta' <1/2$,
	and using the fact that $\alpha = o(\sqrt{n}/\log{n})$, $\ell = 2\alpha\log{m}$, and $m = \poly(n)$, and hence $\Omega(n/\alpha) = \omega(\ell\log{n})$. 

\end{proof}

To conclude, by Lemma~\ref{lem:sc-trap} and Lemma~\ref{lem:trap}, for any set of parameters $\delta < 1/2$, $\alpha = o(\frac{\sqrt{n}}{\log{n}})$, and $m = \poly(n)$, 
\begin{align*}
\IC{\setCoverApx}{\dista}{\delta} \geq m\cdot\Paren{\Omega(n/\alpha)} =  {\Omega}(mn/\alpha) 
\end{align*}
Since the information complexity is a lower bound on the communication complexity (Proposition~\ref{prop:cc-ic}), we have,

\begin{theorem}\label{thm:two-find-adv}
	For any constant $\delta < 1/2$, $\alpha = o(\frac{\sqrt{n}}{\log{n}})$, and $m = \poly(n)$, 
	\begin{align*}
		\CC{\setCoverApx}{\dista}{\delta} = \Omega(mn/\alpha)
	\end{align*}
\end{theorem}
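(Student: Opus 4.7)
The plan is to simply combine the three pieces of machinery already established in the excerpt: the direct-sum reduction from \setCoverApx to many copies of \Trap given in Lemma~\ref{lem:sc-trap}, the single-copy information lower bound for \Trap given in Lemma~\ref{lem:trap}, and the standard fact that communication complexity upper bounds information complexity (Proposition~\ref{prop:cc-ic}). No new technical idea is needed at this stage; the work has been localized into those three statements.

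More concretely, first I would fix any constant $\delta < 1/2$ and choose a slightly larger constant $\delta' < 1/2$ so that $\delta + o(1) \le \delta'$ for all sufficiently large $n$ (this is possible because the $o(1)$ slack in Lemma~\ref{lem:sc-trap} shrinks to $0$). Then chain the inequalities:
\begin{align*}
\CC{\setCoverApx}{\dista}{\delta}
&\ge \IC{\setCoverApx}{\dista}{\delta}
   && \text{(Proposition~\ref{prop:cc-ic})} \\
&\ge m \cdot \IC{\Trap}{\distT}{\delta'}
   && \text{(Lemma~\ref{lem:sc-trap})} \\
&\ge m \cdot \Omega(n/\alpha)
   && \text{(Lemma~\ref{lem:trap})} \\
&= \Omega(mn/\alpha).
\end{align*}
The only things worth checking are that the hypotheses of the cited results are satisfied under the parameter regime of the theorem, namely $\alpha = o(\sqrt{n}/\log n)$ and $m = \poly(n)$: these are exactly the regimes in which both Lemma~\ref{lem:sc-trap} and Lemma~\ref{lem:trap} are stated, and $\delta' < 1/2$ is a constant so the conclusion of Lemma~\ref{lem:trap} applies.

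There is essentially no obstacle left, since the conceptual content (the distribution $\dista$, the reduction to $m$ copies of \Trap via a direct-sum argument on mutual information, and the reduction from \Indexnk to \Trap handling illegal inputs) has all been carried out. The only minor bookkeeping item is the $o(1)$ loss in error probability between \setCoverApx and \Trap in Lemma~\ref{lem:sc-trap}, which is absorbed by enlarging the error constant before invoking Lemma~\ref{lem:trap}; this is valid because the lower bound of Lemma~\ref{lem:trap} holds for any constant error bounded away from $1/2$.
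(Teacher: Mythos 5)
Your proposal is correct and follows essentially the same route as the paper: the paper also concludes Theorem~\ref{thm:two-find-adv} by chaining Lemma~\ref{lem:sc-trap} with Lemma~\ref{lem:trap} to get $\IC{\setCoverApx}{\dista}{\delta} = \Omega(mn/\alpha)$ and then invoking Proposition~\ref{prop:cc-ic}. Your explicit handling of the $\delta+o(1)$ error slack by absorbing it into a constant $\delta'<1/2$ is exactly the right (and in the paper implicit) bookkeeping.
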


Finally, since one-way communication complexity is also a lower bound on 
the space complexity of single-pass streaming algorithms, we obtain Theorem~\ref{thm:find-lower} as a corollary of Theorem~\ref{thm:two-find-adv}.

\section{An $\Ot(mn/\alpha^2)$-Space Upper Bound for $\alpha$-Estimation}\label{sec:general}

In this section, we show that if we are only interested
in estimating the \emph{size} of {a minimum set cover} (instead of finding the actual
sets), we can bypass the $\Omega(mn/\alpha)$ lower bound established in
Section~\ref{sec:lower-find}. In fact, we prove this upper bound for the more general problem of estimating
the optimal solution of a \emph{covering integer program} (henceforth,
\emph{covering ILP}) in the streaming setting. 
 
 A covering ILP can be formally defined as follows. 
\begin{align*}
  \text{min~}  c \cdot  x
  \text{~~s.t.~~~} Ax \geq b
\end{align*}
where $A$ is a matrix with dimension $n \times m$, $b$ is a vector of dimension $n$, $c$
is a vector of dimension $m$, and $x$ is an $m$-dimensional vector of non-negative integer
variables. Moreover, all coefficients in $A,b,$ and $c$ are also
{non-negative integers}.  We denote this linear program by $\CILP(A,b,c)$. We use
$\amax$, $\bmax$, and $\cmax$, to denote the \emph{largest} entry of, respectively, the
matrix $A$, the vector $b$, and the vector $c$. Finally, we define the \emph{optimal
  value} of the $\Ins:=\CILP(A,b,c)$ as $c \cdot x^*$ where $x^*$ is the \emph{optimal}
solution to $\Ins$, and denote it by $\opt := \opt(\Ins)$.

We consider the following streaming setting for covering ILPs. The input to a streaming
algorithm for an instance $\Ins:= \CILP(A,b,c)$ is the $n$-dimensional vector $b$, and a
stream of the $m$ columns of $A$ presented one by one, where the $i$-th column of $A$,
$A_i$, is presented along with the $i$-th entry of $c$, denoted by $c_i$ (we will refer to $c_i$ as
the weight of the $i$-th column). It is easy to see that this streaming setting for
covering ILPs captures, as special cases, the \emph{set cover} problem, the \emph{weighted set cover}
problem, and the \emph{set multi-cover} problem.  We prove the following theorem for
$\alpha$-estimating the optimal value of a covering ILPs in the streaming setting.

\begin{theorem}\label{thm:general}
  There is a randomized algorithm that given a parameter $\alpha \geq 1$, for any
  instance $\Ins:=\CILP(A,b,c)$ with $\poly(n)$-bounded entries, makes a single pass over a stream
  of columns of $A$ (presented in an arbitrary order), and outputs an $\alpha$-estimation
  to $\opt(\Ins)$ w.h.p. using space $\Ot\paren{(mn/\alpha^2)\cdot\bmax + m + n\bmax}$ bits.
  
  In particular, for the \emph{weighted set cover} problem with $\poly(n)$ bounded weights
  and $\alpha \leq \sqrt{n}$, the space complexity of this algorithm is
  $\Ot(mn/\alpha^2 + n)$.\footnote{Note that $\Omega(n)$ space is necessary to even determine whether or not a given instance is feasible.}
\end{theorem}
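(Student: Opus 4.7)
The plan is to combine two ideas that together squeeze a factor of $\alpha^2$ out of the cost of storing the entire matrix $A$: a constraint sampling step that reduces the effective number of rows by a factor of $\alpha$, and a weight-sensitive column pruning step performed online during the pass. The algorithm's output is the exact optimum of the resulting reduced ILP (ignoring computation time), rescaled by the combined approximation factor.

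First, I would invoke the constraint sampling lemma (Lemma~\ref{lem:sub-sample-save-opt}): sample a random subset $R\subseteq[n]$ of rows at rate $\Theta(\log n / \alpha)$ and work with the restricted instance $\RIns=\CILP(A|_R,b|_R,c)$. The lemma guarantees that $\opt(\RIns)$ is an $\alpha$-estimator of $\opt(\Ins)$ w.h.p., provided no constraint of $\Ins$ is ``too hard'' to satisfy. Storing the sampled rows of $b$ and the identity of $R$ accounts for the $\Ot(n\bmax)$ summand. A naive implementation that simply stores the restriction of every incoming column to $R$ would use $\Ot(mn\bmax/\alpha)$ bits, still a factor of $\alpha$ above the target.

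To remove that last factor I would add an online pruning step in the spirit of the set-filter of~\cite{IndykMV15} and the GreedyPass procedure of~\cite{ChakrabartiW15}, but calibrated against the weight $c_j$ of each column. I maintain, for each sampled row $i\in R$, a residual demand $\bres_i$ initialized to $b_i$. When column $(A_j,c_j)$ arrives, I compute $A_j|_R$ together with its per-unit-weight marginal contribution $\frac{1}{c_j}\sum_{i\in R}\min(A_j[i],\bres_i)$, keep the column only if this quantity exceeds a carefully chosen threshold $\tau$, and in that case subtract the capped contributions from $\bres$. Discarded columns add only $O(\log m)$ bits of bookkeeping each, contributing to the $\Ot(m)$ term. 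At the end, solve $\CILP$ exactly on the stored columns restricted to $R$ and rescale the value.

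The main obstacle will be choosing $\tau$ so that three conditions hold at once. First, a telescoping argument on the $\bres_i$'s must show that the total number of stored (column, sampled-row) nonzeros is $\Ot(mn\bmax/\alpha^2)$, yielding the main term of the space bound. Second, the pruning must preserve $\opt(\RIns)$ up to a constant factor, which I would establish by taking an optimal solution of $\RIns$ and rerouting its weight onto surviving columns, using the fact that a column is discarded only because the already-kept columns absorb its residual contribution at comparable weighted cost. Third, the hypothesis of the constraint sampling lemma must remain valid after pruning, so that any originally coverable constraint is still coverable by surviving columns. The weight-sensitivity of $\tau$ (scaling by $c_j$) is exactly what makes the charging argument uniform across columns of very different weights, preventing light columns from dominating the residual-reduction accounting and forcing us to keep too many of them.
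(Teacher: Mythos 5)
Your high-level plan (constraint sampling to cut the rows by a factor of $\alpha$, plus a weight-sensitive online filter to cut another factor of $\alpha$) matches the paper's strategy, but the filter you propose is inverted relative to the one that actually works, and this is a genuine gap rather than a stylistic difference. You keep the columns whose capped per-unit-weight contribution exceeds a threshold and \emph{discard} the low-contribution ones, then solve the ILP on the survivors. Discarding columns loses coverage capability irrevocably: a column can have tiny marginal contribution (e.g., it satisfies one unit of residual demand of a single constraint) while being the \emph{only} column able to satisfy that constraint, so the surviving instance can be infeasible or have optimum arbitrarily larger than $\opt(\Ins)$, and your proposed charging argument (``kept columns absorb its residual contribution at comparable weighted cost'') does not go through — a small contribution-to-weight ratio says nothing about whether kept columns can cover the same rows. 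The paper does the opposite: every high-contribution column ($\lOne{\min(\bres,A_i)}\ge n\bmax/\alpha$) is greedily \emph{used} (its capped contribution is subtracted from $\bres$, nothing stored), and every remaining column is \emph{kept}, stored as the capped vector $\min(\bres,A_i)$ projected to the sampled rows, so no coverage is ever lost (feasibility of the original optimum is preserved exactly, Lemma~\ref{lem:ILP-upper}). The greedy step is then accounted for by counting: at most $O(\alpha)$ columns can be pruned since each removes $n\bmax/\alpha$ of $\lOne{b}\le n\bmax$, and their total weight is controlled only because the algorithm runs a separate \testerILP for each geometric guess $k$ of $\opt$ and skips columns with $c_i>k$; your single unspecified threshold $\tau$ has no analogous mechanism, and indeed cannot be set without knowledge of the scale of $\opt$.

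Two further pieces are missing. First, your space accounting does not follow: the telescoping on $\bres$ bounds the total \emph{capped contributions} of kept columns, not the number of nonzeros of their full restrictions $A_j|_R$ that you store; the paper gets the $\Ot(n\bmax/\alpha^2)$-per-column bound precisely by storing the capped vector $u_i(V)$ (small by the pruning rule) and separately arguing that replacing $A_i(V)$ by $u_i(V)$ does not change the tester ILP's optimum, since truncation only occurs at values at least the final residual demand. Second, the constraint sampling lemma only yields $\opt(\RIns)+\cost(\Ins)\ge \opt(\Ins)/8\alpha$, so the ``no constraint is too hard'' proviso must be handled algorithmically: the paper computes $\cost(\Ins)$ exactly in the stream with an $O(n\bmax\log\cmax)$-bit dynamic program (Claim~\ref{clm:cost-calculator}) and only considers guesses $k\ge\cost(\Ins)$, which is also where the final $\alpha$-estimate $32\alpha k^*$ comes from; your proposal acknowledges the proviso but gives no mechanism for instances where some constraint genuinely forces a large cost.
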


To prove Theorem~\ref{thm:general}, we design a general approach based on sampling constraints of a covering ILP instance. The goal is to show that 
if we sample (roughly) $1/\alpha$ fraction of the constraints from an instance $\Ins:=\CILP(A,b,c)$, then the optimum value of the resulting covering ILP, denoted by $\RIns$, 
is a good estimator of $\opt(\Ins)$. Note that in general, this may not be the case; simply consider a weighted set cover instance that contains an element $e$ which is only covered by a singleton set 
of weight $W$ (for $W \gg m$) and all the remaining sets are of weight $1$ only. Clearly, $\opt(\RIns) \ll \opt(\Ins)$ as long as $e$ is not sampled in $\RIns$, which happens w.p. $1-1/\alpha$. 

To circumvent this issue, we define a notion of \emph{cost} for covering ILPs which,
informally, is the minimum value of the objective function if the goal is to only satisfy
a single constraint (in the above example, the cost of that weighted set cover instance is
$W$). This allows us to bound the loss incurred in the process of estimation by sampling based on the
cost of the covering ILP.

Constraint sampling alone can only reduce the space requirement by a factor
of $\alpha$, which is not enough to meet the bounds given in
Theorem~\ref{thm:general}. Hence, we combine it with a \emph{pruning} step to sparsify the columns
in $A$ before performing the sampling. We should point out that as columns are weighted,
the pruning step needs to be sensitive to the weights.

In the rest of this section, we first introduce our \emph{constraint sampling lemma}
(Lemma~\ref{lem:sub-sample-save-opt}) and prove its correctness, and then provide our
algorithm for Theorem~\ref{thm:general}.

\subsection{Covering ILPs and Constraint Sampling Lemma}\label{sec:csl}

In this section, we provide a general result for estimating the optimal value of a
Covering ILP using a sampling based approach.  For a vector $v$, we will use $v_i$ to
denote the $i$-th dimension of $v$. For a matrix $A$, we will use $A_i$ to denote the
$i$-th column of $A$, and use $a_{j,i}$ to denote the entry of $A$ at the $i$-th column
and the $j$-th row (to match the notation with the set cover problem, we use $a_{j,i}$
instead of the standard notation $a_{i,j}$).

For each constraint $j \in [n]$ (i.e., the $j$-th constraint) of a covering ILP instance
$\Ins:= \CILP(A,b,c)$, we define the \emph{cost} of the constraint $j$, denoted by
$\cost(j)$, as,
\begin{align*}
	\cost(j) := &\min_{x} c \cdot x ~~\text{ s.t } \sum_{i=1}^{m} a_{j,i} \cdot x_i \geq b_j
\end{align*} 
which is the \emph{minimum solution value} of the objective function for satisfying the constraint
$j$. Furthermore, the \emph{cost} of $\Ins$, denoted by $\cost(\Ins)$, is defined to be
\[
	\cost(\Ins) :=  \max_{j \in [n]} \cost(j)
\]
Clearly, $\cost(\Ins)$ is a lower bound on $\opt(\Ins)$.
 
\paragraph{Constraint Sampling.} Given any instance of covering ILP
$\Ins := \CILP(A,b,c)$, let $\RIns$ be a covering ILP instance $\CILP(A,\widetilde{b},c)$
obtained by setting $\widetilde{b}_j := b_j$ with probability $p$, and
$\widetilde{b}_j := 0$ with probability $1-p$, for each dimension $j \in [n]$ of $b$
independently.  Note that setting $\widetilde{b}_j := 0$ in $\RIns$ is equivalent to removing the $j$-th
constraint from $\Ins$, since all entries in $\Ins$ are non-negative. Therefore,
intuitively, $\RIns$ is a covering ILP obtained by sampling (and keeping) the constraints
of $\Ins$ with a sampling rate of $p$.

We establish the following lemma which asserts that $\opt(\RIns)$ is a good estimator
of $\opt(\Ins)$ (under certain conditions). As $\opt(\RIns) \leq \opt(\Ins)$ trivially holds (removing
constraints can only decrease the optimal value), it suffices to give a lower bound on
$\opt(\RIns)$.
  
\begin{lemma}[Constraint Sampling Lemma]\label{lem:sub-sample-save-opt}
  Fix an $\alpha \geq 32\ln{n}$; for any covering ILP $\Ins$ with $n$ constraints, suppose $\RIns$ is obtained from $\Ins$ by sampling each
  constraint with probability $p:= {4\ln{n} \over \alpha}$; then
  \[ \Pr\Paren{\opt(\RIns) + \cost(\Ins) \geq \frac{\opt(\Ins)}{8\alpha} } \geq \frac{3}{4}\]
\end{lemma}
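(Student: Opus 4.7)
My plan is to lower bound $\opt(\RIns)$ via LP duality and then concentrate with a Chernoff estimate, after handling a trivial regime separately. If $\cost(\Ins)\geq \opt(\Ins)/(8\alpha)$, the claim holds with probability $1$ since $\opt(\RIns)\geq 0$, so I will assume the complementary regime $\opt(\Ins) > 8\alpha\cdot\cost(\Ins)$ and aim for a high-probability lower bound on $\opt(\RIns)$ itself.

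The central observation is that the LP dual of $\CILP(A,b,c)$ is $\max\, b\cdot y$ subject to $A^{\top}y\leq c,\ y\geq 0$, whose feasible region depends only on $A$ and $c$, not on $b$. Hence if $y^*$ is optimal for the dual of $\Ins$, it remains feasible for the dual of $\RIns$, and weak LP duality together with $\opt\geq \opt_{\mathrm{LP}}$ for covering minimization yields
\[
\opt(\RIns)\ \geq\ \opt_{\mathrm{LP}}(\RIns)\ \geq\ \tilde b\cdot y^*\ =\ \sum_{j\in S} b_j\, y^*_j,
\]
where $S\subseteq[n]$ denotes the random set of sampled constraints. A key structural estimate is $b_j y^*_j\leq \cost(\Ins)$ for every $j$: zeroing all coordinates of $y^*$ except the $j$th gives a feasible dual for the single-constraint LP $\min\, c\cdot x$ s.t.\ $A_j\cdot x\geq b_j,\ x\geq 0$, so $b_j y^*_j$ is at most the LP optimum of that single-constraint problem, which in turn is bounded by $\cost(j)\leq \cost(\Ins)$.

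With these two ingredients, $Z:=\sum_{j\in S} b_j y^*_j$ is a sum of independent $[0,\cost(\Ins)]$-bounded summands with mean $\Ex[Z] = p\cdot \opt_{\mathrm{LP}}(\Ins)$; a multiplicative lower-tail Chernoff bound then gives $Z\geq \Ex[Z]/2$ with probability at least $3/4$ provided $p\cdot \opt_{\mathrm{LP}}(\Ins)/\cost(\Ins)$ is a sufficiently large absolute constant. Combining $p = 4\ln n/\alpha$, the regime hypothesis $\opt(\Ins) > 8\alpha\cdot\cost(\Ins)$, and the standard $O(\log n)$ integrality gap for covering ILPs with $\poly(n)$-bounded non-negative integer coefficients (obtainable via scaled randomized rounding) turns $\Ex[Z]/2$ into $\Omega\bigl(\opt(\Ins)/\alpha\bigr)$, which exceeds $\opt(\Ins)/(8\alpha)$ and thus absorbs the additive $-\cost(\Ins)$ slack on the right-hand side of the lemma. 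I expect the main obstacle to be the careful bookkeeping of constants: the Chernoff exponent and the integrality-gap factor must jointly be tight enough that the failure probability is at most $1/4$ \emph{and} the resulting lower bound clears $\opt(\Ins)/(8\alpha)$, which pins the relationship among $p$, $\alpha$, and $\ln n$ exactly as assumed in the hypotheses $\alpha\geq 32\ln n$ and $p=4\ln n/\alpha$.
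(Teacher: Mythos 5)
Your two dual observations are correct as far as they go (the dual feasible region $\{y\ge 0: A^{\top}y\le c\}$ does not depend on $b$, and $b_j y^*_j\le \cost(j)\le\cost(\Ins)$ by weak duality for the single-constraint program), and the Chernoff step for $Z=\sum_{j\in S}b_jy^*_j$ is fine. The genuine gap is the step you yourself flagged and then waved off: the ``standard $O(\log n)$ integrality gap'' is false for the LP relaxation your duality argument actually uses (the plain relaxation, with no coefficient truncation and no box constraints). A single constraint $n^{5}x_1\ge 1$ with $c_1=1$ has LP value $n^{-5}$ and integer value $1$, so with $\poly(n)$-bounded entries the gap of this LP can be polynomial in $n$. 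To get an $O(\log n)$ gap you must pass to the truncated LP with coefficients $\min(a_{j,i},b_j)$ (your two dual facts do survive truncation, but that has to be stated and checked), and even then the constants do not close as written: with your deviation $\delta=1/2$, requiring $\exp\paren{-\Ex[Z]/(8\cost(\Ins))}\le 1/4$ together with the regime $\cost(\Ins)<\opt(\Ins)/(8\alpha)$ and $\Ex[Z]=p\cdot\opt_{\mathrm{LP}}(\Ins)$ forces the integrality gap to be at most roughly $3\ln n$ (at most roughly $7\ln n$ even after optimizing $\delta$), whereas the generic scaled randomized-rounding bound for truncated covering ILPs gives a constant more like $8$ in front of $\ln n$ without substantial extra work. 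Since $p=4\ln n/\alpha$, the threshold $\opt(\Ins)/(8\alpha)$ and the success probability $3/4$ are fixed by the statement, this bookkeeping is not cosmetic; as sketched, the proof does not go through. There is also a scope problem: the paper applies this lemma to covering ILPs with \emph{binary} variables (Lemma~\ref{lem:ILP-lower}), where dropping the box constraints makes the LP value arbitrarily far below the integer optimum, and keeping them introduces a dual term $-\mathbf{1}\cdot z^*$ that does not shrink under constraint sampling, so even a repaired LP argument would not serve the lemma's use downstream.

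For comparison, the paper's proof is entirely combinatorial and avoids duality, integrality gaps, and any assumption on the entries or the variable domain: it assumes the conclusion fails with probability more than $1/4$, considers $32\alpha$ independent samplings so that w.h.p.\ at least $4\alpha$ of them have optimum below $\opt(\Ins)/(8\alpha)$, shows via a Bayes/coupling argument (using $\opt(\RIns)\le\opt(\RIns')+\cost(j)$) that even conditioned on a sampled instance being that cheap, each fixed constraint is present with probability at least $\ln n/(2\alpha)$, concludes that the $4\alpha$ cheap instances jointly contain every constraint w.h.p., and then takes the coordinate-wise maximum of their solutions to obtain a feasible solution to $\Ins$ of value at most $\opt(\Ins)/2$, a contradiction. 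If you want to salvage your route, you would need to (i) switch to the truncated LP and re-derive both dual facts there, (ii) prove an integrality-gap bound with an explicit constant small enough to clear the computation above, and (iii) restrict the claim to unbounded integer variables or handle box constraints separately; the paper's argument needs none of this.
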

\begin{proof} 
  Suppose by contradiction that the lemma statement is false and throughout the proof let $\Ins$ be any instance where w.p. at least $1/4$, $\opt(\RIns) + \cost(\Ins)  < {\opt(\Ins) \over 8\alpha} $ 
  (we denote this event by $\event_1(\RIns)$, or shortly $\event_1$). We
  will show that in this case, $\Ins$ has a feasible solution with a value smaller than $\opt(\Ins)$.  To continue,
  define $\event_2(\RIns)$ (or $\event_2$ in short) as the event that $\opt(\RIns) < {\opt(\Ins) \over 8\alpha}$. Note
  that whenever $\event_1$ happens, then $\event_2$ also happens,  hence $\event_2$ happens
  w.p. at least $1/4$.

  For the sake of analysis, suppose we repeat, for $32\alpha$ times, the procedure of sampling each constraint of $\Ins$ independently with
  probability $p$, and obtain $32\alpha$ covering ILP instances $S:=\set{\RIns^1, \ldots, \RIns^{32\alpha}}$. Since $\event_2$ happens with
  probability at least $1/4$ on each instance $\RIns$, the expected number of times that $\event_2$ happens for instances in $S$ is at least
  $8\alpha > 12\ln{n}$. Hence, by the Chernoff bound, with probability at least $1 - 1/n$, $\event_2$ happens on at least $4\alpha$ of
  instances in $S$. Let $T \subseteq S$ be a set of $4\alpha$ sampled instances for which $\event_2$ happens.  In the following, we show that if
  $\Ins$ has the property that $\Pr(\event_1(\Ins_R)) \geq 1/4$, then w.p. at least $1 - 1/n$, every constraint in $\Ins$ appears in at
  least one of the instances in $T$. Since each of these $4\alpha$ instances admits a solution of value at most
  ${\opt(\Ins) \over 8 \alpha}$ (by the definition of $\event_2$), the ``max'' of their solutions, i.e., the vector obtained by setting the
  $i$-th entry to be the largest value of ${x}_i$ among all these solutions, gives a feasible solution to $\Ins$ with value at most
  ${4\alpha\cdot {\opt(\Ins) \over 8\alpha} = {\opt(\Ins) \over 2}}$; a contradiction.
  
  We use ``$j \in \RIns$'' to denote the event that the constraint $j$ of $\Ins$ is
  sampled in $\RIns$, and we need to show that w.h.p. for all $j$, there exists an instance
  $\RIns \in T$ where $j \in \RIns$. We establish the following claim.
    
    \begin{claim}\label{clm:high-prob}
   	For any $j \in [n]$, $\Pr\Paren{j \in \RIns \mid \event_2(\RIns)} \geq {\ln{n} \over 2\alpha}$.
    \end{claim}
    Before proving Claim~\ref{clm:high-prob}, we show how this claim would imply the
    lemma.  By Claim~\ref{clm:high-prob}, for each of the $4\alpha$ instances
    $\RIns \in T$, and for any $j \in [n]$, the probability that the constraint $j$ is sampled in $\RIns$ is
    at least ${\ln{n} \over 2\alpha}$.  Then, the probability that $j$ is sampled in none
    of the $4\alpha$ instances of $T$ is at most:
    \begin{align*}
      \Paren{1-{\ln{n} \over 2\alpha}}^{4\alpha} \leq \exp(-2\ln{n}) = \frac{1}{n^2}
    \end{align*}
    Hence, by union bound, w.p. at least $1 - 1/n$, every constraint appears in at least
    one of the instances in $T$, and this will complete the proof. It remains to
    prove Claim~\ref{clm:high-prob}.

   \begin{proof}[Proof of Claim~\ref{clm:high-prob}]
    Fix any  $j \in [n]$; by Bayes rule, 
    \begin{align*}
      \Pr\Paren{j \in \RIns \mid \event_2(\RIns) } = {\Pr\Paren{\event_2(\RIns)  \mid j \in \RIns} \cdot \PR{j \in
          \RIns} \over \Pr\Paren{\event_2 (\RIns)}}
    \end{align*}
    Since $\Pr\Paren{\event_2 (\RIns) } \leq 1$ and $\PR{j \in \RIns} = p = {4\ln{n} \over \alpha}$, we have, 
    \begin{align}
      \Pr\Paren{j \in \RIns \mid \event_2(\RIns)} \geq \Pr\Paren{\event_2(\RIns)  \mid j \in \RIns} \cdot \frac{4\ln{n}}{\alpha} \label{eq:event-2}
    \end{align}
    and it suffices to establish a lower bound of $1/8$ for
    $\Pr\Paren{\event_2(\RIns) \mid j \in \RIns}$.

    Consider the following probabilistic process (for a fixed $j \in [n]$): we first remove the constraint $j$ from
    $\Ins$ (w.p.  $1$) and then sample each of the remaining constraints of $\Ins$
    w.p. $p$. Let $\RIns'$ be an instance created by this process.  We prove
    $\PR{\event_2(\RIns) \mid j \in \RIns} \ge 1/8$  in two steps by
    first showing that the probability that $\event_1$ happens to $\RIns'$ (i.e.,
    $\PR{\event_1(\RIns')}$) is at least $1/8$, and then use a coupling argument to prove that
    $\PR{\event_2(\RIns) \mid j \in \RIns} \geq \PR{\event_1(\RIns')}$.

    We first show that $\PR{\event_1(\RIns')}$ (which by definition is the
    probability that $ \opt(\RIns') + \cost(\Ins) \leq {\opt(\Ins) \over 16\alpha} $) is
    at least $1/8$. To see this, note that the probability that $\event_1$ happens to
    $\RIns'$ is equal to the probability that $\event_1$ happens to $\RIns$ conditioned on
    $j$  not being sampled (i.e.,
    $\PR{\event_1(\RIns) \mid j \notin \RIns}$).  Now, if we expand
    $\PR{\event_1(\RIns)}$,
    \begin{align*}
    \Pr\Paren{\event_1(\RIns)} &= \PR{j \in \RIns}\Pr\Paren{\event_1(\RIns) \mid j
      \in \RIns} + \PR{j \notin \RIns} \Pr\Paren{\event_1(\RIns) \mid j \notin \RIns} \\
      &\leq \PR{j \in \RIns} + \Pr\Paren{\event_1(\RIns) \mid j \notin \RIns} = p + \Pr\Paren{\event_1(\RIns')} 
    \end{align*}
    As $\PR{\event_1(\RIns)} \geq 1/4$ and $p = {4\ln{n} \over \alpha} \leq 1/8$ (since $\alpha \ge 32\ln{n}$), we have,
    \[
    1/4 \le 1/8 + \Pr\Paren{\event_1(\RIns')}
    \]
    and therefore, $\PR{\event_1(\RIns')} \ge 1/8$.

    It remains to show that
    $\PR{\event_2(\RIns)\mid j \in \RIns} \ge \PR{\event_1(\RIns')}$.  To see this, note
    that conditioned on $j \in \RIns$, the distribution of sampling all constraints other
    than $j$ is exactly the same as the distribution of $\RIns'$.  Therefore, for any
    instance $\RIns'$ drawn from this distribution, there is a unique instance $\RIns$
    sampled from the original constraint sampling distribution \emph{conditioned} on
    $j \in \RIns$.  For any such ($\RIns'$, $\RIns$) pair, we have
    $\opt(\RIns) \leq \opt(\RIns') + \cost(j)$ $(\leq \opt(\RIns') + \cost(\Ins))$ since
    satisfying the constraint $j \in \RIns$ requires increasing the value of the objective function in
    $\RIns'$ by at most $\cost(j)$. Therefore if
    $\opt(\RIns') + \cost(\Ins) \le {\opt \over 8\alpha}$ (i.e., $\event_1$ happens to
    $\RIns'$), then $\opt(\RIns) \le {\opt \over 8\alpha}$ (i.e., $\event_2$ happens to
    $\RIns$ conditioned on $j \in \RIns$).  Hence,
    \begin{align*}
      \Pr\Paren{\event_2(\RIns) \mid j \in \RIns}
      \geq \Pr\Paren{ \event_1(\RIns')} \geq 1/8
    \end{align*}
    
    Plugging in this bound in Equation~(\ref{eq:event-2}), we obtain that
    $\PR{j \in \RIns \mid \event_2} \geq {\ln{n} \over 2\alpha}$.
  \end{proof}
\end{proof}

\subsection{An $\alpha$-estimation of Covering ILPs in the Streaming Setting}\label{sec:ilp-est}

We now prove Theorem~\ref{thm:general}.  Throughout this section, for simplicity of exposition, we assume that $\alpha \geq 32\ln{n}$
(otherwise the space bound in Theorem~\ref{thm:general} is enough to store the whole input and solve the problem optimally), the value of
$\cmax$ is provided to the algorithm, and $x$ is a vector of \emph{binary} variables, i.e., $x \in \set{0,1}^{m}$ (hence covering ILP
instances are always referring to covering ILP instances with binary variables); in Section~\ref{sec:remarks}, we describe how to eliminate
the later two assumptions.

\paragraph{Algorithm overview.} For any covering ILP instance $\Ins := \CILP(A,b,c)$, our algorithm
estimates $\opt:= \opt(\Ins)$ in two parts running in parallel. In the first part, the
goal is simply to compute $\cost(\Ins)$ (see Claim~\ref{clm:cost-calculator}). For the
second part, we design a tester algorithm (henceforth, $\testerILP$) that given any
``guess'' $k$ of the value of $\opt$, if $k \ge \opt$, $\testerILP$ accepts $k$ w.p. $1$
and for any $k$ where $\cost(\Ins) \le k \le {\opt \over 32\alpha}$, w.h.p. $\testerILP$
rejects $k$.

Let $K:= \set{2^\gamma}_{\gamma \in [\ceil{\log(m\cmax)}]}$; for each $k \in K$ (in parallel), we run $\testerILP(k)$. At the end of the stream, 
the algorithm knows $\cost(\Ins)$ (using the output of the part one), and hence it can identify among
all guesses that are \emph{at least} $\cost(\Ins)$, the smallest guess accepted by
$\testerILP$ (denoted by $k^*$). On one hand, $k^* \le \opt$ since for any guess
$k \ge \opt$, $k \ge \cost(\Ins)$ also (since $\opt \ge \cost(\Ins)$) and  $\testerILP$ accepts
$k$. On the other hand, $k^* \ge {\opt \over 32\alpha}$ w.h.p. since $(i)$ if
$\cost(\Ins) \ge {\opt \over 32\alpha}$, $k^* \ge \cost(\Ins) \ge {\opt \over 32 \alpha}$
and $(ii)$ if $\cost(\Ins) < {\opt \over 32\alpha}$, the guess ${\opt \over 32 \alpha}$ will be
rejected w.h.p. by $\testerILP$.  Consequently, $32\alpha \cdot k^*$ is an $O(\alpha)$-estimation of
$\opt(\Ins)$.

We first show that one can compute the $\cost$ of a covering ILP presented in a stream using a
simple \emph{dynamic programming} algorithm. 
\begin{claim}\label{clm:cost-calculator}
  For any $\Ins:= \CILP(A,b,c)$ presented by a stream of columns, $\cost(\Ins)$ can be
  computed in space $O(n\bmax\log{\cmax})$ bits.
\end{claim}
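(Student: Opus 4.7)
The plan is to decouple the $n$ constraints and run a one-dimensional $0/1$ knapsack dynamic program for each one in parallel. For every constraint $j \in [n]$ I will maintain a table $f_j[\ell]$ indexed by $\ell \in \{0,1,\ldots,b_j\}$ with the invariant that, after having processed some prefix of the column stream, $f_j[\ell]$ equals the minimum of $c \cdot x$ over binary vectors $x$ supported on the columns seen so far that satisfy $\sum_i a_{j,i}\,x_i \ge \ell$. Initially $f_j[0] = 0$ and $f_j[\ell] = +\infty$ for all $\ell \ge 1$. At the end of the stream the algorithm simply returns $\max_{j \in [n]} f_j[b_j]$, which by definition equals $\max_j \cost(j) = \cost(\Ins)$.

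When a column $i$ arrives with coefficients $(a_{1,i},\ldots,a_{n,i})$ and weight $c_i$, I update each of the $n$ tables independently. For a fixed $j$, I iterate $\ell$ from $b_j$ down to $0$ and set
\[
  f_j[\ell] \;\leftarrow\; \min\bigl(f_j[\ell],\; f_j[\max(0,\, \ell - a_{j,i})] + c_i\bigr),
\]
capping the index at $b_j$ since values of $\ell$ beyond $b_j$ are irrelevant to $\cost(j)$. Iterating from high to low (the standard $0/1$ knapsack trick) ensures that every column is consumed by each table at most once, faithfully enforcing the binary constraint $x \in \{0,1\}^m$. Correctness of the invariant then follows by a routine induction on the stream: after a new column, either it is not used (the old value) or it is used exactly once on top of some feasible choice achieving coverage $\max(0, \ell - a_{j,i})$ with the previous columns.

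For the space, the tables have $\sum_{j} (b_j+1) \le n(\bmax+1)$ entries in total. Any finite entry of $f_j$ is the weight of some subset of columns covering at most $b_j \le \bmax$ units of constraint $j$, and by always preferring a single cheapest column over any redundant extension I may cap stored values at $\bmax \cdot \cmax$; consequently, each entry occupies $O(\log \cmax)$ bits (absorbing the $\log \bmax$ factor into the standard bit-complexity accounting, since $\bmax$ is $\poly(n)$-bounded). Multiplying gives the claimed $O(n \bmax \log \cmax)$ bits. The only point requiring any care is to argue that the capped, index-truncated update rule still correctly computes $\cost(j)$; this is immediate because the true optimum uses at most $b_j$ units of coverage (adding more never decreases cost), so truncation at $\ell = b_j$ is harmless. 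No step of the argument is a serious obstacle: the content is entirely the observation that $\cost(j)$ is a separable $0/1$ min-cost covering knapsack that admits a column-by-column streaming DP.
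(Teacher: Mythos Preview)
Your proposal is correct and is essentially the same as the paper's proof: the paper also maintains, for each constraint $j$, a DP array $\cost_j[y]$ (with $\cost_j[y]=0$ for $y\le 0$) and, upon arrival of $\pair{A_i}{c_i}$, updates $\cost_j[y]\leftarrow\min(\cost_j[y],\cost_j[y-a_{j,i}]+c_i)$ for $y$ from $b_j$ down to $1$, returning $\max_j \cost_j[b_j]$. Your write-up is somewhat more explicit about the high-to-low iteration enforcing the binary constraint and about capping stored values (your bound $f_j[b_j]\le b_j\cdot\cmax$ is justified since an optimal solution for $\cost(j)$ uses at most $b_j$ columns with positive integer coefficient), but the algorithm and analysis are the same.
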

\begin{proof}
  We maintain arrays $\cost_j[y] \leftarrow +\infty$ for any $j \in [n]$ and $y \in [\bmax]$: $\cost_j[y]$ is the minimum cost for achieving
  a coverage of $y$ for the $j$-th constraint. Define $\cost_j[y] = 0$ for $y \leq 0$. Upon arrival of $\pair{A_i}{c_i}$, compute $    \cost_j[y] = \min(\cost_j[y],\cost_j[y-a_{j,i}]+c_i)$,
  for every $y$ from $b_j$ down to $1$. We then have $\cost(\Ins)= \max_{j} \cost_j[b_j]$. 
\end{proof}

To continue, we need the following notation. For any vector $v$ with dimension $d$ and any
set $S \subseteq [d]$, $v(S)$ denotes the projection of $v$ onto the dimensions indexed by $S$.  For any two vectors $u$ and $v$, let
$\min(u,v)$ denote a vector $w$ where at the $i$-th dimension: $w_i = \min(u_i, v_i)$, i.e., the \emph{coordinate-wise minimum}.
 We now provide the aforementioned $\testerILP$ algorithm. 

\textbox{$\testerILP(k)$: \textnormal{An algorithm for testing a guess $k$ of the optimal value
    of a covering ILP.}}{
\medskip \\ 
\textbf{Input:} An instance $\Ins := \CILP(A,b,c)$ presented as a stream
$\pair{A_1}{c_1}$, $\ldots$, $\pair{A_m}{c_m}$, a parameter $\alpha \geq 32\ln{n}$, and a guess $k\in K$. \\
\textbf{Output:} \accept
if $k \geq \opt$ and \reject if $\cost(\Ins) \le k \leq {\opt \over 32\alpha}$. The answer could be
  either \accept or \reject if ${\opt \over 32\alpha} < k < \opt$.  \\ 
\algline
\begin{enumerate}
\item \emph{Preprocessing:} 
\begin{enumerate}[(i)] 
\item Maintain an $n$-dimensional vector $\bres \leftarrow b$, an $m$-dimensional vector $\cT \leftarrow 0^{m}$, and an $n \times m$ dimensional matrix $\AT \leftarrow 0^{n \times m}$. 
\item Let $V$ be a subset of $[n]$ obtained by sampling each element in $[n]$ independently with probability $p:= 4\ln{n}/\alpha$.
 \end{enumerate}
\item \emph{Streaming:} when a pair $\pair{A_i}{c_i}$ arrives:
  \begin{enumerate}[(i)]
  \item If $c_i > k$, directly continue to the next input pair of the stream. Otherwise: 
  \item \emph{Prune step}: Let $u_i := \min(\bres, A_i)$ (the coordinate-wise minimum). If $\lOne{u_i} \geq { n \cdot \bmax
    \over \alpha}$, update $\bres \leftarrow \bres - u_i$ (we say $\pair{A_i}{c_i}$ is \emph{pruned} by $\testerILP$ in this case). Otherwise, assign $\AT_i \leftarrow u_i(V)$,
    and $\cT_i \leftarrow c_i$.   \end{enumerate}
\item At the end of the stream, solve the following covering ILP (denoted by $\Ins_{tester}$):
  \begin{align*}
  \text{min~}  \cT \cdot  x
  \text{~~s.t.~~~} \AT x \geq \bres(V)
\end{align*}
  If $\opt(\Ins_{tester})$ is at most $k$, \accept; otherwise \reject.
\end{enumerate} 
}

We first make the following observation. In the prune step of $\testerILP$, if we replace $\AT_i \leftarrow u_i(V)$ by $\AT_i \leftarrow A_i(V)$,
the solution of the resulting covering ILP instance (denoted by $\Ins'_{tester}$) has the property that
$\opt(\Ins'_{tester}) = \opt(\Ins_{tester})$ (we use $\Ins_{tester}$ only to control the space requirement). To see this, let $\bres^i$
denotes the content of the vector $\bres$ when $\pair{A_i}{c_i}$ arrives.  By construction, $(u_i)_j := \min((\bres^i)_j, a_{j,i})$, and
hence if $(u_i)_j \neq a_{j,i}$, then both $(u_i)_j$ and $a_{j,i}$ are at least $(\bres^i)_j$, which is at least $(\bres)_j$ (since every
dimension of $\bres$ is monotonically decreasing).  However, for any \emph{integer} program $\CILP(A,b,c)$, changing any entry $a_{j,i}$ of
$A$ between two values that are at least $b_j$ does not change the optimal value, and hence $\opt(\Ins'_{tester}) = \opt(\Ins_{tester})$. To
simplify the proof, in the following, when concerning $\opt(\Ins_{tester})$, we redefine $\Ins_{tester}$ to be $\Ins'_{tester}$.

We now prove the correctness of $\testerILP$ in the following two lemmas. 

\begin{lemma}\label{lem:ILP-upper}
  For any guess $k \ge \opt$, $\Pr\Paren{\testerILP(k) = \accept} = 1$.
\end{lemma}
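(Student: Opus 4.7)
The plan is to show the stronger deterministic claim that $\opt(\Ins_{tester}) \leq k$ whenever $k \geq \opt$, so the probability is trivially $1$. I will work with the redefined version of $\Ins_{tester}$ (the one using $\AT_i = A_i(V)$ instead of $u_i(V)$), which the excerpt already argues has the same optimum; this is the more convenient form because the bound $u_i \leq A_i$ goes the wrong way for a direct feasibility argument on the original matrix.

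Given $k \geq \opt$, I would take an optimal binary solution $x^*$ of $\Ins$ and construct a candidate $\tilde{x}$ for $\Ins_{tester}$ by setting $\tilde{x}_i := x^*_i$ on columns that were neither skipped ($c_i > k$) nor pruned, and $\tilde{x}_i := 0$ otherwise. First I note that because $x^* \in \{0,1\}^m$ and $c \cdot x^* = \opt \leq k$, any column with $c_i > k$ must satisfy $x^*_i = 0$; consequently the objective value of $\tilde{x}$ in $\Ins_{tester}$ is at most $\cT \cdot \tilde x = \sum_{i \notin P,\, c_i \leq k} c_i x^*_i \leq c \cdot x^* = \opt \leq k$, so the cost bound is immediate.

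The main step is feasibility. For each $j \in V$, since $x^*_i = 0$ on skipped columns,
\[
(\AT \tilde{x})_j = \sum_{i \notin P,\, c_i \leq k}(A_i)_j\, x^*_i = \sum_{i} (A_i)_j x^*_i - \sum_{i \in P}(A_i)_j x^*_i \geq b_j - \sum_{i \in P}(A_i)_j,
\]
so it suffices to show $\sum_{i \in P}(A_i)_j \leq \sum_{i \in P}(u_i)_j = b_j - \bres_j$ whenever $\bres_j > 0$; when $\bres_j = 0$ the constraint is trivially satisfied. The key observation, which is the heart of the argument, is a monotonicity property of the pruning dynamics: if at any pruning step $i \in P$ we have $(u_i)_j < (A_i)_j$, then by definition of $u_i = \min(\bres^i, A_i)$ we must have $(u_i)_j = (\bres^i)_j$, which forces $(\bres)_j$ to drop to $0$ immediately after that update and to remain $0$ thereafter (because all subsequent $u$'s are nonnegative). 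Contrapositively, $\bres_j > 0$ at the end of the stream implies $(u_i)_j = (A_i)_j$ for every pruned column $i$, and hence $\sum_{i \in P}(u_i)_j = \sum_{i \in P}(A_i)_j$, which is exactly the inequality we need.

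I do not anticipate a genuine obstacle here: the argument is purely structural and deterministic, so the probability-$1$ acceptance follows without invoking the random sampling of $V$ at all. The only subtlety is the need to work with the $A_i(V)$ formulation of $\Ins_{tester}$ rather than the $u_i(V)$ one actually maintained by the algorithm, and to carefully handle the three types of columns (skipped, pruned, retained) separately in the cost and feasibility accountings.
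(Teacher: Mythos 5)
Your proposal is correct and follows essentially the same route as the paper: exhibit the (suitably restricted) optimal solution $x^*$ of $\Ins$ as a feasible solution of the redefined instance $\Ins'_{tester}$ with objective value $\cT\cdot x^* \le c\cdot x^* \le \opt \le k$, handling each sampled constraint $j \in V$ via $\bres_j = \max\Paren{b_j - \sum_{i \in P} a_{j,i},\,0}$. Your monotonicity argument for the pruning dynamics and the observation that $x^*_i = 0$ on columns with $c_i > k$ simply make explicit two facts the paper asserts "by construction," so this is a more careful write-up of the same proof rather than a different one.
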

\begin{proof}
  Fix any optimal solution $x^*$ of $\Ins$; we will show that $x^*$ is a feasible solution
  for $\Ins_{tester}$, and since by the construction of $\cT$, we have
  $\cT \cdot x^* \le c \cdot x^* \le \opt$, this will show that
  $\opt(\Ins_{tester}) \leq \opt \leq k$ and hence $\testerILP(k) = \accept$.

  Fix a constraint $j$ in $\Ins_{tester}$ as follows: 
  \[
    \sum_{i \in [m]} \aT_{j,i} x_i \ge \bres(V)_j
  \]
  If $j \notin V$, $\bres(V)_j =0$ and the constraint is trivially satisfied for any solution $x^*$.  Suppose $j \in V$ and let $P$ denote the set of (indices of) pairs
  that are pruned.  By construction of the $\testerILP$, $\bres(V)_j = \max(b_j - \sum_{i \in P} a_{j,i}, 0)$. If $\bres(V)_j = 0$, again the
  constraint is trivially satisfied. Suppose $\bres(V)_j = b_j - \sum_{i \in P} a_{j,i}$. The constraint $j$ can be written as
  \[
    \sum_{i} \aT_{j,i} x_i \ge b_j - \sum_{i \in P} a_{j,i}
  \]
  By construction of the tester, $\aT_{j,i} = 0$ for all $i$ that are pruned and otherwise $\aT_{j,i} = a_{j,i}$. Hence, we can further
  write the constraint $j$ as
  \[
    \sum_{i \notin P} a_{j,i} x_i \ge b_j - \sum_{i \in P} a_{j,i}
  \]
  Now, since $x^*$ satisfies the constraint $j$ in $\Ins$,
  \begin{align*}
        \sum_{i \in [m]} a_{j,i} x^*_i &\ge b_j \\ 
    \sum_{i \notin P} a_{j,i} x^* &\ge b_j - \sum_{i \in P} a_{j,i} x^*_i \\
    &\ge b_j - \sum_{i \in P} a_{j,i}  \tag{$x^*_i \le 1$}
  \end{align*}
  and the constraint $j$ is satisfied.  Therefore, $x^*$ is a feasible solution of
  $\Ins_{tester}$; this completes the proof.
\end{proof}

We now show that $\testerILP$ will reject guesses that are smaller than
${\opt \over 32\alpha}$. We will only prove that the rejection happens with probability
$3/4$; however, the probability of error can be reduced to any $\delta < 1$ by running $O(\log{1/\delta})$ parallel
instances of the $\testerILP$ and for each guess, \reject if any one of the instances
outputs \reject and otherwise \accept. In our case $\delta = O(\card{K}^{-1})$ so we can apply union bound for all different guesses. 

\begin{lemma}\label{lem:ILP-lower}
  For any guess $k$ where $\cost(\Ins) \le k < {\opt \over 32\alpha}$,
  $\Pr\Paren{\testerILP(k) = \reject} \geq 3/4$.
\end{lemma}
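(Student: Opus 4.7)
The plan is to show $\opt(\Ins_{tester}) > k$ with probability at least $3/4$, which forces $\testerILP(k)$ to reject. I would introduce an auxiliary covering ILP $\Ins^\dagger$ defined over only the non-pruned columns with $c_i \leq k$, with right-hand side equal to the residual vector $\bres$ at the end of the stream. The key observation is that $\Ins_{tester}$ coincides, up to extra all-zero columns that do not affect the optimum, with the constraint-sampled version of $\Ins^\dagger$ where the sampling set is $V$. Thus it suffices to establish a lower bound on $\opt(\Ins^\dagger)$ and an upper bound on $\cost(\Ins^\dagger)$, then invoke the Constraint Sampling Lemma (Lemma~\ref{lem:sub-sample-save-opt}).

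For the lower bound on $\opt(\Ins^\dagger)$, I first bound the number $|P|$ of pruned columns. Each pruning reduces $\lOne{\bres}$ by at least $n\bmax/\alpha$ while $\lOne{\bres}$ starts at most $n\bmax$, so $|P| \leq \alpha$. Then I show $\opt(\Ins^\dagger) \geq \opt - \alpha k$ by the following reasoning: for any feasible $y$ for $\Ins^\dagger$, the augmented vector $y + \mathbf{1}_P$ is feasible for $\Ins$ because the pruned columns more than make up for the gap $b_j - \bres_j = \sum_{i \in P}(u_i)_j \leq \sum_{i \in P} a_{j,i}$, and since every pruned column has $c_i \leq k$, the cost increase is at most $\alpha k$.

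The more delicate step, which I expect to be the main obstacle, is showing $\cost(\Ins^\dagger) \leq k$ (and as a byproduct, that $\Ins^\dagger$ is feasible). The subtlety is that by restricting to non-pruned columns we could in principle have thrown away the cheap cover of some constraint. The key technical observation is that for any constraint $j$ with $\bres_j > 0$, no truncation can have occurred on row $j$ during pruning---otherwise $(\bres)_j$ would have dropped to $0$ and stayed there---and therefore $(u_i)_j = a_{j,i}$ for every $i \in P$. Given a single-constraint witness $z$ for $\cost_\Ins(j) \leq k$ (which necessarily satisfies $z_i = 0$ whenever $c_i > k$), restricting $z$ to non-pruned columns yields a vector still feasible for constraint $j$ in $\Ins^\dagger$, since the mass dropped satisfies $\sum_{i \in P} a_{j,i} z_i \leq \sum_{i \in P} a_{j,i} = \sum_{i \in P}(u_i)_j = b_j - \bres_j$, and its cost is at most that of $z$, namely $k$.

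Combining these bounds with the Constraint Sampling Lemma applied to $\Ins^\dagger$ (whose sampling rate $p = 4\ln n / \alpha$ matches the distribution of $V$), with probability at least $3/4$,
\[
\opt(\Ins_{tester}) \geq \frac{\opt(\Ins^\dagger)}{8\alpha} - \cost(\Ins^\dagger) \geq \frac{\opt - \alpha k}{8\alpha} - k = \frac{\opt}{8\alpha} - \frac{9k}{8}.
\]
The hypothesis $k < \opt/(32\alpha)$ gives $\opt/(8\alpha) > 4k$, yielding $\opt(\Ins_{tester}) > 4k - 9k/8 > k$. Hence $\testerILP(k)$ rejects with probability at least $3/4$.
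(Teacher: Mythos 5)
Your proof is correct and takes essentially the same route as the paper: the paper introduces the same residual instance (its $\Ins'$, i.e.\ your $\Ins^\dagger$), applies the Constraint Sampling Lemma to it, and bounds the pruned columns by $\alpha$ with total weight at most $\alpha k$ (phrased there as $\opt(\Ins')\ge \opt(\Ins)/2$), so your arithmetic is just a cosmetic variant. Your explicit verification that $\cost(\Ins^\dagger)\le k$ via the observation that no truncation occurs on rows with positive final residual fills in a step the paper only asserts (as $\cost(\Ins')\le\cost(\Ins)$), which is a refinement of, not a departure from, the paper's argument.
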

\begin{proof}
  By construction of $\testerILP(k)$, we need to prove that $\Pr\Paren{\opt(\Ins_{tester}) > k} \geq 3/4$.  Define the following
  covering ILP $\Ins'$:
    \begin{align*}
  \text{min~}  \cT \cdot  x
  \text{~~s.t.~~~} \AH x \geq \bres(V)
\end{align*}
where $\AH_i = A_i$ if $\pair{A_i}{c_i}$ is not pruned by \testerILP, and $\AH_i = 0^n$ otherwise. In $\testerILP(k)$, for each pair
$\pair{A_i}{c_i}$ that is not pruned, instead of storing the entire vector $A_i$, we store the projection of $A_i$ onto dimensions indexed
by $V$ (which is the definition of $\AT_i$ in $\Ins_{tester}$).  This is equivalent to performing constraint sampling on $\Ins'$ with a
sampling rate of $p = 4\ln{n}/\alpha$.  Therefore, by Lemma~\ref{lem:sub-sample-save-opt}, with probability at least $3/4$,
$\opt(\Ins_{tester}) + \cost(\Ins') \ge {\opt(\Ins') \over 8\alpha}$.  Since $\cost(\Ins') \le \cost(\Ins) \le k < {\opt(\Ins) \over 32\alpha}$, this
implies that
\begin{align*}
\opt(\Ins_{tester}) \ge {\opt(\Ins') \over 8\alpha} - \cost(\Ins') > {\opt(\Ins') \over 8\alpha} - {\opt(\Ins) \over 32\alpha}.  
\end{align*}
Therefore, we only need to show that $\opt(\Ins') \ge {\opt(\Ins) \over 2}$ since then
$\opt(\Ins_{tester}) > {\opt(\Ins) \over 16\alpha}  - {\opt(\Ins) \over 32\alpha} = {\opt(\Ins) \over 32\alpha} > k$ and $\testerILP$ will reject $k$.

To show that $\opt(\Ins') \ge {\opt(\Ins) \over 2}$, we first note that for any optimal
solution $x^*$ of $\Ins'$, if we further set $x^*_i = 1$ for any pair $\pair{A_i}{c_i}$
that are pruned, the resulting $x^*_i$ is a feasible solution for $\Ins$. Therefore, if we
show that the total weight of the $\pair{A_i}{c_i}$ pairs that are pruned is
at most ${\opt \over 2}$, $\opt(\Ins')$ must be at least ${\opt \over 2}$ or we will have
a solution for $\Ins$ better than $\opt(\Ins)$.

To see that the total weight of the pruned pairs is at most $\opt/2$, since only pairs
with $c_i \le k$ ($\le {\opt \over 32\alpha}$) will be considered, we only need to show
that at most $16\alpha$ pairs can be pruned. By the construction of the prune step, each
pruned pair reduces the $\ell_1$-norm of the vector $b_{res}$ by an additive factor of at
least ${n \bmax \over \alpha}$. Since $b_{res}$ is initialized to be $b$ and
$\lOne{b} \le n\bmax$, at most $\alpha$ ($\le 16\alpha$) pairs can be pruned.  This
completes the proof.
\end{proof}

We now finalize the proof of Theorem~\ref{thm:general}.

\begin{proof}[Proof of Theorem~\ref{thm:general}]
We run the algorithm described in the beginning of this section. The correctness of the
algorithm follows from Claim~\ref{clm:cost-calculator} and Lemmas~\ref{lem:ILP-upper}
and~\ref{lem:ILP-lower}.  We now analyze the space complexity of this algorithm. We need
to run the algorithm in Claim~\ref{clm:cost-calculator} to compute $\cost(\Ins)$, which
require $\Ot(n\bmax)$ space.  We also need to run $\testerILP$ for $O(\log{(m\cdot\cmax)})$
different guesses of $k$. 

In $\testerILP(k)$, we need $O(n\log{\bmax})$ bits to store the
vector $\bres$ and $O(m\log{\cmax})$ bits to maintain the vector $\cT$. Finally, the matrix
$\AT$ requires $O(mn\bmax/\alpha \cdot (\log{n}/\alpha) \cdot (\log{\amax}\log{n}))$ bits
to store. This is because each column $\AT_i$ of $\AT$ is either $0^n$ or $u_i(V)$ where
$\lOne{u_i} < {n\cdot \bmax \over \alpha}$. Since
$\lOne{u_i} < {n\cdot \bmax \over \alpha}$, there are at most
${n\cdot \bmax \over \alpha}$ non-zero entries in $u_i$. Therefore, after projecting $u_i$
to $V$ (to obtain $\AT_i$) in expectation the number of non-zero entries in $\AT_i$ is at most $\Ot(n\bmax/\alpha^2)$. 
Using the Chernoff bound w.h.p at most $O({n\bmax \over \alpha^2})$ non-zero entries of $u_i$
remain in each $\AT_i$, where each entry needs $O(\log{\amax}\log{n})$ bits to store. Note that the space complexity of the algorithm can be made \emph{deterministic} by simply terminating the execution when at least one 
set $\AT_i$ has $(c \cdot \frac{n\bmax}{\alpha^2})$ non-zero entries (for a sufficiently large constant $c > 1$); 
as this event happens with $o(1)$ probability, the error probability of the algorithm increases only by $o(1)$.  Finally, as all entries
in $(A,b,c)$ are $\poly(n)$-bounded, the total space requirement of the algorithm is
$\Ot((mn/\alpha^2)\cdot\bmax + m + n\bmax)$.
\end{proof}

We also make the following remark about $\alpha$-approximating covering ILPs. 

\begin{remark}
  The simple algorithm described in Section~\ref{sec:techniques} for
  $\alpha$-approximating set cover can also be extended to obtain an
  \emph{$\alpha$-approximation algorithm} for covering ILPs in space $\Ot(mn\bmax/\alpha)$: Group
  the columns by the weights and merge every $\alpha$ sets for each group independently.
\end{remark}

\subsection{Further Remarks}\label{sec:remarks}
We now briefly describe how to eliminate the assumptions that variables are binary and $\cmax$
is given to the algorithm.

\ssection{Binary variables versus integer variables.} We point out that any algorithm
that solves covering ILP with \emph{binary} variables in the streaming setting, can also
be used to solve covering ILP with \emph{non-negative integer} variables (or shortly,
integer variables), while increasing the number of columns by a factor of $O(\log \bmax)$.

To see this, given any covering ILP instance $\CILP(A,b,c)$ (with integer variables), we
replace each $\pair{A_i}{c_i}$ pair with $O(\log{\bmax})$ pairs (or columns) where the
$j$-th pair ($j \in [\ceil{\log{\bmax}}]$) is defined to be
$\pair{2^{j-1} A_i}{2^{j-1}c_i}$. Every combination of the corresponding $j$ binary
variables maps to a unique binary representation of the variable $x_i$ in $\CILP(A,b,c)$
for $x_i = O(\bmax)$.  Since no variable in $\CILP(A,b,c)$ needs to be more than $\bmax$,
the correctness of this reduction follows.

\ssection{Knowing $\cmax$ versus not knowing $\cmax$.} As we pointed out earlier, the
assumption that the value of $\cmax$ is provided to the algorithm is only for simplicity;
here we briefly describe how to eliminate this assumption. Our algorithm uses $\cmax$ to
determine the range of $\opt$, which determines the set of guesses $K$ that $\testerILP$
needs to run. If $\cmax$ is not provided, one natural fix is to use the same randomness
(i.e., the same set $V$) for all testers, and whenever a larger $c_i$ arrives, create a
tester for each new guess by duplicating the state of $\testerILP$ for the current largest
guess and continue from there (the correctness is provided by the design of our
$\testerILP$).

However, if we use this approach, since we do not know the total number of guesses
upfront, we cannot boost the probability of success to \emph{ensure} that w.h.p. no tester
fails (if $\cmax$ is $\poly(n)$-bounded, the number of guesses is
$O(\log(n\cmax)) = O(\log{n})$, $O(\log\log{n})$ parallel copies suffice, and though we do
not know the value of the constant, $\log{n}$ copies always suffice; but this will not be
the case for general $\cmax$). To address this issue, we point out that it suffices to
ensure that the copy of $\testerILP$ that runs a specific guess $k'$ succeed w.h.p., where
$k'$ is the largest power of $2$ with $k' \le {\opt \over 32\alpha}$. To see this, we
change the algorithm to pick the largest rejected guess $k^*$ and return $64\alpha k^*$
(previously, it picks the smallest accepted guess $k$ and returns $32 \alpha k$), if $k'$
is correctly rejected by the tester, $k^* \ge k' \ge {\opt \over 64 \alpha}$.  On the
other hand, for any guess no less than $\opt$, $\testerILP$ accepts it w.p. $1$, and hence
$k^* \le \opt$.  Therefore, as long as $\testerILP$ rejects $k'$ correctly, the output of
the algorithm is always an $O(\alpha)$-estimation and hence we do not need the knowledge of $\cmax$ upfront.

\ssection{The set multi-cover problem.}  For the set multi-cover problem,
Theorem~\ref{thm:general} gives an $\alpha$-estimation algorithm using space
$\Ot({mn \bmax \over \alpha^2} + m + n \bmax)$; here, $\bmax$ is the maximum \emph{demand}
of the elements.  However, we remark that if sets are unweighted, a better space requirement of
$\Ot({mn \over \alpha^2} + m + n)$ is achievable for this problem: Instead of running the
tester for multiple guesses, just run it once with the following changes.  In the prune
step, change from ``$\lOne{u_i} \ge {n \bmax \over \alpha}$'' to
``$\lOne{u_i} \ge {n \over \alpha}$''; and output
$est := \#_{pruned} + 8\alpha (\opt(\Ins_{tester}) + \bmax)$, where $\#_{pruned}$ is the
number of sets that are pruned. One one hand, $est \ge opt$ w.h.p since $\#_{pruned}$ plus the optimal
solution of the residual covering ILP (denoted by $\opt_{res}$) is a feasible solution of
$\Ins$, and by Lemma~\ref{lem:sub-sample-save-opt}, w.h.p.
$8\alpha (\opt(\Ins_{tester}) + \bmax) \ge \opt_{res}$.  One the other hand, $est = O(\alpha \cdot \opt )$ since
$\#_{pruned} \le \alpha \bmax \le \alpha \cdot \opt$ ($\bmax$ sets is needed even for covering one
element); $\opt(\Ins_{tester}) \le \opt$ and $\bmax \le \opt$, which implies
$8\alpha (\opt(\Ins_{tester}) + \bmax) \le 8\alpha(\opt + \opt) = O(\alpha \cdot \opt)$.

\section{An $\Omega(mn/\alpha^2)$-Space Lower Bound for $\alpha$-Estimate Set Cover}\label{sec:rand-lower}

Our algorithm in Theorem~\ref{thm:general} suggests that there exists at least a factor $\alpha$ gap on the space requirement of $\alpha$-approximation and $\alpha$-estimation algorithms for the
set cover problem. We now show that this gap is the best possible. In other words, the space complexity of our algorithm in Theorem~\ref{thm:general} for the original set cover problem is tight (up to logarithmic factors) 
\emph{even for random arrival streams}. Formally,

\begin{theorem}\label{thm:rand-lower}
  Let $\SS$ be a collection of $m$ subsets of $[n]$ presented one by one in a \emph{random
    order}. For any $\alpha = o(\sqrt{\frac{n}{\log{n}}})$ and any $m = \poly(n)$, any
  \emph{randomized} algorithm that makes a single pass over $\SS$ and outputs an
  $\alpha$-estimation of the set cover problem with probability $0.9$ (over the randomness of both the
  stream order and the algorithm) must use ${\Omgt}(\frac{mn}{\alpha^2})$ bits of space.
\end{theorem}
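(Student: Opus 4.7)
The plan is to prove this lower bound by establishing a one-way communication complexity lower bound in the setting where the $m$ sets are randomly partitioned between Alice (who receives roughly the first half of the stream) and Bob (who receives the rest), following the template of Chakrabarti-Cormode-McGregor~\cite{ChakrabartiCM08} as advertised in Section~\ref{sec:techniques}. As with Theorem~\ref{thm:find-lower}, the argument will go through an information-complexity direct sum that reduces the problem to many independent copies of a simpler primitive; this time the natural primitive is a variant of Set Disjointness rather than \Trap, since we only need to estimate $\opt$ rather than produce a certified cover.

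\textbf{Step 1 (Hard distribution $\diste$).} I would partition the universe $[n]$ into $m/\alpha$ disjoint blocks of size $\Theta(n/\alpha)$ each, and associate with the $j$-th block a group of $\Theta(\alpha)$ sets. Inside each group the sets will encode an instance of a Unique-Set-Disjointness-type problem on a universe of $\Theta(n/\alpha)$ elements, together with one ``anchor'' element per block, so that the $j$-th block contributes $\Theta(1)$ to the optimum if its inner instance is YES (sets intersect nicely, allowing one cover set to absorb the block) and $\Theta(\alpha)$ otherwise. By taking a mixture over instances with different fractions of YES blocks, distinguishing a set-cover value near $\Theta(m/\alpha)$ from one near $\Theta(m)$ becomes an $\alpha$-estimation task, and thus forces the algorithm to solve a constant fraction of the inner Disjointness copies correctly.

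\textbf{Step 2 (Direct sum).} Mimicking the argument of Lemma~\ref{lem:sc-trap}, I would show
\[
\IC{\setCoverEst}{\diste}{\delta}\;\ge\;\frac{m}{\alpha}\cdot\IC{\textsf{UDISJ}_{n/\alpha}}{\dist_{\mathrm{UDISJ}}}{\delta+o(1)},
\]
embedding a single hard Disjointness copy into a random block using public randomness and completing the other blocks with private randomness; the chain rule together with Claim~\ref{clm:info-increase} gives the factor-$m/\alpha$ savings. Combined with the standard $\Omega(n/\alpha)$ lower bound on the one-way information complexity of Unique-Set-Disjointness, this yields $\Omgt(mn/\alpha^2)$ in the adversarial-partition setting.

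\textbf{Step 3 (Lift to random arrival).} This is the main obstacle, since under random arrival Alice and Bob cannot control which sets in a block end up on which side — each set independently lands in the prefix or the suffix. The plan is to design each block to be \emph{symmetric} in the sense that the inner Disjointness instance is itself naturally distributed between two sub-collections of roughly $\alpha/2$ sets apiece, one intended for each player; under a uniformly random partition of the $\Theta(\alpha)$ sets of a block, a Chernoff argument shows that with constant probability the induced split is ``balanced enough'' that the block still realizes a hard two-party Disjointness instance. One then averages the information-cost analysis of Step 2 over the random partition: since the partition $\boldsymbol{\pi}$ is independent of the hidden inputs, Claim~\ref{clm:public-random} and Claim~\ref{clm:it-facts}-(\ref{part:info-event}) allow conditioning on $\boldsymbol{\pi}$ without loss, so the per-block $\Omega(n/\alpha)$ information cost survives up to a constant factor from the balanced-partition event. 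Applying Proposition~\ref{prop:cc-ic} and converting one-way communication to single-pass streaming space then gives the claimed $\Omgt(mn/\alpha^2)$ bound.
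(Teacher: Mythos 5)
Your overall template (one-way communication lower bound, information-complexity direct sum, then a lift to random partitions) matches the paper's in spirit, but the concrete construction in Steps 1--2 has a genuine flaw. First, the parameters do not fit: $m/\alpha$ \emph{disjoint} blocks of size $\Theta(n/\alpha)$ require $mn/\alpha^2 \leq n$, i.e.\ $m = O(\alpha^2)$, whereas the theorem must hold for every $m = \poly(n)$; worse, for $m \gg n$ an optimum value of $\Theta(m)$ (or even $\Theta(m/\alpha)$) is unattainable, since a minimal cover of $[n]$ never uses more than $n$ sets. Second, even in the feasible regime the per-copy embedding of Step 2 breaks down: if a single hard Disjointness copy is hidden in one block and the other $m/\alpha - 1$ blocks are completed by private randomness, those blocks still each need at least one set, so the optimum carries a background of $\Theta(m/\alpha)$ that swamps the $\Theta(1)$-versus-$\Theta(\alpha)$ signal of the embedded copy unless $m = O(\alpha^2)$; and the alternative ``a mixture over YES fractions forces the algorithm to solve a constant fraction of copies'' is not a valid direct-sum statement as written, since an $\alpha$-estimate of the total only distinguishes the extreme mixtures. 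The paper avoids this by giving Bob a \emph{single} set $T$ covering all of $[n]$ except a block $\bar{T}$ of size $\alpha\log m$, so that $\opt$ is $2$ when $\bar{T} \subseteq S_\istar$ and exceeds $2\alpha$ otherwise (Lemma~\ref{lem:theta-random}); there is no background term, and the $\Omgt(mn/\alpha^2)$ bound comes from a Fano/entropy count over the $t = \Omgt(n/\alpha^2)$ blocks of each set's random $(k,t)$-partition, summed over all $m$ sets (Lemmas~\ref{lem:info-cost-bound} and~\ref{lem:theta-to-s}), rather than from a black-box reduction to Unique-Set-Disjointness.

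Step 3 is also underdeveloped at the point where random-partition lower bounds are actually hard: if the sets realizing a hard copy can land on either side, ``balanced enough with constant probability'' does not by itself turn the copy into a two-party one-way instance with inputs on the prescribed sides, and you give no mechanism for the players to simulate the missing halves. The paper's resolution exploits a structural feature of $\diste$ that your construction lacks: Alice's sets are i.i.d.\ uniform over $\FC$ and independent of Bob's input except through $S_\istar$, so in the distribution $\dister$ Bob can simply \emph{resample} fresh random sets for the indices assigned to him, Alice discards hers, and the reduction conditions on the probability-$1/4$ event that $S_\istar$ stays with Alice and $T$ with Bob (outputting $2$ otherwise); this costs $3/8$ in error, which is why the theorem assumes success probability $0.9$ and Lemma~\ref{lem:dist-P} requires $\delta < 1/8$. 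To make your plan work you would need both a hard distribution without the background-optimum problem (valid for all $m = \poly(n)$) and a resampling- or symmetry-based argument of this kind for the random partition, at which point you have essentially reconstructed the paper's proof.
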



Fix a (sufficiently large) value for $n$, $m = \poly(n)$, and $\alpha = o(\sqrt{\frac{n}{\log{n}}})$;
throughout this section, \setCoverEst refers to the problem of $\alpha$-estimating the set cover problem with $m+1$ sets (see footnote~\ref{footnote:m+1}) defined over the universe $[n]$ in the one-way communication model, whereby the sets are partitioned between Alice and Bob.

\paragraph{Overview.} We start by introducing a hard distribution $\diste$ for \setCoverEst in the spirit of the distribution $\dista$ in Section~\ref{sec:lower-find}. However, 
since in \setCoverEst the goal is only to estimate the \emph{size} of the optimal cover, ``hiding'' one single element (as was done in $\dista$) is not enough for the lower bound. Here, instead of
trying to hide a single element, we give Bob a ``block'' of elements and his goal would be to decide whether this block appeared in a single set of Alice as a whole or was it 
partitioned across many different sets\footnote{The actual set given to Bob is the complement of this block; hence the optimal set cover size varies significantly between the two cases.}. 
Similar to $\dista$, distribution $\diste$ is also not a product distribution; however, we introduce a way of decomposing $\diste$ into a convex combination
of \emph{product distributions} and then exploit the simplicity of product distributions to prove the lower bound. 

Nevertheless, the distribution $\diste$ is still ``adversarial'' and hence is not suitable for proving the lower bound for random arrival streams. Therefore, we define
an extension to the original hard distribution as $\dister$ which \emph{randomly} partitions the sets of distribution $\diste$ between Alice and Bob. We prove a lower 
bound for this distribution using a reduction from protocols over $\diste$. Finally, we show how an algorithm for set cover over random arrival streams would be able to solve
instances of \setCoverEst over $\dister$ and establish Theorem~\ref{thm:rand-lower}.

\subsection{A Hard Input Distribution for \setCoverEst} \label{sec:sc-hard-dist}

Consider the following distribution $\diste$ for \setCoverEst. 

\textbox{Distribution $\diste$. \textnormal{A hard input distribution for \setCoverEst.}}{
	\medskip \\ 
        \textbf{Notation.} Let $\FC$ be the collection of all subsets of $[n]$ with
        cardinality $\frac{n}{10\alpha}$.
\begin{itemize}
\item \textbf{Alice.} The input of Alice is a collection of $m$ sets $\SA =
  \paren{S_1,\ldots,S_m}$, where for any $i \in [m]$, $S_i$ is a set chosen independently and uniformly at random from $\FC$.
\item \textbf{Bob.} Pick $\theta \in \set{0,1}$ and $\istar \in [m]$ independently and
  uniformly at random; the input of Bob is a single set $T$ defined as follows.
  \begin{itemize}
  \item If $\theta = 0$, then $\bar{T}$ is a set of size $\alpha \log{m}$ chosen uniformly
    at random from $S_\istar$.\footnote{
    Since $\alpha = o(\sqrt{n/\log{n}})$ and $m = \poly(n)$, the size of
    $\bar{T}$ is strictly smaller than the size of $S_\istar$.}
  \item If $\theta = 1$, then $\bar{T}$ is a set of size $\alpha \log{m}$ chosen uniformly
    at random from $[n] \setminus S_\istar$.
  \end{itemize}
\end{itemize}
}

Recall that $\opt(\SA,T)$ denotes the \emph{set cover size} of the
input instance $(\SA,T)$. We first establish the following lemma regarding the parameter $\theta$ and $\opt(\SA,T)$
in the distribution $\diste$.

\begin{lemma}\label{lem:theta-random}
	For $(\SA,T) \sim \diste$: 
	\begin{enumerate}[(i)]
		\item \label{part:theta0} $\PR{\opt(\SA,T) = 2 \mid \theta=0} = 1$. 
		\item \label{part:theta1} $\PR{\opt(\SA,T) > 2\alpha \mid \theta = 1} = 1-o(1)$. 
	\end{enumerate}
\end{lemma}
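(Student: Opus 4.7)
For part~(\ref{part:theta0}), the argument is direct: when $\theta = 0$, the sampling rule forces $\bar{T} \subseteq S_\istar$, so $T \cup S_\istar = [n]$ and hence $\{T, S_\istar\}$ is a valid cover, giving $\opt(\SA,T) \le 2$. For the matching lower bound, $T \ne [n]$ since $\bar{T}$ is non-empty, and every set $S_i \in \SA$ has cardinality $n/(10\alpha) < n$, so no single set covers the universe; hence $\opt(\SA,T) \ge 2$.

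For part~(\ref{part:theta1}), the plan is to rule out every cover of size at most $2\alpha$ by splitting into two cases depending on whether $T$ participates in the cover. When $\theta = 1$, we have $\bar{T} \cap S_\istar = \emptyset$, so $\bar{T}$ is a uniformly random subset of $[n] \setminus S_\istar$ of cardinality $\alpha \log m$.

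\emph{Case 1: the cover uses $T$.} Then the remaining elements to cover are exactly the $\alpha \log m$ elements of $\bar{T}$. I will show that with probability $1 - o(1)$, every set $S \in \SA$ satisfies $|S \cap \bar{T}| \le (\log m)/3$. For $S_\istar$ itself this is immediate since $S_\istar \cap \bar{T} = \emptyset$. For each $S_i \in \SAistar$, the set $S_i$ is drawn uniformly from $\FC$ independently of the pair $(S_\istar, \bar{T})$, so $\Ex[|S_i \cap \bar{T}|] = |\bar{T}|/(10\alpha) = (\log m)/10$ and the element-inclusion indicators are negatively correlated (exactly as in Lemma~\ref{lem:cover-size}). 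Applying the extended Chernoff bound gives $\Pr\Paren{|S_i \cap \bar{T}| \ge (\log m)/3} = o(1/m)$, and a union bound over the $m$ sets in $\SA$ yields the claim. Consequently, any cover of the form $\{T\} \cup \mathcal{C}$ with $|\mathcal{C}| \le 2\alpha - 1$ can cover at most $(2\alpha - 1)\cdot(\log m)/3 < \alpha \log m$ elements of $\bar{T}$ and is therefore infeasible.

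\emph{Case 2: the cover does not use $T$.} Each set in $\SA$ has cardinality $n/(10\alpha)$, so any collection of at most $2\alpha$ sets from $\SA$ covers at most $n/5 < n$ elements, ruling out a cover of size at most $2\alpha$ in this case as well. Combining both cases gives $\opt(\SA,T) > 2\alpha$ with probability $1-o(1)$. The only non-trivial ingredient is the tail bound in Case~1, but it is a direct adaptation of Lemma~\ref{lem:cover-size} (with $\bar{T}$ of size $\alpha \log m$ in place of the set $E$ of size $2\alpha \log m$ considered there), so I do not anticipate any new technical obstacle.
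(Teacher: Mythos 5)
Your part~(i) and your Case~2 are fine (and slightly more careful than the paper, which treats (i) as immediate). The genuine gap is the probabilistic claim driving Case~1. You assert that with probability $1-o(1)$ every $S_i \in \SAistar$ satisfies $\card{S_i \cap \bar{T}} \le (\log m)/3$, via a per-set tail bound of $o(1/m)$. But $\card{S_i \cap \bar{T}}$ has mean only $(\log m)/10$, and $(\log m)/3$ is a mere constant factor $10/3$ above this mean; the (extended) Chernoff bound gives a tail of the form $\exp\paren{-\tfrac{\log m}{10}\paren{\tfrac{10}{3}\ln\tfrac{10}{3}-\tfrac{7}{3}}}=m^{-c}$ for a constant $c<1$, not $o(1/m)$, so the union bound over $m$ sets does not close. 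Worse, the intermediate event itself is false w.h.p.: since the sets $S_i$ are independent given $\bar{T}$, the expected number of sets with overlap at least $(\log m)/3$ is $m^{1-c}\gg 1$, and a second-moment argument shows such sets exist with probability $1-o(1)$. Nor can you rescue the route by raising the threshold: your counting step needs the per-set threshold to be at most about $\alpha\log m/(2\alpha-1)\approx (\log m)/2$, while a constant-factor deviation above a mean of $(\log m)/10$ can only have probability $o(1/m)$ once the threshold is pushed to roughly $\log m$; indeed the maximum overlap over the $m$ sets concentrates around $c'\log m$ for a constant $c'>1/2$. So a "no single set covers many elements of $\bar{T}$" argument with these parameters cannot prove part~(ii).

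The paper avoids this by aggregating the deviation over an entire candidate cover rather than set by set: fix any collection $\Salpha$ of $2\alpha$ sets from $\SA\setminus\set{S_\istar}$ and let $\bX$ count all (set, element) incidences between $\Salpha$ and $\bar{T}$. Then $\Ex[\bX]=2\alpha\cdot\alpha\log m\cdot\tfrac{1}{10\alpha}=\alpha\log m/5$, while covering $\bar{T}$ forces $\bX\ge\card{\bar{T}}=\alpha\log m=5\,\Ex[\bX]$. A factor-$5$ deviation of a quantity whose mean is $\Theta(\alpha\log m)$ has probability $\exp\paren{-\Omega(\alpha\log m)}$, which is small enough to survive a union bound over the ${m\choose 2\alpha}\le m^{2\alpha}$ collections; this factor of $\alpha$ in the exponent is exactly what your per-set bound never produces. (Note also that the case split on whether $T$ is used is unnecessary: since $T\cap\bar{T}=\emptyset$ and, for $\theta=1$, $S_\istar\cap\bar{T}=\emptyset$, any cover of size at most $2\alpha+1$ must cover $\bar{T}$ using at most $2\alpha$ sets of $\SA\setminus\set{S_\istar}$, which is the single event the paper bounds.)
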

\begin{proof}
  Part (\ref{part:theta0}) is immediate since by construction, when $\theta = 0$, $T
  \union S_\istar = [n]$. We now prove part (\ref{part:theta1}).
	
  Since a valid set cover must cover $\bar{T}$, it suffices for us to show that w.h.p. no
  $2\alpha$ sets from $\SA$ can cover $\bar{T}$. By construction, neither $T$ nor
  $S_\istar$ contains any element in $\bar{T}$, hence $\bar{T}$ must be covered by at most
  $2\alpha$ sets in $\SA \setminus \set{S_\istar}$. 
	
  Fix a collection $\Salpha$ of $2\alpha$ sets in $\SA \setminus \set{S_\istar}$; we first
  analyze the probability that $\Salpha$ covers $\bar{T}$ and then take union bound over
  all choices of $2\alpha$ sets from $\SA \setminus \set{S_\istar}$.  Note that according to the distribution $\diste$, the sets in
  $\Salpha$ are drawn independent of $\bar{T}$. Fix any choice of $\bar{T}$; for each
  element $k \in \bar{T}$, and for each set $S_j \in\Salpha$, define an indicator random
  variable $\bX^{j}_k \in \set{0,1}$, where $\bX^{j}_k=1$ iff $k \in S_j$.  Let $\bX :=
  \sum_{j}\sum_{k} \bX^{j}_k$ and notice that:
	\[
	\Ex[\bX] = \sum_{j}\sum_{k} \Ex[\bX^{j}_k] = (2\alpha) \cdot (\alpha\log{m}) \cdot (\frac{1}{10\alpha}) = \alpha\log{m}/5
	\]
	We have,
	\[
		\PR{\text{$\Salpha$ covers $\bar{T}$}} \leq \PR{\bX \geq \alpha\log{m}} = \PR{\bX \geq 5\Ex[\bX]} \leq \exp\paren{-3\alpha\log{m}}
	\]
	where the last equality uses the fact that $\bX^{j}_k$ variables are negatively
        correlated (which can be proven analogous to Lemma~\ref{lem:cover-size}) and applies the extended Chernoff bound (see Section~\ref{sec:prelim}).  Finally, by union bound,
	\begin{align*}
		\Pr(\opt(\SA,T) \leq 2\alpha) &\leq \PR{\exists~\Salpha \text{ covers } \bar{T}} \leq {m \choose 2\alpha} \cdot \exp\paren{-3\alpha\log{m}} \\
		&\leq \exp\paren{2\alpha\cdot\log{m}-3\alpha\log{m}}= o(1)
	\end{align*}
\end{proof}

Notice that distribution $\diste$ is not a product distribution due to the correlation
between the input given to Alice and Bob. However, we can express the distribution as a
convex combination of a relatively small set of product distributions; this significantly simplifies the 
proof of the lower bound. To do so, we need the following definition.  For
integers $k,t$ and $n$, a collection $P$ of $t$ subsets of $[n]$ is called a \emph{random
  $(k,t)$-partition} iff the $t$ sets in $P$ are constructed as follows: Pick $k$ elements
from $[n]$, denoted by $S$, uniformly at random, and partition $S$ randomly into $t$ sets
of equal size.  We refer to each set in $P$ as a \emph{block}.

\textbox{An alternative definition of the distribution \diste.}{
\begin{enumerate}
	\item[] \textbf{Parameters:} $~~~~~~~~~~k = \frac{n}{5\alpha}~~~~~~~~~~p = \alpha\log{m}~~~~~~~~~~t = k/p$
	\item For any $i \in [m]$, let $P_i$ be a random $(k,t)$-partition in $[n]$ (chosen independently). 
	\item The input to Alice is $\SA = \paren{S_1,\ldots,S_m}$, where each $S_i$ is created by picking $t/2$ blocks from $P_i$ uniformly at random. 
	\item The input to Bob is a set $T$ where $\bar{T}$ is created by first picking an
          $\istar \in [m]$ uniformly at random, and then picking a block from $P_\istar$
          uniformly at random.
\end{enumerate}
}
To see that the two formulations of the distribution $\diste$ are indeed equivalent, notice that $(i)$ the input given to Alice in the new formulation is a collection of sets of size $n/10\alpha$ chosen 
independently and uniformly at random (by the independence of $P_i$'s), and $(ii)$ the complement of the set given to Bob is a set of size $\alpha\log{m}$ which, for $\istar \in_R [m]$, with probability half, is chosen uniformly at 
random from $S_\istar$, and with probability half, is chosen from $[n] \setminus S_{\istar}$ (by the randomness in the choice of each block in $P_\istar$). 


\newcommand{\bTbar}{\ensuremath{\mathbf{\bar{T}}}\xspace}

Fix any $\delta$-error protocol $\protsc$ ($\delta < 1/2$) for \setCoverEst on the
distribution $\diste$. Recall that $\bprotsc$ denotes the random variable for the concatenation of the message of Alice with the 
public randomness used in the protocol $\protsc$. We further use $\bPP := (\bP_1,\ldots,\bP_t)$ to denote the random partitions $(P_1,\ldots,P_{t})$, $\bI$ for the choice of the special index $\istar$, and $\bT$ 
for the parameter $\theta \in \set{0,1}$, whereby $\theta = 0$ iff $\bar{T} \subseteq S_\istar$.


We make the following simple observations about the distribution $\diste$. The proofs are straightforward. 

\begin{remark}\label{rem:dist} In the distribution $\diste$, 
	\begin{enumerate}
		\item \label{p1} The random variables $\bSS$, $\bPP$, and $\bprotsc(\bSS)$ are all
                  independent of the random variable $\bI$.
		\item \label{p2} For any $i \in [m]$, conditioned on $\bP_i = P$, and $\bI = i$, the
                  random variables $\bS_i$ and $\bTbar$ are independent of each
                  other. Moreover, $\supp{\bS_i}$ and $\supp{\bTbar}$ contain,
                  respectively, ${t \choose \frac{t}{2}}$ and $t$ elements and both
                  $\bS_i$ and $\bTbar$ are uniform over their support.
		\item \label{p3} For any $i \in [m]$, the random variable $\bS_i$ is
                  independent of both $\bSS^{-i}$ and $\bPP^{-i}$. 
	\end{enumerate}
\end{remark}

\subsection{The Lower Bound for the Distribution $\diste$}\label{sec:rand-lb}

Our goal in this section is to lower bound $\ICost{\protsc}{\diste}$ and ultimately $\norm{\protsc}$. We start by simplifying the expression 
for $\ICost{\protsc}{\diste}$. 

\begin{lemma}\label{lem:info-cost-bound}
	$\ICost{\protsc}{\diste} \geq \sum_{i=1}^{m} I(\bprotsc ; \bS_i \mid \bP_i)$
\end{lemma}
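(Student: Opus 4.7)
The plan is to introduce the auxiliary random variable $\bPP$ (the collection of underlying partitions) as a bridge between $\bSS$ and the individual $\bS_i$'s. The key point is that $\bPP$ generates $\bSS$ in a clean conditionally-independent way: given $\bPP$, each $\bS_i$ is determined only by $\bP_i$ and an independent choice of $t/2$ blocks.

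First, I would show that $\ICost{\protsc}{\diste} = I(\bprotsc;\bSS) \geq I(\bprotsc;\bSS \mid \bPP)$. Expanding two ways by the chain rule,
\[
I(\bprotsc;\bSS,\bPP) = I(\bprotsc;\bSS) + I(\bprotsc;\bPP \mid \bSS) = I(\bprotsc;\bPP) + I(\bprotsc;\bSS \mid \bPP),
\]
so it suffices to argue $I(\bprotsc;\bPP \mid \bSS) = 0$. This holds because $\bprotsc$ consists of Alice's message together with the public randomness, and Alice's message is a deterministic function of $\bSS$ together with coins independent of $\bPP$; thus $\bprotsc \perp \bPP \mid \bSS$.

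Second, I would apply the chain rule to $I(\bprotsc;\bSS \mid \bPP) = \sum_i I(\bprotsc;\bS_i \mid \bSS^{<i},\bPP)$, and then drop the $\bSS^{<i}$ conditioning using Claim~\ref{clm:info-increase}. The required independence $\bS_i \perp \bSS^{<i} \mid \bPP$ follows from the alternative definition of $\diste$: conditioned on all partitions $\bPP$, the sets $\bS_1,\ldots,\bS_m$ are obtained by independent block-selections, so in particular $\bS_i$ is independent of $\bSS^{<i}$ given $\bPP$. This yields $I(\bprotsc;\bS_i \mid \bSS^{<i},\bPP) \geq I(\bprotsc;\bS_i \mid \bPP)$.

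Third, I would shrink the conditioning from $\bPP$ down to $\bP_i$. Since $\bS_i$ is a function of $\bP_i$ and independent randomness alone, $\bS_i \perp \bPP^{-i} \mid \bP_i$. A second application of Claim~\ref{clm:info-increase} (with $\bA=\bS_i$, $\bB=\bprotsc$, $\bC=\bP_i$, $\bD=\bPP^{-i}$) gives $I(\bprotsc;\bS_i \mid \bP_i) \leq I(\bprotsc;\bS_i \mid \bP_i,\bPP^{-i}) = I(\bprotsc;\bS_i \mid \bPP)$. Chaining the three inequalities produces the desired bound. The main subtlety, and the only step where one must be careful, is justifying $I(\bprotsc;\bPP \mid \bSS) = 0$; once that is in hand, the rest of the argument is routine manipulation via Claim~\ref{clm:info-increase} together with the independence structure guaranteed by Remark~\ref{rem:dist}.
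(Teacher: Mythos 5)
Your proposal is correct and takes essentially the same route as the paper: lower bound $I(\bprotsc;\bSS)$ by $I(\bprotsc;\bSS \mid \bPP)$ using the conditional independence of $\bprotsc$ and $\bPP$ given $\bSS$, expand by the chain rule over $i$, and use the independence structure of Remark~\ref{rem:dist} to reduce the conditioning from $(\bPP,\bSS^{<i})$ down to $\bP_i$. The only difference is bookkeeping—you carry out that reduction via two applications of Claim~\ref{clm:info-increase}, whereas the paper does it in one step by comparing the corresponding conditional entropies—so the arguments are equivalent.
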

\begin{proof}
	We have,
	\begin{align*}
		\ICost{\protsc}{\diste}&= I(\bprotsc; \bSS) \geq I(\bprotsc ; \bSS \mid \bPP) 
	\end{align*}
	where the inequality holds since $(i)$ $H(\bprotsc) \geq H(\bprotsc \mid \bPP)$ and $(ii)$ $H(\bprotsc \mid \bSS) = H(\bprotsc \mid \bSS,\bPP)$ as $\bprotsc$ is independent of $\bPP$ conditioned on $\bSS$.
	We now bound the conditional mutual information term in the above equation.  
	\begin{align*}
		 I(\bprotsc ; \bSS \mid \bPP) &= \sum_{i=1}^{m} I(\bS_i ; \bprotsc \mid \bPP,\bSS^{<i}) \tag{the chain rule for the mutual information, Claim~\ref{clm:it-facts}-(\ref{part:chain-rule})} \\
		&= \sum_{i=1}^{m} H(\bS_i \mid \bPP,\bSS^{<i}) - H(\bS_i \mid \bprotsc,\bPP,\bSS^{<i}) \\
		&\geq \sum_{i=1}^{m} H(\bS_i \mid \bP_i) - H(\bS_i \mid \bprotsc,  \bP_i) \\
                 & = \sum_{i=1}^{m} I(\bS_i ; \bprotsc \mid \bP_i)
	\end{align*}
	The inequality holds since:
	\begin{enumerate}[(i)]
	\item $H(\bS_i \mid \bP_i) = H(\bS_i \mid \bP_i,\bPP^{-i}, \bSS^{<i}) = H(\bS_i \mid \bPP,\bSS^{<i})$ 
	because conditioned on $\bP_i$, $\bS_i$ is independent of $\bPP^{-i}$ and $\bSS^{<i}$ (Remark~\ref{rem:dist}-(\ref{p3})), hence the equality holds by Claim~\ref{clm:it-facts}-(\ref{part:cond-reduce}).
	\item
        $H(\bS_i \mid \bprotsc,\bP_i) \geq H(\bS_i \mid \bprotsc, \bP_i, \bPP^{-i},\bSS^{<i}) =
        H(\bS_i \mid \bprotsc, \bPP,\bSS^{<i})$ since conditioning reduces the entropy, i.e.,
        Claim~\ref{clm:it-facts}-(\ref{part:cond-reduce}).
	\end{enumerate}	
\end{proof}

Equipped with Lemma~\ref{lem:info-cost-bound}, we only need to bound  $\sum_{i \in [m]} I(\bprotsc ; \bS_i \mid \bP_i)$. Note that, 
\begin{align}
	\sum_{i=1}^{m} I(\bprotsc ; \bS_i \mid \bP_i) = \sum_{i=1}^{m} H(\bS_i \mid \bP_i) - \sum_{i=1}^{m} H(\bS_i \mid \bprotsc , \bP_i) \label{eq:minfo-bound}
\end{align}
Furthermore, for each $i \in [m]$, $\card{\supp{\bS_i \mid \bP_i}} = {t \choose \frac{t}{2}}$ and $\bS_i$ is uniform over its support (Remark~\ref{rem:dist}-(\ref{p2})); hence, by Claim~\ref{clm:it-facts}-(\ref{part:uniform}),
\begin{align}
	\sum_{i=1}^{m} H(\bS_i \mid  \bP_i) = \sum_{i=1}^{m} \log{t \choose \frac{t}{2}} = m\cdot \log{\paren{2^{t - \Theta(\log{t})}}} = m\cdot t - \Theta(m\log{t}) \label{eq:ent-bound}
\end{align} 

Consequently, we only need to bound $\sum_{i=1}^{m} H(\bS_i \mid \bprotsc, \bP_i)$. In order to do so, we show that $\protsc$ can be used to estimate the value of
 the parameter $\theta$, and hence we only need to establish a lower bound for the problem of estimating $\theta$.

\begin{lemma}\label{lem:sc-to-theta}
  Any $\delta$-error protocol $\protsc$ over the distribution $\diste$ can be used to
  determine the value of $\theta$ with error probability $\delta + o(1)$.
\end{lemma}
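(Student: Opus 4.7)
The plan is to show that a natural threshold test on the output of $\protsc$ recovers $\theta$. The key observation is that Lemma~\ref{lem:theta-random} creates a large multiplicative gap in $\opt(\SA,T)$ depending on $\theta$: when $\theta=0$, $\opt=2$ deterministically, and when $\theta=1$, $\opt > 2\alpha$ with probability $1-o(1)$. Since $\protsc$ is an $\alpha$-estimation protocol, its output $\widehat{\opt}$ lies in $[\opt, \alpha \cdot \opt]$ with probability at least $1-\delta$. The two ranges $[2, 2\alpha]$ (when $\theta = 0$) and $(2\alpha, \infty)$ (when $\theta = 1$) are essentially disjoint, so the threshold $2\alpha$ separates them.

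Concretely, I would define a predictor as follows: run $\protsc(\bSS, T)$ to obtain an estimate $\widehat{\opt}$, and output $\theta' = 0$ if $\widehat{\opt} \le 2\alpha$ and $\theta' = 1$ otherwise. For the error analysis, I would bound $\Pr(\theta' \neq \theta)$ by conditioning on $\theta$:
\begin{itemize}
\item If $\theta = 0$, then by Lemma~\ref{lem:theta-random}-(\ref{part:theta0}), $\opt(\SA,T) = 2$, so any $\alpha$-estimator outputs a value in $[2, 2\alpha]$ (except with probability at most $\delta$), and hence $\theta' = 0$ with probability at least $1-\delta$.
\item If $\theta = 1$, by Lemma~\ref{lem:theta-random}-(\ref{part:theta1}), $\opt(\SA,T) > 2\alpha$ with probability $1 - o(1)$. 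Conditioned on this event, the estimate $\widehat{\opt} \ge \opt > 2\alpha$ (except with probability at most $\delta$), so $\theta' = 1$ with probability at least $1-\delta - o(1)$.
\end{itemize}
Averaging over $\theta \in_R \set{0,1}$ gives overall error probability at most $\delta + o(1)$.

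There is no real obstacle here; this is essentially a direct consequence of the structural gap established in Lemma~\ref{lem:theta-random} combined with the definition of $\alpha$-estimation. The only minor subtlety is ensuring that the $o(1)$ term from the ``$\theta=1$'' case of Lemma~\ref{lem:theta-random} is absorbed correctly and does not interact badly with the protocol's error probability, but since the two error events are additive via a union bound, this is immediate. The lemma then slots into Equation~(\ref{eq:minfo-bound}) by lower-bounding $H(\bS_i \mid \bprotsc, \bP_i)$ via the Fano-type inequality (Claim~\ref{clm:fano}) applied to $\bT$, using the fact that $\bT$ is a function of $\bS_i$ and $\bTbar$ given $\bP_i$ and that $\bprotsc$ predicts $\bT$ with small error.
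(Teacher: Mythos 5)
Your proposal is correct and is essentially the paper's own argument: Bob applies a threshold test at $2\alpha$ to the $\alpha$-estimate computed from Alice's message, and the error bound $\delta + o(1)$ follows by a union bound from the gap established in Lemma~\ref{lem:theta-random} (the paper phrases the threshold as ``less than $2\alpha$'' rather than ``at most $2\alpha$,'' an immaterial difference). Nothing further is needed.
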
 
\begin{proof}
  Alice sends the message $\protsc(\SS)$ as before. Using this message, Bob can compute an
  $\alpha$-estimation of the set cover problem using $\protsc(\SS)$ and his input. 
  If the estimation is less than $2\alpha$, we output $\theta = 0$ and otherwise we output $\theta=1$. The
  bound on the error probability follows from Lemma~\ref{lem:theta-random}.
\end{proof}

Before continuing, we make the following remark which would be useful in the next section. 

\begin{remark}\label{rem:known-i}
  We assume that in \setCoverEst over the distribution $\diste$, Bob is additionally provided with the special index $\istar$. 
\end{remark}

Note that this assumption can only make our lower bound stronger since Bob can always ignore this information and solves the original \setCoverEst. 

Let $\beta$ be the function that estimates $\theta$ used in Lemma~\ref{lem:sc-to-theta}; the input to $\beta$ is the message given from Alice, the public coins used by the players, 
the set $\bar{T}$, and (by Remark~\ref{rem:known-i}) the special index $\istar$. We have,
\begin{align*}
  \Pr(\beta(\bprotsc,\bTbar,\bI) \neq \bT) \leq \delta+o(1)
\end{align*}

Hence, by Fano's inequality (Claim~\ref{clm:fano}),
\begin{align}
	H_2(\delta+o(1)) &\geq H(\bT \mid \bprotsc,\bTbar, \bI) \notag \\
	&= \Ex_{i \sim \bI} \Bracket{H(\bT \mid \bprotsc,\bTbar, \bI = i)} \notag \\
	&= \frac{1}{m} \sum_{i=1}^{m} H(\bT \mid \bprotsc,\bTbar, \bI = i) \label{eq:i-sum}
\end{align}

We now show that each term above is lower bounded by $H(\bS_i \mid \bprotsc,\bP_i) / t$ and hence we obtain the desired upper bound on $H(\bS_i \mid \bprotsc,\bP_i)$ in Equation~(\ref{eq:minfo-bound}).

\begin{lemma}\label{lem:theta-to-s}
  For any $i \in [m]$, $H(\bT \mid \bprotsc , \bTbar , \bI = i) \geq  H(\bS_i \mid \bprotsc,\bP_i) / t$. 
\end{lemma}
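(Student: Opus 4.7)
The plan is to first weaken the left-hand side by adding $\bP_i$ to the conditioning (which only decreases entropy by Claim~\ref{clm:it-facts}-(\ref{part:cond-reduce})), and then, for a fixed realization $\bP_i = P$, exploit the product structure: given $P$, Alice's set $\bS_i$ is specified by the $t/2$-subset of blocks of $P$ chosen to form it, equivalently by a binary vector $Y = (Y_1,\ldots,Y_t) \in \{0,1\}^t$ with $Y_j = 1$ iff the $j$-th block of $P$ is in $\bS_i$.  Simultaneously, conditioned on $\bP_i = P$ and $\bI = i$, the set $\bTbar$ is a uniformly random block of $P$, so writing $J \in [t]$ for its index, $\bTbar$ and $J$ are in bijection and $\bT = 1 - Y_J$ (by the definition of $\diste$, $\theta = 0$ iff $\bTbar \subseteq \bS_i$). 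Thus
\[
H(\bT \mid \bprotsc, \bTbar, \bI = i, \bP_i = P) \;=\; H(Y_J \mid \bprotsc, J, \bP_i = P, \bI = i).
\]

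Next, since $J$ is uniform on $[t]$ and independent of $(\bprotsc, \bP_i, \bI, \bS_i)$ (it is drawn from the independent randomness of which block of $P_i$ becomes $\bTbar$), I would expand the right-hand side as an average:
\[
H(Y_J \mid \bprotsc, J, \bP_i = P, \bI = i) \;=\; \frac{1}{t} \sum_{j=1}^{t} H(Y_j \mid \bprotsc, \bP_i = P, \bI = i).
\]
Now I apply subadditivity of entropy (Claim~\ref{clm:it-facts}-(\ref{part:sub-additivity})) to this sum: since given $\bP_i = P$ the tuple $Y$ determines $\bS_i$ and vice versa,
\[
\sum_{j=1}^{t} H(Y_j \mid \bprotsc, \bP_i = P, \bI = i) \;\geq\; H(Y \mid \bprotsc, \bP_i = P, \bI = i) \;=\; H(\bS_i \mid \bprotsc, \bP_i = P, \bI = i).
\]
Finally, $\bI$ is sampled independently of $(\bSS, \bPP)$, and $\bprotsc$ is a function of $\bSS$ alone (together with the public coins, which are independent of $\bI$), so $\bI$ is independent of $(\bS_i, \bP_i, \bprotsc)$; hence the conditioning on $\bI = i$ can be dropped by Claim~\ref{clm:it-facts}-(\ref{part:ent-event}), giving $H(\bS_i \mid \bprotsc, \bP_i = P, \bI = i) = H(\bS_i \mid \bprotsc, \bP_i = P)$. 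Taking expectation over $P \sim \bP_i$ of the chain of inequalities then yields $H(\bT \mid \bprotsc, \bTbar, \bI = i, \bP_i) \geq H(\bS_i \mid \bprotsc, \bP_i)/t$, which is at most $H(\bT \mid \bprotsc, \bTbar, \bI = i)$ by the initial conditioning step.

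The only subtle step I expect is bookkeeping the independence used to drop/add the $\bI = i$ conditioning and the bijection between $\bTbar$ and its block-index $J$ (given $\bP_i = P$); the rest is subadditivity together with $J$ being uniform on $[t]$, which is exactly what produces the $1/t$ factor.
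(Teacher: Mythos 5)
Your proposal is correct and follows essentially the same route as the paper's proof: condition additionally on $\bP_i$, identify $\bS_i$ with the indicator vector over the $t$ blocks, use that $\bTbar$ is a uniformly random block independent of $\bS_i$ (and of $\bprotsc$) given $(\bP_i,\bI=i)$ to drop that conditioning, apply subadditivity of entropy to recover $H(\bS_i \mid \bprotsc,\bP_i,\bI=i)/t$, and finally drop $\bI=i$ by independence. The only differences are notational (your $Y,J$ versus the paper's $\bX$ and explicit averaging over blocks $B_j$), so no further comment is needed.
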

\begin{proof}
	We have,
	\begin{align*}
		H(\bT \mid \bprotsc, \bTbar , \bI = i)&\geq H(\bT \mid \bprotsc,\bTbar ,  \bP_i , \bI = i) \tag{conditioning on random variables reduces entropy, Claim~\ref{clm:it-facts}-(\ref{part:cond-reduce})} \\
		&= \Ex_{P \sim \bP_i \mid \bI = i} \Bracket{H(\bT \mid \bprotsc,\bTbar,  \bP_i = P, \bI = i)}
	\end{align*}
	For brevity, let $E$ denote the event $(\bP_i = P , \bI = i)$.  We can write the above equation as,
	\begin{align*}
		 H(\bT \mid \bprotsc,\bTbar ,  \bP_i , \bI = i) = \Ex_{P \sim \bP_i \mid \bI = i} \Ex_{\bar{T} \sim \bTbar \mid E} \Bracket{H(\bT \mid \bprotsc,\bTbar = \bar{T},  E)}
	\end{align*}
	Note that by Remark~\ref{rem:dist}-(\ref{p2}), conditioned on the event $E$,
        $\bar{T}$ is chosen to be one of the blocks of $P = (B_1,\ldots,B_t)$ uniformly at
        random. Hence,
	\begin{align*}
          H(\bT \mid \bprotsc,\bTbar ,  \bP_i , \bI = i) &= \Ex_{P \sim \bP_i \mid \bI = i} \Bracket{ \sum_{j=1}^{t}  \frac{H(\bT \mid \bprotsc,\bTbar = B_j, E )}{t}}
	\end{align*}
             
        Define a random variable $\bX:= (\bX_1,\ldots,\bX_{t})$, where each $\bX_j \in \set{0,1}$ and 
        $\bX_j = 1$ iff $\bS_i$ contains the block $B_j$. Note that conditioned on $E$,
        $\bX$ uniquely determines the set $\bS_i$. Moreover, notice that
        conditioned on $\bTbar = B_j$ and $E$, $\bT = 0$ iff $\bX_j = 1$. Hence,
	\begin{align*}
		 H(\bT \mid \bprotsc,\bTbar ,  \bP_i , \bI = i) &= \Ex_{P \sim \bP_i \mid \bI = i} \Bracket{\sum_{j=1}^{t} \frac{H(\bX_j \mid \bprotsc,\bTbar = B_j, E )}{t}} 
	\end{align*}
	Now notice that $\bX_j$ is independent of the event $\bTbar = B_j$ since $\bS_i$
        is chosen independent of $\bTbar$ conditioned on $E$
        (Remark~\ref{rem:dist}-(\ref{p2})). Similarly, since $\bprotsc$ is only a
        function of $\bS$ and $\bS$ is independent of $\bTbar$ conditioned on $E$, $\bprotsc$
        is also independent of the event $\bTbar = B_j$. Consequently, by
        Claim~\ref{clm:it-facts}-(\ref{part:ent-event}), we can ``drop'' the conditioning
        on $\bTbar = B_j$,
	\begin{align*}
	H(\bT \mid \bprotsc,\bTbar ,  \bP_i , \bI = i) &= \Ex_{P \sim \bP_i \mid \bI = i} \Bracket{ \sum_{j=1}^{t} \frac{H(\bX_j \mid \bprotsc, E )}{t}}  \\ 
		 &\geq \Ex_{P \sim \bP_i \mid \bI = i} \Bracket{ \frac{H(\bX \mid \bprotsc, E )}{t}}  \tag{sub-additivity of the entropy, Claim~\ref{clm:it-facts}-(\ref{part:sub-additivity})} \\
		 &= \Ex_{P \sim \bP_i \mid \bI = i} \Bracket{\frac{H(\bS_i \mid \bprotsc, E)}{t}} \tag{$\bS_i$ and $\bX$ uniquely define each other conditioned on $E$} \\
		 &= \Ex_{P \sim \bP_i \mid \bI = i} \Bracket{\frac{H(\bS_i \mid \bprotsc, \bP_i = P, \bI = i)}{t}} \tag{$E$ is defined as $(\bP_i = P, \bI = i)$} \\
		 &= \frac{H(\bS_i \mid \bprotsc,\bP_i,\bI = i)}{t}
	\end{align*}
	Finally, by Remark~\ref{rem:dist}-(\ref{p1}), $\bS_i$, $\bprotsc$, and $\bP_i$ are all independent of the event $\bI = i$, and hence by Claim~\ref{clm:it-facts}-(\ref{part:ent-event}), 
	$H(\bS_i \mid  \bprotsc,\bP_i ,\bI = i) = H(\bS_i \mid \bprotsc,\bP_i)$, which concludes the proof.
\end{proof}

By plugging in the bound from Lemma~\ref{lem:theta-to-s} in Equation~(\ref{eq:i-sum}) we have, 
\begin{align*}
	\sum_{i=1}^{m} H(\bS_i \mid \bprotsc , \bP_i) \leq H_2(\delta+o(1)) \cdot (mt)
\end{align*}
Finally, by plugging in this bound together with the bound from Equation~(\ref{eq:ent-bound}) in Equation~(\ref{eq:minfo-bound}), we get,
\begin{align*}
		\sum_{i=1}^{m} I(\bprotsc ; \bS_i \mid \bP_i) &\geq mt - \Theta(m\log{t}) - H_2(\delta+o(1)) \cdot (mt) \\
		&= \Paren{1-H_2(\delta+o(1))} \cdot (mt) - \Theta(m \log{t})
\end{align*}

Recall that $t = k/p = \Omgt(n/\alpha^2)$ and $\delta < 1/2$, hence $H_2(\delta + o(1)) = 1-\eps$ for some constant $\eps$ bounded away from $0$. By Lemma~\ref{lem:info-cost-bound},

\begin{align*}
	\IC{\setCoverEst}{\diste}{\delta} = \min_{\protsc}\Paren{\ICost{\protsc}{\diste}} = \Omgt(mn/\alpha^2)
\end{align*}
To conclude, since the information complexity is a lower bound on the communication complexity (Proposition~\ref{prop:cc-ic}), we obtain the following theorem. 

\begin{theorem}\label{thm:two-adv}
	For any constant $\delta < 1/2$, any $\alpha = o(\sqrt{n/\log{n}})$, and any $m = \poly(n)$, 
	\begin{align*}
		\CC{\setCoverEst}{\diste}{\delta} = \Omgt(mn/\alpha^2)
	\end{align*}
\end{theorem}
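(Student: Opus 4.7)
The plan is to read off Theorem~\ref{thm:two-adv} essentially for free from the information-theoretic bounds that have just been assembled in Section~\ref{sec:rand-lb}, combined with the standard reduction from communication complexity to information complexity (Proposition~\ref{prop:cc-ic}).

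Concretely, fix an arbitrary $\delta$-error one-way protocol $\protsc$ for \setCoverEst on the distribution $\diste$, where $\delta < 1/2$ is a constant. Lemma~\ref{lem:info-cost-bound} supplies the key inequality
\[
\ICost{\protsc}{\diste} \;\geq\; \sum_{i=1}^{m} I(\bprotsc;\bS_i \mid \bP_i),
\]
which we expand as in Equation~(\ref{eq:minfo-bound}). The ``unconditional'' term $\sum_i H(\bS_i \mid \bP_i)$ was computed in Equation~(\ref{eq:ent-bound}) to be $mt - \Theta(m\log t)$, using the fact that conditioned on $\bP_i$ the set $\bS_i$ is uniform over $\binom{t}{t/2}$ choices. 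The ``conditional'' term $\sum_i H(\bS_i \mid \bprotsc, \bP_i)$ is bounded by chaining Lemma~\ref{lem:sc-to-theta} (the estimator $\protsc$ already determines $\theta$ with error $\delta + o(1)$), Lemma~\ref{lem:theta-to-s} (each $H(\bT \mid \bprotsc, \bTbar, \bI = i)$ is at least $H(\bS_i \mid \bprotsc, \bP_i)/t$), and Fano's inequality (Claim~\ref{clm:fano}) applied to the binary variable $\bT$, yielding $\sum_i H(\bS_i \mid \bprotsc, \bP_i) \leq H_2(\delta + o(1)) \cdot mt$.

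Combining the two bounds gives
\[
\ICost{\protsc}{\diste} \;\geq\; \bigl(1 - H_2(\delta+o(1))\bigr) \cdot mt - \Theta(m \log t).
\]
Since $\delta$ is bounded away from $1/2$, $H_2(\delta + o(1)) = 1 - \eps$ for some constant $\eps > 0$; together with $t = k/p = \Omgt(n/\alpha^2)$, this simplifies to $\Omgt(mn/\alpha^2)$. Taking the infimum over all $\delta$-error protocols yields $\IC{\setCoverEst}{\diste}{\delta} = \Omgt(mn/\alpha^2)$.

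Finally, invoking Proposition~\ref{prop:cc-ic} for $P = \setCoverEst$ and $\dist = \diste$ gives $\CC{\setCoverEst}{\diste}{\delta} \geq \IC{\setCoverEst}{\diste}{\delta} = \Omgt(mn/\alpha^2)$, which is exactly the stated theorem. There is no genuine obstacle at this final step: all of the difficult work---constructing $\diste$ and its equivalent description via random $(k,t)$-partitions, exploiting the product structure after conditioning on $\bP_i$, and transferring a Fano-style lower bound on $H(\bT \mid \bprotsc, \bTbar, \bI)$ back to $H(\bS_i \mid \bprotsc, \bP_i)$---has already been carried out, and all that remains is the standard passage from information to communication complexity.
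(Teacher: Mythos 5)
Your proposal is correct and follows essentially the same route as the paper: Lemma~\ref{lem:info-cost-bound} plus Equations~(\ref{eq:minfo-bound}) and~(\ref{eq:ent-bound}), the reduction to estimating $\theta$ via Lemma~\ref{lem:sc-to-theta}, Fano's inequality combined with Lemma~\ref{lem:theta-to-s} to bound $\sum_i H(\bS_i \mid \bprotsc, \bP_i)$, and finally Proposition~\ref{prop:cc-ic} to pass from information to communication complexity. Nothing is missing; the derivation and the final bound $\Omgt(mn/\alpha^2)$ with $t = \Omgt(n/\alpha^2)$ match the paper's argument.
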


As a corollary of this result, we have that the space complexity of single-pass streaming algorithms for the set cover
problem on \emph{adversarial streams} is $\Omgt(mn/\alpha^2)$.

\subsection{Extension to Random Arrival Streams}\label{sec:rand-extension}
We now show that the lower bound established in Theorem~\ref{thm:two-adv} can be further strengthened to prove a
lower bound on the space complexity of single-pass streaming algorithms in the random
arrival model. To do so, we first define an extension of the distribution $\diste$, denoted
by $\dister$, prove a lower bound for $\dister$, and then show that how to use this
lower bound on the one-way communication complexity to establish a lower bound for the random
arrival model.

We define the distribution $\dister$ as follows. 
\textbox{Distribution
  $\dister$. \textnormal{An extension of the hard distribution $\diste$ for \setCoverEst}.}{
  \begin{enumerate}
  \item Sample the sets $\SS = \set{S_1,\ldots,S_m,T}$ in the same way as in the
    distribution $\diste$.
  \item Assign each set in $\SS$ to Alice with probability $1/2$, and the remaining sets
    are assigned to Bob.
  \end{enumerate}
}

We prove that the distribution $\dister$ is still a hard distribution for \setCoverEst.

\begin{lemma}\label{lem:dist-P}
	For any constant $\delta < 1/8$, $\alpha = o(\sqrt{n/\log{n}})$, and $m = \poly(n)$, 	
	\begin{align*}
		\CC{\setCoverEst}{\dister}{\delta} = \Omgt(mn/\alpha^2)
	\end{align*}
\end{lemma}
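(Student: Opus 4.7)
The plan is to reduce from the $\diste$ lower bound in Theorem~\ref{thm:two-adv}: given any $\delta$-error one-way protocol $\prot'$ for $\setCoverEst$ on $\dister$ with $\delta<1/8$, I would construct a $(<\!1/2)$-error one-way protocol $\prot$ for $\setCoverEst$ on $\diste$ restricted to $k=\Theta(m)$ sets (call this restriction $\diste_k$) with communication at most $\norm{\prot'}$, so that applying Theorem~\ref{thm:two-adv} to $\diste_k$ gives $\norm{\prot'} \geq \Omgt(kn/\alpha^2) = \Omgt(mn/\alpha^2)$.

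The first step is to single out a ``good'' partition by averaging. Let $\bSigma$ denote the random partition of the $m+1$ sets in $\dister$, write $A_\sigma \subseteq [m+1]$ for Alice's side under $\sigma$, and put $k(\sigma) := \card{A_\sigma \cap [m]}$. Consider the event $\event := \{\bSigma(m+1) = B\} \cap \{k(\bSigma) \in [m/3, 2m/3]\} \cap \{\istar \in A_{\bSigma}\}$. A Chernoff bound gives $\Pr(k(\bSigma) \in [m/3, 2m/3]) = 1 - o(1)$, and since $\bSigma$ and $\istar$ are independent this yields $\Pr(\event) \geq 1/4 - o(1)$. Hence the conditional error of $\prot'$ on $\event$ is at most $\delta/\Pr(\event) \leq 4\delta(1+o(1))$, which is bounded strictly below $1/2$ for any fixed $\delta < 1/8$. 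Since this conditional error is the average of $f(\sigma) := \Pr(\prot'\text{ errs} \mid \bSigma = \sigma, \istar \in A_\sigma)$ over $\sigma$ drawn from the distribution of $\bSigma$ conditioned on $\event$, a direct averaging argument produces a specific partition $\sigma^*$ with $\sigma^*(m+1) = B$, $k := k(\sigma^*) \in [m/3, 2m/3]$, and $f(\sigma^*) < 1/2$.

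The second step would use $\sigma^*$ to simulate $\prot'$ as a protocol $\prot$ for $\diste_k$. Given $(\SA, T) \sim \diste_k$ with Alice holding $k$ sets, Alice places her $k$ sets at the positions indexed by $A_{\sigma^*} \cap [m]$ (in a canonical order); Bob places $T$ at position $m+1$ and, using private randomness, samples $m-k$ sets iid uniformly from $\FC$ and places them at the remaining positions, i.e., those of $[m] \setminus A_{\sigma^*}$. Alice then runs the Alice-side of $\prot'$ on her $k$ sets under partition $\sigma^*$ and sends the resulting message $M$ to Bob, who runs the Bob-side of $\prot'$ on $M$ together with his (real and simulated) sets to produce the output. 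The crucial distributional check is that the induced joint law of the simulated configuration $(S_1, \ldots, S_m, T)$ exactly matches $\dister$ conditioned on $\bSigma = \sigma^*$ and $\istar \in A_{\sigma^*}$: Alice's $k$ sets (with the hidden index $\istar$ uniform over $[k]$ in $\diste_k$) realize the marginal of $(S_i)_{i \in A_{\sigma^*} \cap [m]}$ together with $T$, while Bob's freshly sampled sets realize the marginal of $(S_i)_{i \in [m] \setminus A_{\sigma^*}}$ which, under this conditioning, are iid uniform from $\FC$ and independent of $T$ and of Alice's sets.

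Consequently, $\prot$ is a $(<\!1/2)$-error one-way protocol for $\diste_k$ with $\norm{\prot} \leq \norm{\prot'}$, and since $k = \Theta(m) = \poly(n)$ while $\alpha = o(\sqrt{n/\log n})$, applying Theorem~\ref{thm:two-adv} to $\diste_k$ yields $\norm{\prot'} = \Omgt(mn/\alpha^2)$, as required. The main technical subtlety is the conditioning on $\istar \in A_{\sigma^*}$ in the distributional match: without it, Bob's freshly sampled independent sets would fail to agree with $(S_i)_{i \in [m] \setminus A_{\sigma^*}}$ whenever $\istar \notin A_{\sigma^*}$ (since then $T$ depends on one of Bob's true sets), which is precisely why the reduction must target $\diste_k$ (where $\istar$ always lies on Alice's side) rather than $\diste$ itself.
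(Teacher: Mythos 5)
Your reduction is correct and, at its core, is the same argument as the paper's: both proofs turn a protocol for $\dister$ into a protocol for the adversarially partitioned distribution by simulating the random allocation, having Bob resample fresh sets from $\FC$ for the positions that fall on his side, and conditioning on the constant-probability good event that $T$ lands with Bob while the special set lands with Alice (this is exactly where the hypothesis $\delta<1/8$ is used in both arguments). The packaging differs: the paper keeps the instance at $m+1$ sets, samples the partition with public coins, and has Bob output the default answer $2$ whenever the partition is bad, yielding a $(3/8+\delta)$-error protocol for $\diste$ itself (and it invokes Remark~\ref{rem:known-i} so Bob can detect the bad event), whereas you fix a single good partition $\sigma^*$ by an averaging argument and reduce from a $k$-set variant with $k=\Theta(m)$, with conditional error at most $4\delta+o(1)$ and no need for Bob to know $\istar$. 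Both routes work and cost the same. One detail needs patching in yours: if the $k$-set variant is literally $\diste$ with $m$ replaced by $k$, then Bob's set has $\card{\bar{T}}=\alpha\log k\neq\alpha\log m$, so the simulated input does \emph{not} exactly match $\dister$ conditioned on $(\bSigma=\sigma^*,\ \istar\in A_{\sigma^*})$ and the guarantee of $\prot'$ cannot be transferred; if instead you keep $\card{\bar{T}}=\alpha\log m$, the match is exact but Theorem~\ref{thm:two-adv} no longer applies verbatim (its proof does go through with the number of Alice's sets set to $k$ and $p=\alpha\log m$, since only $\log k=\Theta(\log m)$ matters). Either one-line fix suffices. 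Finally, your composed protocol outputs an $\alpha$-estimate of the padded $(m+1)$-set instance rather than of the $(k+1)$-set instance, and these optima can differ; this slack is present in the paper's own reduction as well (there too Bob resamples his sets and the estimate refers to the modified instance) and is harmless because the downstream lower bound only uses the output to distinguish $\opt\le 2\alpha$ from $\opt>2\alpha$, a dichotomy that holds with high probability for both instances simultaneously.
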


\begin{proof}
  We prove this lemma using a reduction from \setCoverEst over the distribution
  $\diste$. Let $\protext$ be a $\delta$-error protocol over the distribution $\dister$. Let $\delta' = 3/8 + \delta$; in
  the following, we create a $\delta'$-error protocol $\protsc$ for the distribution
  $\diste$ (using $\protext$ as a subroutine). 
  
  Consider an instance of the problem from the
  distribution $\diste$.  Define a mapping $\sigma: [m+1] \mapsto \SS$ such that for
  $i \leq m$, $\sigma(i) = S_i$ and $\sigma(m+1) = T$.  Alice and Bob use public
  randomness to partition the set of integers $[m+1]$ between each other, assigning each number in
  $[m+1]$ to Alice (resp. to Bob) with probability $1/2$. Note that by
  Remark~\ref{rem:known-i}, we may assume that Bob knows
  the special index $\istar$.
  
  Consider the random partitioning of $[m+1]$ done by the players. If
  $\istar = \sigma^{-1}(S_\istar)$ is assigned to Bob, or $m+1 = \sigma^{-1}(T)$ is
  assigned to Alice, Bob always outputs $2$.  Otherwise, Bob samples one set from $\FC$ for
  each $j$ assigned to him independently and uniformly at random and treat these sets plus
  the set $T$ as his ``new input''.  Moreover, Alice discards the sets $S_j = \sigma(j)$,
  where $j$ is assigned to Bob and similarly treat the remaining set as her new input. The
  players now run the protocol $\protext$ over this distribution and Bob outputs the
  estimate returned by $\protext$ as his estimate of the set cover size.

  Let $\bR$ denote the randomness of the reduction (\emph{excluding} the inner randomness of $\protext$). 
  Define $\event$ as the event that in the described reduction, $\istar$ is assigned to
  Alice and $m+1$ is assigned to Bob. Let $\distnew$ be the distribution of the instances
  over the \emph{new inputs} of Alice and Bob (i.e., the input in which Alice drops the sets assigned to Bob,
  and Bob randomly generates the sets assigned to Alice) when $\event$ happens.  Similarly, we
  define $\hevent$ to be the event that in the distribution $\dister$, $S_\istar$ is
  assigned to Alice and $T$ is assigned to Bob. It is straightforward to verify that
  $\distnew = (\dister \mid \hevent)$. We now have,
	\begin{align*}
		\Pr_{\diste,\bR}\Paren{\protsc \errs} &= \frac{1}{2} \cdot \Pr_{\bR}\Paren{\bar{\event}} + \Pr_{\bR}\Paren{\event} \cdot \Pr_{\distnew}\Paren{\protext \errs} 
		\tag{since Bob outputs $2$ when $\bar{\event}$, he will succeed with probability $1/2$} \\ 
		&=  \frac{1}{2} \cdot \Pr_{\bR}\Paren{\bar{\event}} + \Pr_{\bR}\Paren{\event} \cdot \Pr_{\dister}\Paren{\protext \errs \mid \hevent} 
		 \tag{$\distnew = (\dister \mid \hevent)$} \\
		&\leq \frac{1}{2} \cdot \Pr_{\bR}\Paren{\bar{\event}} + \Pr_{\bR}\Paren{\event} \cdot \frac{\Pr_{\dister}\Paren{\protext \errs}}{\Pr_{\dister}\Paren{\hevent}} \\
		&= \frac{1}{2} \cdot \Pr_{\bR}\Paren{\bar{\event}} + \Pr_{\dister}\Paren{\protext \errs} 	\tag{$\Pr_{\bR}(\event) = \Pr_{\dister}(\hevent)$} \\
		&\leq \frac{3}{8} + \delta \tag{$\Pr_{\bR}(\bar{\event}) = 3/4$} 
	\end{align*}
	Finally, since $\delta < 1/8$, we obtain a $(1/2-\eps)$-error protocol (for some constant $\eps$ bounded away from $0$) for the distribution $\diste$. 
	The lower bound now follows from Theorem~\ref{thm:two-adv}. 
\end{proof}

We can now prove the lower bound for the random arrival model.
\begin{proof}[Proof of Theorem~\ref{thm:rand-lower}]
  Suppose $\alg$ is a randomized single-pass streaming algorithm satisfying the conditions
  in the theorem statement.  We use $\alg$ to create a $\delta$-error protocol \setCoverEst over the distribution $\dister$
   with parameter $\delta = 0.1 < 1/8$.
  
  Consider any input $\SS$ in the distribution $\dister$ and denote the sets given to
  Alice by $\SS_{A}$ and the sets given to Bob by $\SS_{B}$. Alice creates a stream created by a
  random permutation of $\SS_A$ denoted by $s_A$, and Bob does the same for $\SS_B$ and obtains
  $s_B$. The players can now compute $\alg\paren{\seq{s_A,s_B}}$ to estimate the set cover size
  and the communication complexity of this protocol equals the space
  complexity of $\alg$. Moreover, partitioning made in the distribution $\dister$ together
  with the choice of random permutations made by the players, ensures that $\seq{s_A,s_B}$
  is a random permutation of the original set $\SS$. Hence, the probability that $\alg$
  fails to output an $\alpha$-estimate of the set cover problem is at most $\delta = 0.1$.
  The lower bound now follows from Lemma~\ref{lem:dist-P}.
\end{proof}


\section{An $\Omega(mn/\alpha)$-Space Lower Bound for Deterministic $\alpha$-Estimation}\label{sec:det-lower}

In Section~\ref{sec:general}, we provided a \emph{randomized} algorithm for
$\alpha$-estimating the set cover problem, with an (essentially) optimal space bound (as
we proved in Section~\ref{sec:rand-lower}); here, we establish that randomization is
crucial for achieving better space bound for $\alpha$-estimation: any \emph{deterministic}
$\alpha$-estimation algorithm for the set cover problem requires
$\Omega(\frac{mn}{\alpha})$ bits of space. In other words, $\alpha$-approximation and
$\alpha$-estimation are as \emph{hard} as each other for deterministic
algorithms. Formally,

\begin{theorem}\label{thm:det-lower}
  For any $\alpha = o(\sqrt{\frac{n}{\log{n}}})$ and $m = \poly(n)$, any  \emph{deterministic} $\alpha$-estimation single-pass 
  streaming algorithm for the set cover problem requires ${\Omega}(mn/\alpha)$ bits of space.
\end{theorem}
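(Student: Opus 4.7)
The plan is to prove the lower bound by leveraging the key property of deterministic algorithms: Alice's message is a fixed function of her input, so given the message, we can simulate the algorithm on \emph{any} Bob input without further communication. Thus a deterministic $\alpha$-estimation algorithm's message implicitly encodes the $\alpha$-estimate for every possible $T$. I propose to use this via a fooling-set / encoding argument: design a family $\{\SS^{(v)}\}_{v \in \{0,1\}^N}$ of Alice inputs with $N = \Omega(mn/\alpha)$ such that every pair $u\ne v$ admits a Bob input $T_{u,v}$ for which $\opt(\SS^{(u)},T_{u,v})$ and $\opt(\SS^{(v)},T_{u,v})$ differ by more than a factor of $\alpha$. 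The $\alpha$-estimation ranges $[\opt,\alpha\opt]$ are then disjoint in the two cases, so any deterministic algorithm must emit $2^N$ distinct messages and thus use $\Omega(mn/\alpha)$ bits of space, invoking the standard one-way-communication-to-streaming-space reduction at the end.

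For the construction, I would borrow the \Trap-style structure underlying Section~\ref{sec:lower-find}. Index the bits by pairs $(i,j)\in [m]\times [n/\Theta(\alpha)]$: the bit $v_{i,j}$ controls whether a designated $\alpha$-element ``challenge block'' $X_{i,j}$ is contained in $S_i$. Bob's query $T_{i^*,j^*}$ covers $[n]\setminus X_{i^*,j^*}$. If $v_{i^*,j^*}=1$, then $\{T_{i^*,j^*},S_{i^*}\}$ is already a cover, giving $\opt=2$. If $v_{i^*,j^*}=0$, covering $X_{i^*,j^*}$ must be done via other sets, and a Trap-style spread argument analogous to Lemma~\ref{lem:cover-size} forces $\opt=\Omega(\alpha)$. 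Two Alice inputs differing in even a single bit are then distinguished by the corresponding query, giving the promised fooling set.

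The main technical obstacle I anticipate is proving the ``no small cover'' property against \emph{adversarial} choices of the remaining bits of $v$, rather than random ones: Lemma~\ref{lem:cover-size} relied on random independent sets, whereas here Alice's sets are adversarially chosen subject to encoding $v$. A natural remedy is a probabilistic-method / pigeonhole argument: randomize the challenge blocks $\{X_{i,j}\}$ (and possibly some padding elements) so that for a typical choice of blocks the ``no small cover'' property holds simultaneously for all $v$ in a large subset of $\{0,1\}^N$, and recover the fooling set from this subset. A further subtlety is controlling the sizes of Alice's sets so as to simultaneously realize $\Omega(mn/\alpha)$ genuinely independent bits \emph{and} retain the Trap-style small-cover analysis, since a naive construction with $|S_i|=\Theta(n/\alpha)$ and $|X_{i,j}|=\alpha$ runs into an $\Omega(mn/\alpha^2)$ barrier that merely replays the randomized estimation bound of Theorem~\ref{thm:rand-lower}. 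An alternative route worth exploring is a direct reduction from the $\alpha$-approximation bound of Theorem~\ref{thm:find-lower}: show that, because the algorithm is deterministic, Bob can probe Alice's message on engineered hypothetical inputs to reconstruct an $\alpha$-approximate cover, converting any deterministic $\alpha$-estimator of space $s$ into a one-way protocol for $\alpha$-approximation of communication $O(s)$, which immediately yields $s=\Omega(mn/\alpha)$.
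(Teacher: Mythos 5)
Your plan is genuinely different from the paper's proof, and it has a real gap at exactly the point you flag but do not resolve. In a fooling-set argument, two Alice inputs differing in a single bit $v_{i^*,j^*}$ must be separated by one query, so the challenge block must have $\opt=2$ on one side and $\opt>2\alpha$ on the other, and the latter must hold against the \emph{worst-case} setting of all the remaining bits: the block $X_{i^*,j^*}$ must not be coverable by any $2\alpha$ sets drawn from your ground family. This is a cover-free-family-type requirement, and it caps the usable entropy per set at roughly $O(n\log\alpha/\alpha^2)$ bits: the probability that $2\alpha$ random sets of size $n/10\alpha$ cover a fixed such set is $2^{-\Theta(n/\alpha)}$, but a ``for all pairs'' guarantee forces a union bound over the roughly $\card{\mathcal{G}}^{2\alpha}$ ways to choose the covering sets from your family $\mathcal{G}$, which caps $\log\card{\mathcal{G}}$ at $O(n/\alpha^2)$ (and known upper bounds on $r$-cover-free families show this loss is intrinsic, not an artifact of the union bound). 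So any construction of this shape yields at best $\Omgt(mn/\alpha^2)$ --- precisely the barrier you name --- and ``randomizing the challenge blocks'' does not remove it, because the adversarial quantification over the other bits remains. Your alternative route is also unsubstantiated: Bob cannot ``probe Alice's message on engineered hypothetical inputs,'' since the message is a fixed function of Alice's \emph{entire} input, which Bob does not know, so there is no evident way to turn a deterministic $\alpha$-estimator of space $s$ into an $O(s)$-communication $\alpha$-approximation protocol.

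The paper obtains $\Omega(mn/\alpha)$ by avoiding any for-all-pairs requirement altogether. It uses the distribution $\distdet$, where Alice holds $m$ uniformly random members of $\FC$ and Bob's set $\bar{T}$ is either one of Alice's sets (so $\opt=2$) or a fresh uniform member of $\FC$ (so $\opt>2\alpha$ except with probability $2^{-\Omega(n/\alpha)}$, Lemma~\ref{lem:theta-det}), and reduces from Saglam's \sparseIndexingnk with $N=\binom{n}{n/10\alpha}$ and $k=m$ via a bijection between $[N]$ and $\FC$. The crucial observation is that a \emph{deterministic} $\alpha$-estimation algorithm never errs on any instance, so the constructed Sparse Indexing protocol errs only on the exponentially unlikely event that a fresh random set is coverable by $2\alpha$ of Alice's sets, i.e., with $\delta=2^{-\Omega(n/\alpha)}$; the distributional deterministic bound $\Omega(\min\set{k\log(1/\delta),\,k\log(N/k)})$ of Lemma~\ref{lem:sparse-index} then converts this exponentially small error into the extra $n/\alpha$ factor. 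This leverage of an error-sensitive lower bound, rather than a fooling set, is the key idea missing from your proposal.
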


Before continuing, we make the following remark. The previous best lower bound for \emph{deterministic} algorithm asserts that any $O(1)$-estimation algorithm requires $\Omega(mn)$ bits of space even allowing 
\emph{constant number of passes}~\cite{DemaineIMV14}. This lower bound can be stated more generally in terms of its dependence on $\alpha$: for any $\alpha \leq \log{n}-O(\log\log{n})$, any deterministic $\alpha$-estimation
algorithm requires space of $\Omega(mn/\alpha\cdot2^{\alpha})$ bits (see Lemma~11 and Theorem~3 in~\cite{DemaineIMV14}). Our bound in Theorem~\ref{thm:det-lower}
provide an exponential improvement over this bound for \emph{single-pass} algorithms.

Recall the definition of \setCoverEst from Section~\ref{sec:rand-lower} for a fixed choice of parameters $n,m$, and $\alpha$. We define the following hard input distribution for deterministic protocols of \setCoverEst. 

\cTextbox{Distribution $\distdet$. \textnormal{A hard input distribution for \emph{deterministic} protocols of \setCoverEst.}}{ 
	\medskip \\
	\textbf{Notation.} Let $\FC$ be the collection of all subsets of $[n]$ with cardinality $n/10\alpha$. 
\begin{itemize}
	\item \textbf{Alice.} The input to Alice is an $m$-subset $\SA \subseteq \FC$ chosen uniformly at random. 
	 
	\item \textbf{Bob.} Toss a coin $\theta \in_R \set{0,1}$; let the input to Bob be the set $T$, where if $\theta = 0$, $\bar{T} \in_R \SA$ and otherwise, $\bar{T} \in_R \FC \setminus \SA$.
\end{itemize}
}

We stress that distribution $\distdet$ can \emph{only} be hard for deterministic protocols;  Alice can simply run a \emph{randomized} protocol for checking the equality of each $S_i$ with $\bar{T}$ (independently for 
each $S_i \in \SA$) and since the one-way communication complexity of the \emph{Equality} problem with public randomness is $O(1)$ (see, e.g.,~\cite{KN97}), the total communication complexity of this protocol is $O(m)$. 
However, such approach is not possible for deterministic protocols since deterministic communication complexity of the \emph{Equality} problem is linear in the input size (again, see~\cite{KN97}). 

The following lemma can be proven similar to Lemma~\ref{lem:theta-random} (assuming that $\alpha = o(\sqrt{\frac{n}{\log{n}}})$). 
\begin{lemma}\label{lem:theta-det}
	For $(\SA,T) \sim \distdet$: 
	\begin{enumerate}[(i)]
		\item \label{part:det-theta0} $\PR{\opt(\SA,T) = 2 \mid \theta=0} = 1$. 
		\item \label{part:det-theta1} $\PR{\opt(\SA,T) > 2\alpha \mid \theta = 1} = 1-2^{-\Omega(n/\alpha)}$. 
	\end{enumerate}
\end{lemma}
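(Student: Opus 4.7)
The plan is to mirror the proof of Lemma~\ref{lem:theta-random}, adapting the argument to the slightly different sampling structure of $\distdet$. Part~(i) is immediate: when $\theta=0$ we have $\bar{T}=S_{i_0}$ for some $i_0\in[m]$, so $S_{i_0}\cup T=[n]$ gives a cover of size $2$; combined with the trivial lower bound $\opt\ge 2$ (no single set in $\SA\cup\{T\}$ has size $n$), $\opt(\SA,T)=2$ holds with probability one. For part~(ii), observe that by symmetry, sampling $(\SA,\bar{T})$ conditional on $\theta=1$ is equivalent to drawing $m+1$ distinct sets uniformly from $\FC$ and designating one of them as $\bar{T}$. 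Since $T$ contains no element of $\bar{T}$, any set cover of size at most $2\alpha$ must consist of at most $2\alpha$ sets from $\SA$ whose union contains $\bar{T}$; it therefore suffices to upper bound the probability that some $\Salpha\subseteq\SA$ with $|\Salpha|=2\alpha$ covers $\bar{T}$.

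Fix a candidate collection $\Salpha$ of $2\alpha$ sets. Following Lemma~\ref{lem:theta-random}, for each $k\in\bar{T}$ and $S\in\Salpha$ define the indicator $\bX_k^S=\mathbf{1}[k\in S]$, and let $\bX=\sum_{k,S}\bX_k^S$. Each such $S$ is (essentially) a uniformly random $(n/10\alpha)$-subset of $[n]$, so $\Pr(k\in S)=1/(10\alpha)$ and hence
\[
\Ex[\bX]=2\alpha\cdot\frac{n}{10\alpha}\cdot\frac{1}{10\alpha}=\frac{n}{50\alpha}.
\]
The variables $\bX_k^S$ are negatively correlated (the calculation in Lemma~\ref{lem:cover-size} handles pairs sharing a common $S$, and near-independence of different sets handles the cross-set case), so the extended Chernoff bound applies. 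For $\Salpha$ to cover $\bar{T}$ we must have $\bX\ge|\bar{T}|=n/(10\alpha)=5\Ex[\bX]$, and hence $\Pr(\Salpha\text{ covers }\bar{T})\le\exp(-\Omega(n/\alpha))$.

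A union bound over at most $\binom{m}{2\alpha}\le m^{2\alpha}=\exp(O(\alpha\log n))$ candidates gives a total failure probability of $\exp(O(\alpha\log n)-\Omega(n/\alpha))$. Using $m=\poly(n)$ and $\alpha=o(\sqrt{n/\log n})$, we get $\alpha\log n=o(\sqrt{n\log n})=o(n/\alpha)$, so the exponent is $-\Omega(n/\alpha)$, which yields the desired $2^{-\Omega(n/\alpha)}$ bound. The only subtle point is handling the ``without replacement'' conditioning (the sets of $\SA$ are distinct, and $\bar{T}\notin\SA$); but since $m=\poly(n)$ while $|\FC|=\binom{n}{n/10\alpha}$ is vastly larger, all relevant conditional probabilities differ from the unconditional ones by a multiplicative $1+o(1)$ factor, which is absorbed into the $\Omega(\cdot)$ bounds. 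I do not anticipate a major obstacle: this is essentially a routine adaptation of Lemma~\ref{lem:theta-random} to the structure of $\distdet$, with the bookkeeping of the conditioning being the only thing that needs mild care.
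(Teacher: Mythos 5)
Your proposal is correct and follows essentially the same route the paper intends: the paper gives no separate proof and simply notes that Lemma~\ref{lem:theta-det} is proven analogously to Lemma~\ref{lem:theta-random}, which is exactly the adaptation you carry out (with $\card{\bar{T}} = n/10\alpha$ replacing $\alpha\log m$, so the Chernoff exponent becomes $\Omega(n/\alpha)$ and dominates the $\exp(O(\alpha\log n))$ union bound since $\alpha = o(\sqrt{n/\log n})$). Your treatment of the without-replacement conditioning is also sound, since only $O(\alpha)$ correction factors of the form $\card{\FC}/(\card{\FC}-m)$ arise and $\card{\FC} = \binom{n}{n/10\alpha}$ dwarfs $m = \poly(n)$.
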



\newcommand{\sparseIndexing}{\emph{Sparse Indexing}\xspace}

To prove Theorem~\ref{thm:det-lower}, we use a reduction from the \emph{Sparse Indexing} problem of~\cite{SaglamThesis2011}. 
In \sparseIndexingnk, Alice is given a set $S$ of $k$ elements from a universe $[N]$ and Bob is given an 
element $e \in [N]$; Bob needs to output whether or not $e \in S$. The crucial property of this problem that we need in our proof is that
the communication complexity of \sparseIndexing depends on the success probability required from the protocol. 

The following lemma is a restatement of Theorem 3.2 from~\cite{SaglamThesis2011}.

\begin{lemma}[\!\!\cite{SaglamThesis2011}] \label{lem:sparse-index}
	Consider the following distribution $\distsi$ for \sparseIndexingnk: 
	\begin{itemize}
	\item Alice is given a $k$-subset $S \subseteq [N]$ chosen uniformly at random. 
	\item With probability half Bob is given an element $e \in S$ uniformly at random, and with probability half Bob receives a random element $e \in [N] \setminus S$.
	\end{itemize} 
	For any $\delta < 1/3$, the communication complexity of any deterministic $\delta$-error protocol over distribution $\distsi$ is $\Omega(\min\set{k\log(1/\delta),k\log(N/k)})$ bits.
\end{lemma}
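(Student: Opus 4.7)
The plan is a counting argument that leverages the structure of deterministic protocols. Any deterministic protocol with communication cost $c$ is fully determined by Alice's message function $M:\binom{[N]}{k}\to\set{0,1}^{c}$ together with, for each message value $m$, Bob's response function $B_m:[N]\to\set{0,1}$; let $T_m := B_m^{-1}(1) \subseteq [N]$ denote the ``YES-set'' of message $m$, and $\mathcal{F}_m := M^{-1}(m)$ the class of Alice's inputs sending message $m$. The distributional error condition will translate into a closeness condition between each $S\in\mathcal{F}_m$ and $T_m$, and the lower bound will follow by showing that only a small number of $k$-sets can be close to any fixed $T_m$, so by pigeonhole the number $2^{c}$ of messages must be large.

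First I would translate the error bound into a closeness condition. Under $\distsi$, conditioned on Alice's input $S$, Bob's element $e$ is uniform on $S$ with probability $1/2$ and uniform on $[N]\setminus S$ with probability $1/2$, so the conditional error given $S$ and message $m$ is
\[
\text{err}(S,m) \;=\; \tfrac{1}{2}\cdot \tfrac{|S \setminus T_m|}{k} \;+\; \tfrac{1}{2}\cdot\tfrac{|T_m \setminus S|}{N-k}.
\]
Since $\Ex_{S}\bracket{\text{err}(S,M(S))} \le \delta$, Markov's inequality supplies a set $\mathcal{G}\subseteq\binom{[N]}{k}$ of ``good'' inputs with $\card{\mathcal{G}} \ge \binom{N}{k}/2$ such that every $S\in\mathcal{G}$ satisfies $\text{err}(S,M(S)) \le 2\delta$; unpacking this gives both $|S\setminus T_{M(S)}| \le 4\delta k$ and $|T_{M(S)}\setminus S| \le 4\delta(N-k)$, and in particular $|T_{M(S)}| \le k + 4\delta N$.

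Next I would bound, for any fixed candidate set $T \subseteq [N]$ with $|T| \le k+4\delta N$, the number $B(T)$ of $k$-subsets $S$ of $[N]$ satisfying $|S\setminus T|\le 4\delta k$. Decomposing such an $S$ as $(S\cap T)\sqcup(S\setminus T)$ yields the binomial bound
\[
B(T) \;\le\; \sum_{j=0}^{4\delta k} \binom{|T|}{k-j}\binom{N-|T|}{j}.
\]
Standard estimates of the form $\binom{n}{r}\le(en/r)^{r}$ together with $|T|\le k+4\delta N$ and $\binom{N}{k}\ge(N/k)^{k}$ give, after a short calculation, $B(T)/\binom{N}{k} \le 2^{-\Omega(k\cdot\min\set{\log(1/\delta),\log(N/k)})}$. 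The two terms of the minimum correspond to two regimes: when $\delta\gtrsim k/N$ the dominant saving is $(O(\delta))^{\Omega(k)}$ producing the $k\log(1/\delta)$ term, while when $\delta\ll k/N$ the saving saturates at $\binom{k+4\delta N}{k}/\binom{N}{k}$ producing the $k\log(N/k)$ term (matching the trivial upper bound in which Alice simply transmits her whole set).

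To conclude, pigeonhole on $\mathcal{G}$ shows that some class $\mathcal{F}_m$ contains at least $\binom{N}{k}/(2\cdot 2^{c})$ good inputs, all of which lie in the ``ball'' counted by $B(T_m)$; combined with the bound on $B(T_m)$, this forces $2^{c}\ge 2^{\Omega(k\cdot\min\set{\log(1/\delta),\log(N/k)})}$, which is the desired estimate. The main technical obstacle will be the binomial estimate in the third step: to capture \emph{both} terms of the minimum simultaneously, I would split on whether $|T_m|$ is close to $k$ or closer to $k+4\delta N$ and extract the dominant exponential saving in each regime, and it is precisely this case split that forces the $\min$ appearing in the lemma.
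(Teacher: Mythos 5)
The paper itself does not prove Lemma~\ref{lem:sparse-index}; it is quoted verbatim from Saglam's thesis, so there is no in-paper argument to compare your proposal against. Taken on its own merits, your counting/pigeonhole strategy is the natural one-way distributional argument, and it is essentially sound in the regime in which the paper actually invokes the lemma (where $\delta = 2^{-\Omega(n/\alpha)}$ is tiny and $N/k$ is enormous): the conditional-error formula $\mathrm{err}(S,m)=\frac12\cdot\frac{|S\setminus T_m|}{k}+\frac12\cdot\frac{|T_m\setminus S|}{N-k}$, the ball bound $B(T)\le\sum_{j\le 4\delta k}\binom{|T|}{k-j}\binom{N-|T|}{j}$, and the final pigeonhole over the $2^c$ message classes are all correct as stated.

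As a proof of the lemma as stated, however — for \emph{every} $\delta<1/3$ — there is a genuine gap. Your Markov step with factor $2$ yields the threshold $\mathrm{err}(S,T_m)\le 2\delta$, and for $\delta\ge 1/4$ this threshold exceeds $1/2$. But for a uniformly random $k$-set $S$ and any fixed $T$ one has $\Ex[\mathrm{err}(S,T)]=\frac12\cdot\frac{N-|T|}{N}+\frac12\cdot\frac{|T|}{N}=\frac12$, and $\mathrm{err}(S,T)$ concentrates around $1/2$; so once the threshold passes $1/2$ the ``ball'' of good inputs around $T_m$ has measure $1-o(1)$, the extracted constraints $|S\setminus T_m|\le 4\delta k$ and $|T_m\setminus S|\le 4\delta(N-k)$ are vacuous, and the pigeonhole gives no bound at all. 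Even below $1/4$, the $(en/r)^r$-style estimates you describe only produce the claimed $2^{-\Omega(k\log(1/\delta))}$ saving once $\delta$ is below a fixed small constant (roughly $1/(4e)$, since the key factor is $(O(\delta))^{k-j}$), and they also break down when $N/k$ is a small constant. The missing ingredient is the constant-$\delta$ regime: there you should run Markov with a threshold $\beta\delta$ chosen strictly below $1/2$ (possible precisely because $\delta<1/2$, while keeping the good fraction $1-1/\beta$ a positive constant), note that $\mathrm{err}(S,T)$ is an affine function of the hypergeometric variable $|S\cap T|$, and use a hypergeometric tail bound to show the ball has measure $2^{-\Omega(k)}$, which gives the required $\Omega(k)=\Omega(\min\set{k\log(1/\delta),k\log(N/k)})$ in that range (with sharper entropy estimates covering small constant $N/k$). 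With that case added your outline becomes a complete proof; without it, the claimed range $\delta<1/3$ is not covered.
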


We now show that how a deterministic protocol for \setCoverEst can be used to solve the
hard distribution of \sparseIndexingnk described in the previous lemma.

Notice that there is a simple one-to-one mapping $\phi$ from an instance $(S,e)$ of
\sparseIndexingnk to an instance $(\SA,T)$ of \setCoverEst with parameters $n$ and $m$
where $N = {n \choose \frac{n}{10\alpha}}$ and $k = m$. Fix any arbitrary bijection
$\sigma: [N] \mapsto \FC$; we map $S$ to the collection
$\SA = \set{\sigma(j) \mid j \in S}$ and map $e$ to the set $T = [n] \setminus \sigma(e)$.
Under this transformation, if we choose $(S,e) \sim \distsi$ the \setCoverEst
distribution induced by $\phi(S,e)$ is equivalent to the distribution
$\distdet$. Moreover, the answer of $(S,e)$ for the \sparseIndexingnk problem is \Yes iff
$\theta = 0$ in $(\SA,T)$. We now proceed to the proof of Theorem~\ref{thm:det-lower}.

\begin{proof}[Proof of Theorem~\ref{thm:det-lower}] 
	Let $\protsc$ be a deterministic protocol for the \setCoverEst problem; we use $\protsc$ to design a $\delta$-error protocol $\protsi$ for \sparseIndexingnk on the distribution $\distsi$ for $\delta = 2^{-\Omega(n/\alpha)}$.
	The lower bound on message size of $\protsc$ then follows from Lemma~\ref{lem:sparse-index}. 
	
	Protocol $\protsi$ works as follows. Consider the mapping $\phi$ from $(S,e)$ to $(\SA,T)$ defined before. Given an instance $(S,e) \sim \distsi$, Alice and Bob can each
	 compute their input in $\phi(S,e)$ without any communication. Next, they run the protocol $\protsc(\SA,T)$ and if the set cover size is less than $2\alpha$ Bob outputs \Yes and otherwise outputs \No.
	
	We argue that $\protsi$ is a $\delta$-error protocol for \sparseIndexingnk on the distribution $\distsi$.  
	To see this, recall that if $(S,e) \sim \distsi$ then $(\SA,T) \sim \distdet$. Let $\event$ be an event in the distribution \distdet, where $\theta = 1$ \emph{but} $\opt < 2\alpha$. Note that since $\protsc$ is a 
	deterministic $\alpha$-approximation protocol and never errs, the answer for $\protsi$ would be wrong only if $\event$ happens. 
	We have,  
	\begin{align*}
		\Pr_{(S,e) \sim \distsi}\Paren{\protsi(S,e) \errs} &\leq \Pr_{(\SA,T) \sim \distdet}\Paren{\event} = 2^{-\Omega(n/\alpha)} 
	\end{align*}
	where the equality is by Lemma~\ref{lem:theta-det}. This implies that $\protsi$ is a $\delta$-error protocol for the distribution $\distsi$. 
	By Lemma~\ref{lem:sparse-index}, 
	\[
		\norm{\protsc} = \norm{\protsi} = \Omega(\min(k\cdot \log{1/\delta}, k\log(N/k))) = \Omega(mn/\alpha)
	\]
	As one-way communication complexity is a lower bound on the space complexity of single-pass streaming algorithm we obtain the final result.  
\end{proof}

\subsection*{Acknowledgements}
We thank Piotr Indyk for introducing us to the problem of determining single-pass streaming complexity of set cover,
and the organizers of the DIMACS Workshop on Big Data through the Lens of Sublinear Algorithms (August 2015) where this conversation happened. We are also 
thankful to anonymous reviewers for many valuable comments. 

\clearpage
\bibliographystyle{acm}
\bibliography{general}

\clearpage
\appendix
\newcommand{\ba}{\ensuremath{\bm{a}}}
\section{A Lower Bound for $\IC{\Indexnk}{\distI}{\delta}$}\label{app:index}

In this section we prove Lemma~\ref{lem:index}. As stated earlier, this lemma is well-known and is proved here only for the sake of completeness. 

\begin{proof}[Proof of Lemma~\ref{lem:index}]
	Consider the following ``modification'' to the distribution $\distI$. Suppose we first sample a $2k$-set $P \subseteq [n]$, and then let the input to Alice be a $k$-set $A \subset P$ chosen uniformly at random, and 
	the input to Bob be an element $a \in P$ chosen uniformly at random. It is an easy exercise to verify that this modification does not change the distribution $\distI$.  
	However, this allows us to analyze the information cost of any protocol $\protindex$ over $\distI$ easier. In particular, 
	\begin{align*}
		\ICost{\protindex}{\distI} &= I(\bA ; \bprotindex) \\
		&\geq I(\bA; \bprotindex \mid \bP) \\
		&= H(\bA \mid \bP) - H(\bA \mid \bprotindex, \bP) \\
		&= 2k - \Theta(\log{k}) - H(\bA \mid \bprotindex, \bP)
	\end{align*}
	where the inequality holds since $(i)$ $H(\bprotindex) \geq H(\bprotindex \mid \bP)$ and $(ii)$ $H(\bprotindex \mid \bA) = H(\bprotindex \mid \bA,\bP)$ as $\bprotindex$ is independent of $\bP$ conditioned on $\bA$.
	 
	We now bound $H(\bA \mid \bprotindex, \bP)$. Define $\theta \in \set{0,1}$ where $\theta = 1$ iff $a \in A$. Note that a $\delta'$-error protocol for $\Index$ is also 
	a $\delta'$-error (randomized) estimator for $\theta$ (given the message $\protindex$, the public randomness used along with the message, and the element $a$). Hence, using Fano's inequality (Claim~\ref{clm:fano}),  
	\begin{align*}
		H_2(\delta') &\geq H(\bT \mid \bprotindex,\ba) \\
		&\geq H(\bT \mid \bprotindex, \ba, \bP) \tag{conditioning decreases entropy, Claim~\ref{clm:it-facts}-(\ref{part:cond-reduce})} \\
		&= \Ex_{P \sim \bP} \Ex_{a \sim \ba \mid \bP = P} \Bracket{H(\bT \mid \bprotindex, \ba =a , \bP = P)} 
	\end{align*}
	Conditioned on $\bP = P$, $\ba$ is chosen uniformly at random from $P$. Define $\bX:=(\bX_1,\ldots,\bX_{2k})$, where 
	$\bX_i = 1$ if $i$-th element in $P$ belongs to $\bA$ and $\bX_i = 0$ otherwise. Using this notation, we have $\bT = \bX_i$ conditioned on $\bP = P$, and $\ba = a$ being the $i$-th element of $\bP$. Hence,
	\begin{align*}
	H_2(\delta') &\geq \Ex_{P \sim \bP} \Bracket{\sum_{i=1}^{2k} \frac{1}{2k} \cdot H(\bX_i \mid \bprotindex, \ba = a, \bP = P)} \\
		&= \Ex_{P \sim \bP} \Bracket{\sum_{i=1}^{2k} \frac{1}{2k} \cdot H(\bX_i \mid \bprotindex, \bP = P)} \tag{$\bX_i$ and $\bprotindex$ are independent of $\ba = a$ \emph{conditioned
		 on $\bP = P$}, and Claim~\ref{clm:it-facts}-(\ref{part:ent-event})}\\
		&\geq \frac{1}{2k} \cdot \Ex_{P \sim \bP} \Bracket{ H(\bA \mid \bprotindex,  \bP = P)} \tag{sub-additivity of the entropy, Claim~\ref{clm:it-facts}-(\ref{part:sub-additivity})} \\
		&= \frac{1}{2k} H(\bA \mid \bprotindex,\bP)
	\end{align*}
	Consequently, 
	\begin{align*}
		H(\bA \mid \bprotindex,\bP) \leq 2k \cdot H_2(\delta')
	\end{align*}
	Finally, since $\delta' < 1/2$ and hence $H_2(\delta') = (1-\eps)$ for some $\eps$ bounded away from $0$, we have, 
	\begin{align*}
		\ICost{\protindex}{\distI} &\geq 2k \cdot (1-H_2(\delta')) - \Theta(\log{k}) = \Omega(k)	
	\end{align*}
	
\end{proof}

\end{document}